\renewcommand{\ALG@beginalgorithmic}{\small}
\newcommand{\hlt}[1]{{\fontfamily{lmtt}\selectfont\bfseries \color{blue!50!black} #1}}
\theoremstyle{acmdefinition}
\newtheorem{condition}[theorem]{Condition}
\Crefname{condition}{Condition}{Conditions}
\newcommand{\Co}{\mathbb C}
\newcommand{\bbone}{\mathbb 1}
\newcommand{\Id}{\bbone}
\DeclarePairedDelimiter\parens{\lparen}{\rparen}
\DeclarePairedDelimiter\abs{\lvert}{\rvert}
\DeclarePairedDelimiter\braces{\lbrace}{\rbrace}
\DeclarePairedDelimiter\bracks{\lbrack}{\rbrack}
\newcommand{\calC}{\mathcal{C}}
\newcommand{\calF}{\mathcal{F}}
\newcommand{\calG}{\mathcal{G}}
\newcommand{\calH}{\mathcal{H}}
\newcommand{\calO}{\mathcal{O}}
\newcommand{\calP}{\mathcal{P}}
\newcommand{\calQ}{\mathcal{Q}}
\newcommand{\calV}{\mathcal{V}}
\begin{document}

\title{Quantum Register Machine: Efficient Implementation of Quantum Recursive Programs}

\author{Zhicheng Zhang}
\orcid{0000-0002-7436-0426}
\affiliation{%
  \institution{University of Technology Sydney}
  \city{Sydney}
  \country{Australia}
}
\email{zhicheng.zhang@student.uts.edu.au}

\author{Mingsheng Ying}
\orcid{0000-0003-4847-702X}
\affiliation{%
  \institution{University of Technology Sydney}
  \city{Sydney}
  \country{Australia}
}
\email{mingsheng.ying@uts.edu.au}


\begin{abstract}
    Quantum recursive programming has been recently introduced for describing sophisticated and complicated quantum algorithms in a compact and elegant way. However,  implementation of quantum recursion involves intricate interplay between quantum control flow and recursive procedure calls.
	In this paper, we aim at resolving this fundamental challenge and develop a series of techniques to efficiently implement quantum recursive programs. Our main contributions include:
	\begin{enumerate}
		\item 
			We propose a notion of \textit{quantum register machine}, 
			the first quantum architecture 
			(including an instruction set) that
			provides instruction-level support for quantum control flow and recursive procedure calls at the same time.
		\item
			Based on quantum register machine,
			we describe the first \textit{comprehensive implementation  process}  of quantum recursive programs,
		  including the compilation,
			the partial evaluation of quantum control flow, and the execution on the quantum register machine.
		\item
			As a bonus, our efficient implementation of quantum recursive programs 
            also offers \textit{automatic parallelisation} of quantum algorithms.
			For implementing certain quantum algorithmic subroutine, 
			like the widely used quantum multiplexor,
			we can even obtain exponential parallel speed-up (over the straightforward implementation) from this automatic parallelisation.
            This demonstrates that quantum recursive programming can be win-win for both modularity  of programs and efficiency of their implementation.
	\end{enumerate}
\end{abstract}

\begin{CCSXML}
<ccs2012>
   <concept>
       <concept_id>10003752.10003753.10003758</concept_id>
       <concept_desc>Theory of computation~Quantum computation theory</concept_desc>
       <concept_significance>500</concept_significance>
       </concept>
   <concept>
       <concept_id>10011007.10011006.10011041</concept_id>
       <concept_desc>Software and its engineering~Compilers</concept_desc>
       <concept_significance>300</concept_significance>
       </concept>
   <concept>
       <concept_id>10003752.10003753.10010622</concept_id>
       <concept_desc>Theory of computation~Abstract machines</concept_desc>
       <concept_significance>100</concept_significance>
       </concept>
   <concept>
       <concept_id>10010520.10010521.10010542.10010550</concept_id>
       <concept_desc>Computer systems organization~Quantum computing</concept_desc>
       <concept_significance>500</concept_significance>
       </concept>
 </ccs2012>
\end{CCSXML}

\ccsdesc[500]{Theory of computation~Quantum computation theory}
\ccsdesc[300]{Software and its engineering~Compilers}
\ccsdesc[100]{Theory of computation~Abstract machines}
\ccsdesc[500]{Computer systems organization~Quantum computing}

\keywords{quantum programming languages, recursive definition, quantum architectures, compilation, partial evaluation, automatic parallelisation}

\maketitle

\section{Introduction}
\label{sec:introduction}

Recursion in classical programming languages enables programmers to
conveniently describe complicated computations as compact programs.
By allowing any procedure to call itself,
a short static program text can generate (unbounded) long dynamic program execution~\cite{EWD249}.
Examples of recursion include Hoare's quicksort algorithm~\cite{Hoare61}, various recursive data structures~\cite{Hoare75}, and divide-and-conquer algorithms. The implementation of classical recursion has been well-studied and was an important feature of the celebrated programming language ALGOL 60~\cite{ALGOL60,ALGOL60b,Dij60,vdH15}.

In the context of quantum programming, recursion has been recently studied for similar reasons (e.g., \cite{Selinger04,Ying16,XYV21,DTPW24,YZ24}).
In particular, a language $\mathbf{RQC}^{++}$ was introduced in~\cite{Ying16,YZ24} for recursively programmed quantum circuits and quantum algorithms. The  expressive power of $\mathbf{RQC}^{++}$ has been demonstrated by various examples.

The aim of this paper is to study \textbf{how quantum recursive programs can be efficiently implemented}. We choose to consider quantum recursive programs~\footnote{As this terminology suggests, the recursion in such programs has a quantum nature.} 
described by the language $\mathbf{RQC}^{++}$~\cite{YZ24}. 
But we expect that the techniques developed in this paper can work for other quantum programming languages that support recursion. 

In general, quantum recursion involves the interplay of the following two programming features: 
\begin{itemize}
	\item 
		\textit{Quantum control flow}  
		(in particular, those defined by quantum if-statements~\cite{AG05,YYF12,SVV18,BBGV20,VLRH23,YVC24})
		that allow program executions to be in quantum superposition, controlled by some external quantum coin.
	\item
		\textit{Recursive procedure calls} that allow a procedure to call itself with different classical parameters.
\end{itemize}
A good implementation of quantum recursive programs  should support the above two features harmoniously.
A better implementation should further be efficient.

\subsection{Motivating Example: Quantum Multiplexor}
\label{sub:motivating_example}

\setlength{\epigraphwidth}{0.7\textwidth}
\epigraph{\ldots increase of efficiency always comes down to exploitation of structure \ldots}{\textsc{Edsger W. Dijkstra}~\cite{EWD117}}

To illustrate the basic idea of our implementation, let us  start with an algorithmic subroutine called quantum multiplexor~\cite{SBM05},
and see how quantum recursive programs can benefit its description and implementation.
Quantum multiplexor is used in a wide range of quantum algorithms, for example,
linear combination of unitaries (LCU)~\cite{CW12,BCCKS15,BCK15,Kothari14}, Hamiltonian simulation~\cite{BBKWLA16,BGBWMPFN18,BWMMNC18,LW19}, quantum state preparation~\cite{APPFd21,ZLY22,ZY24,LKS24},
and solving quantum linear system of equations~\cite{CKS17}.
Let $N=2^n$ and $[N]=\braces*{0,1,\ldots, N-1}$. A quantum multiplexor can be described by the unitary
\begin{equation}
	U=\sum\nolimits_{x\in [N]} \ket{x}\!\bra{x}\otimes U_x.\label{eq:qmux-uni}
\end{equation}
Here, every unitary $U_x$ is described by a quantum circuit,
or more generally, a quantum program, say $C_x$.
The quantum multiplexor $U$ applies $U_x$, 
conditioned on the state $\ket{x}$ of the first $n$ qubits.

A straightforward implementation of $U$ is
by applying a sequential products of $N$ controlled-$U_x$:
\begin{equation}
    \label{eq:straightforward-imp}
	\prod\nolimits_{x\in [N]}\Big(\ket{x}\!\bra{x}\otimes U_x +\sum\nolimits_{y\neq x} \ket{y}\!\bra{y}\otimes \Id\Big).
\end{equation}
This implementation has time complexity $O\parens*{\sum_{x\in [N]} T_x}$, where $T_x$ is the time for executing $C_x$ (i.e., implementing $U_x$). 
On the other hand, there exists a more efficient parallel implementation~\cite{ZLY22,ZY24} of $U$, 
with parallel time complexity $O\parens*{n+\max_{x\in [N]}T_x}$ (measured by the quantum circuit depth), using rather involved constructions similar to the bucket-brigade quantum random access memories~\cite{GLM08,GLM08b,HZZCSGJ19,HLGJ21}.
The implementation in~\cite{ZLY22,ZY24} achieves exponential parallel speed-up (with respect to $n$)
over the straightforward one.
The price for obtaining such efficiency is the manual design of rather low-level quantum circuits.

\begin{wrapfigure}{L}{0.46\textwidth}
	\centering
	\includegraphics[width=0.45\textwidth]{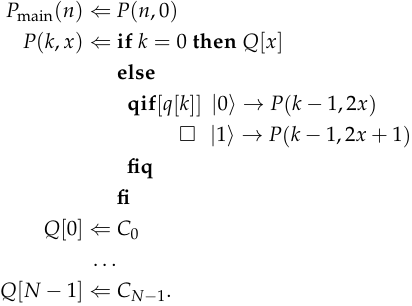}
	\caption{Quantum multiplexor as a quantum recursive program.}
	\label{fig:q-multiplexor}
\end{wrapfigure}

It is natural to ask \textit{if we can design at high-level and still obtain an efficient implementation}.
For this example of quantum multiplexor,
the intuition is as follows.
First, $U$ can be described by a high-level quantum recursive program $\calP$,
which encapsulates both the control structure in \Cref{eq:qmux-uni}
and all programs $C_x$ for describing unitaries $U_x$.
Then, by storing the program $\calP$ in a quantum memory,
we can design a \textit{quantum register machine} (to be formally defined in this paper) that automatically exploits the structure of $\calP$ and executes all $C_x$'s (i.e., implements all $U_x$'s) in quantum superposition,
thereby outperforming the straightforward implementation that only sequentially executes $C_x$'s.




Let us make the above intuition more concrete,
by describing $U$ in a quantum recursive program 
(in the language $\mathbf{RQC}^{++}$~\cite{YZ24}; 
see \Cref{sec:background})
as in \Cref{fig:q-multiplexor}.
Here, the main procedure $P_{\textup{main}}\parens*{n}$ describes $U$,
and every $Q[x]$ (or their procedure body $C_x$) describes $U_x$.
Procedure $\mathit{P}\parens*{k,x}$ 
recursively collects the control information $x$ using 
the quantum if-statement ($\mathbf{qif}$ statement)
and calls $Q[x]$ when $k=0$.
At this point, we only need to note that the program in \Cref{fig:q-multiplexor}
involves the interplay of the quantum control flow (managed by the $\mathbf{qif}$ statement)
and recursive procedure calls.
The $\mathbf{qif}$ statement in $\mathit{P}\parens*{k,x}$
creates two \textit{quantum branches} (in superposition):
when $q[k]$ is in state $\ket{0}$,
$\mathit{P}\parens*{k-1,2x}$ is called;
when $q[k]$ is in state $\ket{1}$,
$\mathit{P}\parens*{k-1, 2x+1}$ is called.

If the program in \Cref{fig:q-multiplexor} is compiled and 
stored into a quantum memory,
then a quantum register machine that supports quantum control flow and recursive procedure calls
can \textit{run through the two quantum branches
in superposition}.
The cost for executing the $\mathbf{qif}$ statement only depends on the quantum branch that takes longer running time.
This will incur
a final time complexity proportional to the maximum $\max_{x\in [N]} T_x$ (compared to the sum $\sum_{x\in [N]} T_x$ in the straightforward implementation)
and lead to an exponential parallel speed-up, similar to~\cite{ZLY22,ZY24}.

\subsection{Main Contributions}
\label{sub:contributions}


\subsubsection{Architecture: Quantum Register Machine}

We propose a notion of quantum register machine,
a quantum architecture that provides instruction-level support for quantum control flow and procedure calls at the same time.
Its storage components include a constant number of quantum registers (simply called  registers in the sequel) 
and a quantum random access memory (QRAM).
The QRAM stores both compiled quantum programs and quantum data.
The machine operates on registers like a classical CPU,
executing the compiled program by fetching instructions from the QRAM.
The machine is also accompanied with a set of low-level instructions,
each specifying operations to be carried out by the machine.
We briefly explain how the quantum register machine handles the aforementioned two features as follows:

\textbf{Handle quantum control flow}: 
Inspired by the previous work~\cite{YVC24} 
(which borrows ideas from the classical reversible architectures~\cite{Frank99,Vieri99,AGY07,TAG12}),
we put the program counter into a quantum register, which can be in quantum superposition.
However, existing techniques are insufficient to \textit{automatically} handle a challenge introduced by the quantum control flow, known as the \textit{synchronisation problem}~\cite{Deutsch85,Myers97,BV93,Ozawa98,LP98,Ozawa98b,Shi02,MO05,WY23,YVC24}.
Specifically, previous work~\cite{YVC24} circumvents this problem by manually inserting nop (no operation) into the low-level programs.
This approach changes the static program text, and is not extendable to handle quantum recursive programs, 
because the length of dynamic computation generated by quantum recursion cannot be pre-determined from the static program text (see \Cref{sub:further_synchronisation_problem} and \Cref{sec:related_work} for further discussion).

In contrast, to automatically handle the synchronisation problem (without changing the static program text), our solution is to use a partial evaluation of quantum control flow (to be explained soon) before execution,
and design a few corresponding quantum registers and mechanisms to exploit the partial evaluation result at runtime.

\textbf{Handle recursive procedure calls}: 
We allocate a call stack in the QRAM.
Stack operations are made reversible by borrowing techniques from the classical reversible computing (e.g., \cite{Axelsen11}).
Note that at runtime,
all quantum registers and the QRAM (where the dynamic call stack is stored)
can be in an entangled quantum state.

It is worth pointing out that the quantum register machine does not aim to model any existing quantum hardware (typically controlled by classical pulses to implement standard quantum circuits). Indeed, quantum register machine should better be thought of as an abstract machine (that does \textit{not} require hardware-level quantum control flow; see also \cite{YVC24}). Its execution is by repeatedly applying some fixed unitary operator per instruction cycle. Such unitary operator will be efficiently implemented by standard quantum circuits composed of one- and two-qubit gates.  

\subsubsection{Implementation: Compilation, Partial Evaluation and Execution}
\label{sub:implementation}

We propose a comprehensive process of implementing high-level quantum recursive programs 
(described in the language $\mathbf{RQC}^{++}$)
on the quantum register machine.
This includes the following three steps:
the first two are purely classical and the last is quantum.

\textbf{Step 1.\ Compilation} (\Cref{sec:compilation}):
The high-level program in $\mathbf{RQC}^{++}$ is compiled into a low-level one described by instructions,
together with a series of transformations.
The low-level instruction set is designed such that
the high-level program structure can be exploited for later execution.
This step only depends on the static program text and is independent of inputs.

\textbf{Step 2.\ Partial evaluation} (\Cref{sec:partial-evaluation}):
Given the classical inputs (typically specifying the size of quantum inputs),
the quantum control flow information of the compiled program is evaluated and stored into a data structure. In later execution,
it will be loaded into the QRAM to help address the aforementioned synchronisation problem.
This step is independent of quantum inputs.

\textbf{Step 3.\ Execution} (\Cref{sec:execution}):
With the compiled program and partial evaluation results loaded into the QRAM, 
the quantum inputs are finally considered,
and the compiled program is executed with the aid of the partial evaluation results.
The execution is done by repeatedly applying a fixed unitary (independent of the program) per cycle, 
which will be eventually implemented by standard quantum circuits with rigorously analysed complexity.

In \Cref{sec:computational_efficiency_and_algorithmic_speed_up}, we describe the theoretical complexity of \textbf{Step 2} and \textbf{3}.
More rigorous analysis can be found in \Cref{sub:complexity_partial_evaluation,sub:complexity_execution,sub:quantum_circuit_complexity_for_elementary_operations}.
The final parallel time complexity, measured by the standard asymptotic (classical and quantum) circuit depth, is $O\parens*{T_{\textup{exe}}\parens*{\calP}\cdot \parens*{T_{\textup{reg}}+T_{\textup{QRAM}}}}$.
Intuitively, $T_{\textup{exe}}\parens*{\calP}$ is the time for executing the longest quantum branch in program $\calP$; and $T_{\textup{reg}}$ and $T_{\textup{QRAM}}$ are complexities for elementary operations on registers and the QRAM, independent of the program.

\subsubsection{Bonus: Automatic Parallelisation}
\label{sub:bonus}

We show that quantum recursive programming can be \textit{win-win}
for both modularity of programs 
(demonstrated in~\cite{YZ24} via various examples)
and efficiency of their implementation (realised in this paper).
In particular, as a bonus,
the efficient implementation in \Cref{sub:implementation}
also offers \textit{automatic parallelisation}.
For implementing certain quantum algorithmic subroutine,
like the quantum multiplexor in \Cref{sub:motivating_example}, an exponential speed-up (over the straightforward implementation) can be obtained
from this automatic parallelisation,
in terms of (classical and quantum) parallel time complexity.
Here, the classical parallel time complexity is relevant
because the partial evaluation
will be performed by a classical parallel algorithm.

For implementing the quantum multiplexor, we obtain the following theorem from the automatic parallelisation,
whose proof sketch is to be shown in \Cref{sec:computational_efficiency_and_algorithmic_speed_up}.

\begin{theorem}[Automatic parallelisation of quantum multiplexor]
	\label{thm:parallel-qmux}
    Via the quantum register machine, 
	the quantum multiplexor in \Cref{eq:qmux-uni} with each $U_x$ consisting of $T_x$ elementary unitary gates
	can be implemented in (classical and quantum) parallel time complexity (i.e., circuit depth)
	$\widetilde{O}\parens*{n\cdot \max_{x\in [N]}T_x +n^2}$,
	where $\widetilde{O}\parens*{\cdot}$ hides logarithmic factors.
\end{theorem}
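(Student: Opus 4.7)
My plan is to realise the quantum multiplexor via the recursive program $\calP$ depicted in \Cref{fig:q-multiplexor} and then feed $\calP$ into the three-step implementation pipeline of \Cref{sub:implementation}. By the asymptotic bound $O\parens*{T_{\textup{exe}}\parens*{\calP}\cdot\parens*{T_{\textup{reg}}+T_{\textup{QRAM}}}}$ recalled at the end of \Cref{sub:implementation}, it suffices to bound separately (i) the length $T_{\textup{exe}}\parens*{\calP}$ of the longest quantum branch of $\calP$; (ii) the per-cycle register/QRAM costs $T_{\textup{reg}}+T_{\textup{QRAM}}$; and (iii) the classical parallel time of the partial-evaluation step.

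\textbf{Bounding $T_{\textup{exe}}\parens*{\calP}$.} Unrolling $\mathit{P}\parens*{k,x}$ shows that each call issues $O(1)$ bookkeeping instructions plus one $\mathbf{qif}$ on $q[k]$ that spawns two syntactically identical recursive subcalls $\mathit{P}\parens*{k-1,2x}$ and $\mathit{P}\parens*{k-1,2x+1}$ in quantum superposition, bottoming out after $n$ levels at an invocation of some $Q[x]$ of length $T_x$. The decisive point is that, on the register machine aided by Step 2, each $\mathbf{qif}$ costs only the longer of its two branches rather than their sum as in \Cref{eq:straightforward-imp}. Consequently every root-to-leaf quantum branch contributes $O(n)$ descent/ascent overhead plus one execution of $U_x$, yielding $T_{\textup{exe}}\parens*{\calP}=O\parens*{n+\max_{x\in [N]}T_x}$.

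\textbf{Per-cycle cost and classical partial evaluation.} The QRAM must hold the compiled code, the partial-evaluation table, and the call stack, whose combined size is polynomial in $N=2^n$ and $\max_x T_x$, so $\log(\text{QRAM size})=\widetilde{O}\parens*{n}$. Standard bucket-brigade addressing together with the constant number of register updates per cycle then gives $T_{\textup{reg}}+T_{\textup{QRAM}}=\widetilde{O}\parens*{n}$; the promised rigorous accounting is deferred to \Cref{sub:complexity_execution,sub:quantum_circuit_complexity_for_elementary_operations}. Multiplication produces the quantum circuit depth $\widetilde{O}\parens*{\parens*{n+\max_x T_x}\cdot n}=\widetilde{O}\parens*{n\max_x T_x+n^2}$. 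For the classical side (Step 2), the same binary recursion tree is traversed to precompute the branch-length metadata; a standard parallel-prefix-style scheduling performs this in classical depth $\widetilde{O}\parens*{n}$ per leaf, comfortably within the same asymptotic budget, as I would verify using the techniques of \Cref{sub:complexity_partial_evaluation}.

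\textbf{Main obstacle.} The crux, as suggested already in \Cref{sub:motivating_example} and \Cref{sub:further_synchronisation_problem}, is justifying that the $\mathbf{qif}$ really costs only the maximum of its two branches and that this max composes through the recursion to $O\parens*{n+\max_x T_x}$ rather than to $O\parens*{\sum_x T_x}$ or $O\parens*{n\cdot \max_x T_x}$. This is precisely what the partial-evaluation mechanism is designed to provide; the remaining work is to confirm that the branch-length metadata computed in Step 2 is consumed faithfully by the execution in Step 3, delivering the dynamic synchronisation promised by the quantum register machine. I expect that once this synchronisation claim is checked against \Cref{sec:partial-evaluation,sec:execution}, the rest of the proof is the essentially routine arithmetic sketched above.
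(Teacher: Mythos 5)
Your proposal is correct and follows essentially the same route as the paper's proof: decompose the cost as $O\parens*{T_{\textup{exe}}\parens*{\calP}\cdot\parens*{T_{\textup{reg}}+T_{\textup{QRAM}}}}$, bound $T_{\textup{exe}}\parens*{\calP}=O\parens*{n+\max_{x\in[N]}T_x}$ via the max-over-branches cost of each $\mathbf{qif}$ through $n$ levels of recursion, take the QRAM size to be $\Theta\parens*{\sum_x T_x}$ (plus the $\Theta(2^n)$ qif table) so that $T_{\textup{reg}}+T_{\textup{QRAM}}=\widetilde{O}(n)$ by the cited circuit constructions, and check that the classical partial evaluation stays within the same budget. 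The details you defer (the synchronisation guarantee from the qif table and the per-operation circuit depths) are exactly the ones the paper handles in its Appendices D--F, so nothing essential is missing.
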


Although the complexity in \Cref{thm:parallel-qmux} is slightly worse than that in \cite{ZLY22,ZY24} by a factor of $\widetilde{O}(n)$,
it is worth stressing that the parallelisation in \Cref{thm:parallel-qmux} is obtained automatically.
Our framework steps towards a \textit{top-down} design of (parallel) efficient quantum algorithms: the programmer only needs to design the high-level quantum programs (like in \Cref{fig:q-multiplexor}), and the parallelisation is automatically realised by our implementation based on the quantum register machine.
Further comparison of \Cref{thm:parallel-qmux} and~\cite{ZLY22,ZY24} can be found in \Cref{sub:further_qmux}.

\subsection{Structure of the Paper}

For convenience of the reader, in \Cref{sec:background} we briefly review the language $\mathbf{RQC}^{++}$~\cite{YZ24} for describing quantum recursive programs. 
In \Cref{sec:quantum_register_machine}, we introduce the notion of quantum register machine.
In \Cref{sec:compilation}, we present the compilation of programs in $\mathbf{RQC}^{++}$ to low-level instructions.
Then, in \Cref{sec:partial-evaluation}, we present the partial evaluation of quantum control flow on the compiled program.
In \Cref{sec:execution}, we present the execution on quantum register machine.
Finally, in \Cref{sec:computational_efficiency_and_algorithmic_speed_up},
we analyse the efficiency of implementing quantum recursive programs in our framework,
and show how it offers automatic parallelisation.
In \Cref{sec:related_work} we discuss related work, and in \Cref{sec:discussion} we conclude and discuss future topics. Further details and examples are presented in the appendices. 


\section{Background on Quantum Recursive Programs}
\label{sec:background}

In this section, we briefly introduce
the high-level language $\mathbf{RQC}^{++}$ for describing quantum recursive programs,
defined in~\cite{YZ24}.
A more detailed introduction can be found in \Cref{sec:the_high_level_language_QRPL}.
Two key features of $\mathbf{RQC}^{++}$,
compared to other existing quantum programming languages,
are quantum control flow and recursive procedure calls,
which together support the quantum recursion (different from classical recursion in quantum programs as considered in e.g., \cite{VLRH23,DTPW24} and classically bounded recursion in superposition as considered in e.g., \cite{YC22,YC24}).
An additional contribution of this paper is providing further insights into $\mathbf{RQC}^{++}$ from an implementation perspective.

\subsection{Syntax}

The alphabet of $\mathbf{RQC}^{++}$ consists of:
(a) Classical variables, often denoted by $x,x_1,x_2,\ldots$;
(b) Quantum variables, often denoted by $q,q_1,q_2,\ldots$;
(c) Procedure identifiers, often denoted by $P,Q,P_1,P_2,\ldots$; and
(d) Elementary unitary gates and elementary classical arithmetic operators.
A program in $\mathbf{RQC}^{++}$ describes a \textit{parameterised unitary} without measurements (see \Cref{sec:related_work} for discussion about the unitary restriction).
Classical variables are solely for specifying the control of programs.
For example, in \Cref{fig:q-multiplexor}, $k$ and $x$ define the formal parameters of $P(k,x)$, and are used in the if-statement and the actual parameters for procedure calls. Classical variables can also store intermediate computation results (see the syntax in \Cref{fig:syntax-QRPL}).
We use $\overline{x}=x_1\ldots x_n$ to denote a list of classical variables.
Similar notations apply to quantum variables and procedure identifiers.

Variables can be simple or array variables.
The notion of array is standard, e.g., if $x$ is a $1$-indexed one-dimensional classical array,
then $x[10]$ represents the $10^{\textup{th}}$ element in $x$.
Array variables induce subscripted variables: e.g.,
for a quantum array $q$, $q[2y+z]$ is an element in $q$
with subscription $2y+z$.
For simplicity, in this paper we only consider
one-dimensional arrays, and requires that for any classical subscripted variable $x[t]$,
the expression $t$ contains no more subscripted variables.

We also consider arrays of procedure and subscripted procedure identifiers,
for which notations are similar to that for variables.
Moreover, for any procedure identifier $P$, 
we associate with it a classical variable $P.\mathit{ent}$,
storing the entry address of the declaration of $P$.
The value of $P.\mathit{ent}$ is determined after the program is compiled and loaded into the quantum memory.

The syntax of $\mathbf{RQC}^{++}$ is summarised in \Cref{fig:syntax-QRPL}.
Here, a program is specified by $\calP$, a set of procedure declarations,
with a main procedure $P_{\textup{main}}$.
Each procedure declaration has the form $P(\overline{u})\Leftarrow C$,
where $P$ is the procedure identifier, $\overline{u}$ is a list of formal parameters (which can be empty),
and $C$ is the procedure body.
The recursion is supported by that $C$ can contains $P$ itself.
A statement $C$ is inductively defined,
where $U$ represents an elementary unitary gate and $b$ represents a classical binary expression.
We further explain as follows.
\begin{itemize}
	\item
		The procedure call $P\parens*{\overline{t}}$ has a list of classical expressions $\overline{t}$
		as its actual parameters.
	\item
		The block statement $\mathbf{begin}\ \mathbf{local}\ \overline{x}:=\overline{t};C\ \mathbf{end}$
		temporarily sets classical variables $\overline{x}$ to the values of $\overline{t}$ at the beginning of the block,
		and restores their old values at the end.
	\item
		The unitary gate $U\bracks*{\overline{q}}$ applies the elementary quantum gate $U$ on quantum variables $\overline{q}$.
	\item
		The quantum if-statement $\mathbf{qif}\bracks*{q}\parens*{\ket{0}\rightarrow C_0}\square\parens*{\ket{1}\rightarrow C_1}\mathbf{fiq}$
		executes $C_i$, conditioned on the qubit variable $q$ (a.k.a., quantum coin):
		when $q$ is in state $\ket{0}$, $C_0$ is executed;
		when $q$ is in state $\ket{1}$, $C_1$ is executed.
		Unlike the classical if-statement where the control flow only runs through one of the two branches,
		the quantum control flow run through both quantum branches created by the $\mathbf{qif}$ statement,
		in superposition.
		Note that the superposition state is held in the composite system including
		$q$ and the quantum variables in $C_0,C_1$.
\end{itemize}

\begin{figure}
	\centering
	\includegraphics[width=0.7\textwidth]{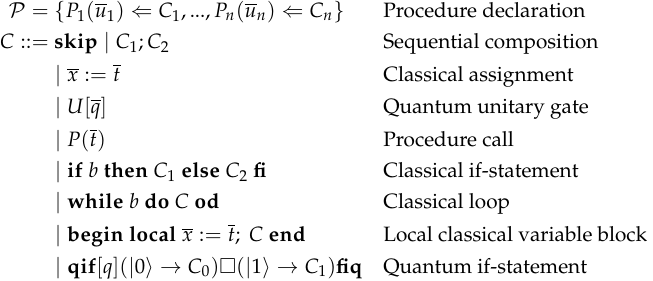}
	\caption{The syntax of quantum recursive programming language $\mathbf{RQC}^{++}$.}
	\label{fig:syntax-QRPL}
\end{figure}

\begin{figure}
	\centering
    \includegraphics[width=0.95\textwidth]{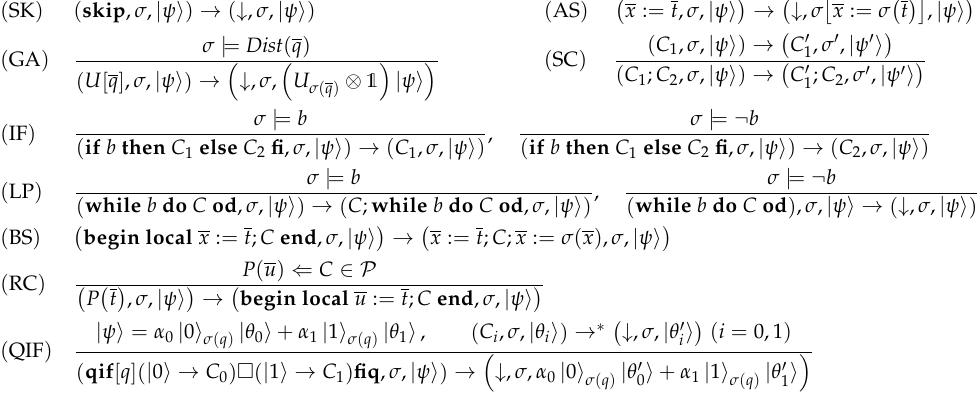}
	\caption{Transition rules for defining the operational semantics of $\mathbf{RQC}^{++}$.}
	\label{fig:semantics-QRPL}
\end{figure}

\subsection{Semantics}
\label{sub:semantics}

Now we briefly introduce the operational semantics of $\mathbf{RQC}^{++}$.
We use $\parens*{C, \sigma, \ket{\psi}}$ to denote a configuration,
where $C$ is the remaining statement to be executed or $C=\ \downarrow$ (standing for termination),
$\sigma$ is the current classical state, and $\ket{\psi}$ is the current quantum state.
The operational semantics is defined in terms of transitions between configurations of the form:
$\parens*{C,\sigma,\ket{\psi}}\rightarrow \parens*{C', \sigma',\ket{\psi'}}$.

The transition rules for defining the operational semantics of $\mathbf{RQC}^{++}$ are shown in \Cref{fig:semantics-QRPL}.
For simplicity of presentation, 
we only explain the most non-trivial (QIF) rule.
Other rules are rather standard and further explained in \Cref{sec:the_high_level_language_QRPL}.
In the (QIF) rule,
$i=0,1$ correspond to the two quantum branches,
controlled by the external quantum coin $q$.
Here, $\sigma(q)$ denotes the subsystem specified by $q$ with respect to classical state $\sigma$. 
As usual, $\rightarrow^k$ denotes the composition of $k$ copies of $\rightarrow$,
and $\rightarrow^*=\bigcup_{k=0}^{\infty}\rightarrow^k$.
The semantics of the $\mathbf{qif}$ statement is exactly a quantum multiplexor~\cite{SBM05} with one control qubit $q$:
if each $C_i$ describes a unitary $U_i$, then 
$\mathbf{qif}[q]\parens*{\ket{0}\rightarrow C_0}\square \parens*{\ket{1}\rightarrow C_1}\mathbf{fiq}$
describes the unitary $U_0\oplus U_1=\ket{0}\!\bra{0}_{\sigma(q)}\otimes U_0+\ket{1}\!\bra{1}_{\sigma(q)}\otimes U_1$.

Note that in the (QIF) rule, $\parens*{C_i,\sigma,\ket{\theta_i}}$ are required to terminate in the same classical state $\sigma$
for both branches ($i=0,1$) to \textit{prevent classical variables from being in superposition}.\footnote{For simplicity of later implementation, this requirement has been made slightly stricter than the original one (``terminating in the same $\sigma'$'', where $\sigma'$ may differ from the initial $\sigma$) described in \cite{YZ24}, but remains easy to meet in practice.}
This requirement seems inevitable to separate classical and quantum variables in the presence of quantum control flow.
As a result, only local classical variables can be arbitrarily modified in the $\mathbf{qif}$ statement. If one wishes to return different data from two quantum branches, then the data becomes intrinsically quantum and should therefore be stored in quantum variables.


\subsection{Conditions for Well-Defined Semantics}
\label{sub:conditions_for_well_defined_semantics}

We present three conditions for a program in $\mathbf{RQC}^{++}$ to have well-defined semantics, in particular,
for the (QIF) rule to be properly and easily applied.
The first condition guarantees that in every $\mathbf{qif}$ statement, 
$q$ is external to $C_0$ and $C_1$.
This is introduced for the $\mathbf{qif}$ statement to be physically meaningful.
We use $\mathit{qv}\parens*{C,\sigma}$ to denote the quantum variables in statement $C$ with respect to a given classical state $\sigma$. Its precise definition is given in \Cref{sub:details_semantics}.

\begin{condition}[External quantum coin]
	\label{cnd:qif-external-formal}
	For any procedure declaration $P(\overline{u})\Leftarrow C\in \calP$,
	and any $\mathbf{qif}[q]\parens*{\ket{0}\rightarrow C_0}\square\parens*{\ket{1}\rightarrow C_1}\mathbf{fiq}$
	appearing in $C$,
	and any classical state $\sigma$ (of concern),
	$q\notin \mathit{qv}\parens*{C_0,\sigma}\cup \mathit{qv}\parens*{C_1,\sigma}$.
\end{condition}

The second condition says that in every $\mathbf{qif}$ statement,
both $C_0$ and $C_1$ contain no free changed (classical) variables.
A classical variable is \textit{free} if it is not declared as local variable.
It is \textit{changed} if it appears on the LHS of an assignment.
We use $\mathit{fcv}\parens*{C,\sigma}$ to denote the free changed variables in $C$ with respect to $\sigma$. See \Cref{sub:details_semantics} for its precise definition.
This condition is introduced as the (QIF) rule requires $\parens*{C_i,\sigma,\ket{\psi}}$ to terminate
in the same classical state $\sigma$ for both branches $i=0,1$.

\begin{condition}[No free changed variables in $\mathbf{qif}$ statements]
	\label{cnd:no-fcv-qif}
	For any $P\parens*{\overline{u}}\Leftarrow C \in \calP$,
	any $\mathbf{qif}[q]\parens*{\ket{0}\rightarrow C_0}\square\parens*{\ket{1}\rightarrow C_1}\mathbf{fiq}$
	appearing in $C$,
	and any classical state $\sigma$ (of concern),
	$\mathit{fcv}\parens*{C_0,\sigma}=\mathit{fcv}\parens*{C_1,\sigma}=\emptyset$.
\end{condition}
%

The third condition says that every procedure body contains no free changed variables.
This condition is introduced to simplify the process of compilation,
as it allows the procedure calls to be arbitrarily used together with the $\mathbf{qif}$ statements
without violating \Cref{cnd:no-fcv-qif}.

\begin{condition}[No free changed variables in procedure bodies]
	\label{cnd:no-fcv-proc-body-formal}
	For any $P\parens*{\overline{u}}\Leftarrow C\in \calP$
	and any classical state $\sigma$ (of concern),
	$\mathit{fcv}\parens*{C,\sigma}=\emptyset$.
\end{condition}

\section{Quantum Register Machine}
\label{sec:quantum_register_machine}

Now we start to consider how to implement quantum recursive programs defined in the previous section. As the basis, let us    
introduce the notion of quantum register machine,
an architecture that provides instruction-level support for quantum control flow and recursive procedure calls at the same time.
Unlike most existing quantum architectures that use classical controllers to implement quantum circuits,
the quantum register machine stores quantum programs and data in 
a quantum random access memory (QRAM) and executes on quantum registers.
As aforementioned in \Cref{sec:introduction}, since existing quantum hardware is typically controlled by classical pulses, it would be better to think quantum register machine as an abstract machine (that does not require hardware-level quantum control flow).
Like a classical CPU, 
the machine works by repeatedly applying a fixed unitary $U_{\textup{cyc}}$ (independent of the program)
per instruction cycle, which consists of several stages, 
including fetching an instruction from the QRAM,
decoding it and executing it by performing corresponding operations.
To support quantum control flow, additional stages related to the partial evaluation are also needed.
The unitary $U_{\textup{cyc}}$ will be eventually implemented by standard quantum circuits, as described in \Cref{sec:execution} and visualised in \Cref{sub:details_uni_cyc_and_exe}.

In the following, we first explain quantum registers and QRAM, and 
then describe a low-level instruction set $\mathbf{QINS}$ (quantum instructions) for the quantum register machine.

\subsection{Quantum Registers}
\label{sub:quantum_registers}

The quantum register machine has a constant number of quantum registers (or simply, registers),
each storing a quantum word composed of $L_{\textup{word}}$ (called word length) qubits.
Registers are directly accessible.
The machine can perform a series of elementary operations on registers,
including word-level arithmetic operations (see also \Cref{sub:the_low_level_language_qins}),
each assumed to take time $T_{\textup{reg}}$.
The precise definition of elementary operations are presented in \Cref{sub:elementary_operations_on_registers}.

Registers are grouped into two types: system and user registers.
There are eight system registers. The first five 
are rather standard and borrowed from the classical reversible architectures~\cite{Vieri99,Frank99,AGY07,TAG12},
as quantum unitaries are intrinsically reversible.
We describe their classical effects as follows.
\begin{itemize}
	\item 
		Program counter  $\mathit{pc}$ records the address of the current instruction.
	\item 
		Instruction $\mathit{ins}$ records the current instruction.
	\item
		Branching offset $\mathit{br}$ records the offset of the address of the next instruction to go
		from $\mathit{pc}$.
		More specifically, if $\mathit{br}=0$, 
		then the address of the next instruction will be $\mathit{pc}+1$.
		Otherwise, the address of the next instruction will be $\mathit{pc}+\mathit{br}$.
	\item
		Return offset $\mathit{ro}$ records the offset for $\mathit{br}$ in the return of a procedure call.
	\item
		Stack pointer $\mathit{sp}$ records the current topmost location of the call stack.
\end{itemize}
In contrast, the last three system registers are novelly introduced
to support an efficient implementation of the $\mathbf{qif}$ statements.
They are related to the \textit{qif table}, a data structure
generated by the partial evaluation of quantum control flow
and used during execution to address the aforementioned synchronisation problem.
We briefly describe their classical effects as follows,
and will explain further details in \Cref{sec:partial-evaluation,sec:execution}.
\begin{itemize}
	\item
		Qif table pointer $\mathit{qifv}$ records the current node in the qif table.
	\item
		Qif wait counter $\mathit{qifw}$ records the number of instruction cycles to wait at the current node in the qif table.
	\item
		Qif wait flag $\mathit{wait}$ records whether the current instruction cycle needs to be skipped.
\end{itemize}

We also set the initial values of theses registers:
$\mathit{pc}$, $\mathit{sp}$ and $\mathit{qifv}$ are initialised to $\ket{j}$,
where $j$ is the starting addresses of the main program, the call stack and the qif table, respectively.
Other system and user registers are initialised to $\ket{0}$.

\subsection{Quantum Random Access Memory}
\label{sub:quantum_random_access_memory}

The quantum register machine has a quantum random access memory (QRAM)\footnote{
In particular, the QRAM considered here is quantum random access quantum memory (QRAQM). Readers are referred to~\cite{JR23} for a review of QRAM.}
composed of $N_{\textup{QRAM}}$ memory locations,
each storing a quantum word.
The QRAM is not directly accessible.
Like a classical memory, access to QRAM is by providing an address register
specifying the address,
and a target register to hold the information retrieved from the specified location.
Unlike the classical case, the address register can be in quantum superposition,
and registers can be entangled with the QRAM.
In this paper, we assume the following two types of elementary QRAM accesses.

\begin{definition}[Elementary QRAM accesses]
	\label{def:access_QRAM}
	\qquad
	\begin{itemize}
		\item 
			QRAM (swap) load. This access performs the unitary $U_{\textup{ld}}(r,a,\mathit{mem})$
			defined by the mapping:
			\begin{equation}
				\ket{x}_{r}\ket{i}_{a}\ket{M}_{\mathit{mem}}\mapsto \ket{M_i}_{r}\ket{i}_{a}\ket{M_0,\ldots, M_{i-1},x,M_{i+1},\ldots,M_{N_{\textup{QRAM}}-1}}_{\mathit{mem}},\label{eq:QRAM-load}
			\end{equation}
			for all $x$, $i$ and $M=(M_0, \ldots, M_{N_{\textup{QRAM}}-1})$.
			Here, $r$ is the target register,
			$a$ is the address register,
			and $\mathit{mem}$ is the QRAM.
		\item
			QRAM (xor) fetch. This access performs the unitary $U_{\textup{fet}}(r,a,\mathit{mem})$
			defined by the mapping:
			\begin{equation}
				\ket{x}_{r}\ket{i}_{a}\ket{M}_{\mathit{mem}}\mapsto \ket{x\oplus M_i}_{r}\ket{i}_{a}\ket{M}_{\mathit{mem}},
				\label{eq:QRAM-fetch}
			\end{equation}
			for all $x,i,M$.
	\end{itemize}
	Moreover, the controlled versions (controlled by a register) of elementary QRAM accesses are also considered elementary, since the number of registers is constant and the control only incurs a constant overhead. Suppose every elementary QRAM access takes time $T_{\textup{QRAM}}$.
\end{definition}

Some readers might notice that the physical realisation of QRAM is not yet near-term, a challenge shared by most works leveraging QRAM (e.g., \cite{Amb07,BJLM13,ACL+20,YC22}). Nevertheless, there are ongoing efforts towards feasible QRAM implementations (e.g., \cite{HZZCSGJ19,HLGJ21,XLD25}). Importantly, the final complexity of our implementation of quantum recursive programs is measured by the standard circuit depth and unaffected by the near-term feasibility of QRAM.

It is also worth pointing out that managing entanglement between registers and the QRAM is crucial, as improper handling can result in incorrect output states~\cite{LWS+23}. 
To this end, the instruction set $\mathbf{QINS}$ (\Cref{sub:the_low_level_language_qins}) and the compilation process (\Cref{sec:compilation}) are carefully designed to ensure that, after the execution, quantum variables are \textit{disentangled} from other registers and the remaining part of the QRAM. A key of the design is \textit{proper uncomputation} of intermediate results. The idea traces back to~\cite{Landauer61,Bennett73}, and has been applied in~\cite{Frank99,Vieri99,AGY07,TAG12,YVC24}.
Moreover, in our design, the creation and removal of entanglement during the execution align with the program structure (in $\mathbf{RQC}^{++}$).

\subsubsection{Layout of the QRAM}
\label{sub:QRAM-layout}

The QRAM in the quantum register machine stores both programs and data.
In particular, it contains the following sections.
\begin{enumerate}
	\item 
		Program section stores the compiled program in a  low-level language $\mathbf{QINS}$ (to be defined in \Cref{sub:the_low_level_language_qins}).
	\item
		Symbol table section stores the name of every variable and its corresponding address.
		Here, unlike in the classical case, the symbol table is used at runtime instead of compile time (see also \Cref{sub:symbol_table_and_memory_allocation_of_variables}),
		because arrays in $\mathbf{RQC}^{++}$ are not declared with fixed size.
	\item
		Variable section stores the classical and quantum variables.
	\item
		Qif table section stores the qif table (to be defined in \Cref{sub:qif_table}).
	\item
		Stack section stores the call stack to handle the procedure calls.
		The stack is composed of multiple stack frames,
		each storing the actual parameters and return offset (from the caller to the callee),
		and the local data used by the callee, in a procedure call.
\end{enumerate}

We visualise the quantum register machine and the layout of the QRAM in~\Cref{fig:QRM}.

\begin{figure}
	\centering
	\includegraphics[width=0.8\textwidth]{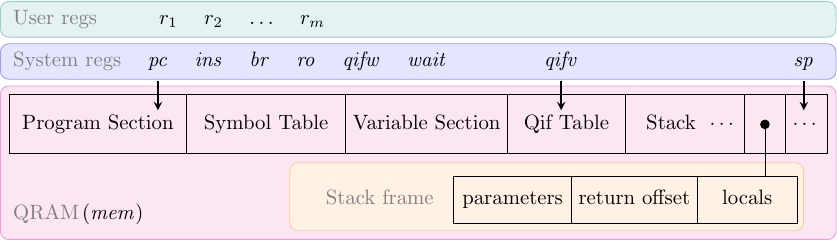}
	\caption{Storage components of the quantum register machine and the layout of the QRAM.
	All components can be together in a quantum superposition state.
	}
	\label{fig:QRM}
\end{figure}

\subsection{The Low-Level Language QINS}
\label{sub:the_low_level_language_qins}

Now we present $\mathbf{QINS}$,
an instruction set for describing the compiled programs.
Each instruction specifies a series of elementary operations to be carried out 
by the quantum register machine.
There are 22 instructions in $\mathbf{QINS}$,
which are listed with their classical effects in \Cref{fig:table-ins}.
Here, we leave the explanation of instructions \hlt{qif} and \hlt{fiq}
to \Cref{sec:execution}.
The classical effects of other instructions are lifted to quantum 
in the standard way
when being executed by the quantum register machine.

\begin{figure}
	\centering
	\begin{subfigure}[t]{0.45\textwidth}
		\centering
		\includegraphics[width=\textwidth]{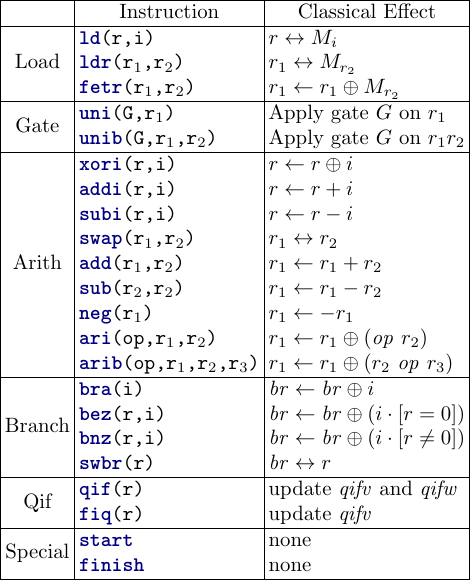}
		\caption{Instructions and corresponding classical effects.
                Here, $\oplus$ denotes the XOR operator; 
                $[b]=1$ if $b$ is true and $[b]=0$ otherwise.}
		\label{fig:table-ins}
	\end{subfigure}
	~
	\begin{subfigure}[t]{0.33\textwidth}
		\centering
		\includegraphics[width=\textwidth]{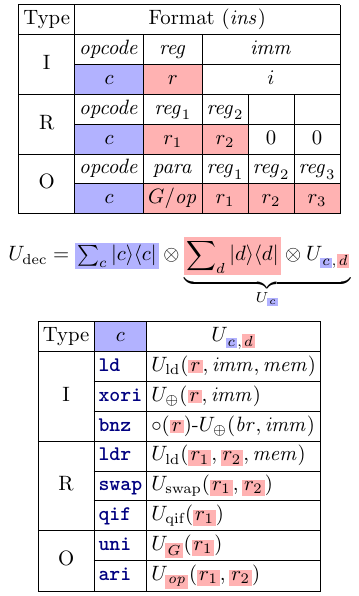}
		\caption{Instruction formats, the decoding unitary $U_{\textup{dec}}$, and selected examples of instruction implementations.}
		\label{fig:table-type-eg}
	\end{subfigure}
	\caption{The low-level language $\mathbf{QINS}$ and selected examples.}
	\label{fig:QINS}
\end{figure}

The design of $\mathbf{QINS}$ is inspired by 
the existing classical reversible instruction sets~\cite{Frank99,Vieri99,AGY07,TAG12}.
Nevertheless, several  instructions in $\mathbf{QINS}$ are essentially new.
The most important are instructions \hlt{qif} and \hlt{fiq},
which are designed for a \textit{structured management} of quantum control flow 
(generated by the $\mathbf{qif}$ statements in $\mathbf{RQC}^{++}$),
in particular, aiding the partial evaluation and execution.
Instructions \hlt{uni} and \hlt{unib} are designed for quantum unitary gates.

We group the instructions into three types:
I (immediate-type), R (register-type) and O (other-type),
according to their formats, as shown in \Cref{fig:table-type-eg}.
During the execution (to be described in \Cref{sec:execution}),
we decode the instruction in register $\mathit{ins}$ by performing a unitary $U_{\textup{dec}}$ (see \Cref{fig:table-type-eg}).
$U_{\textup{dec}}$ is a quantum multiplexor,
with section $\mathit{opcode}$ as its first part of control,
and other sections (depending on the type I/R/O) in $\mathit{ins}$ as its second part of control.
Let $c$ be a computational basis in the first part
and $d$ in the second part,
then the unitary being controlled is denoted by $U_{c,d}$.

For illustration, selected instructions and corresponding $U_{c,d}$ are presented in \Cref{fig:table-type-eg}.
Here, $U_{\textup{ld}}$ is the QRAM access in \Cref{def:access_QRAM}.
$U_{\textup{qif}}$ (and similarly $U_{\textup{fiq}}$ for \hlt{fiq}) will be defined in \Cref{sec:execution}.
Other unitaries are elementary operations on registers:
(a) $U_{\oplus}$ performs the mapping $\ket{x}\ket{y}\mapsto\ket{x\oplus y}\ket{y}$;
(b) $\circ\parens*{r}$-$U$ stands for the controlled version $\ket{0}\!\bra{0}_r\otimes U+\sum_{x\neq 0}\ket{x}\!\bra{x}_r\otimes \Id$ of unitary $U$;
(c) $U_{\textup{swap}}$ performs the mapping $\ket{x}\ket{y}\mapsto \ket{y}\ket{x}$;
(d) $U_G$ applies the elementary gate $G$ (chosen from a fixed set $\calG$ of size $O(1)$);
(e) $U_{\mathit{op}}$ performs the mapping $\ket{x}\ket{y}\mapsto \ket{x\oplus \parens*{\mathit{op}\ y}}\ket{y}$ for unary operator $\mathit{op}$ (chosen from a fixed set $\calO\calP$ of size $O(1)$).

Further details of $\mathbf{QINS}$ are provided in \Cref{sub:details_qins}.

\section{Compilation}
\label{sec:compilation}

\begin{wrapfigure}{R}{0.5\textwidth}
	\centering
	\includegraphics[width=0.5\textwidth]{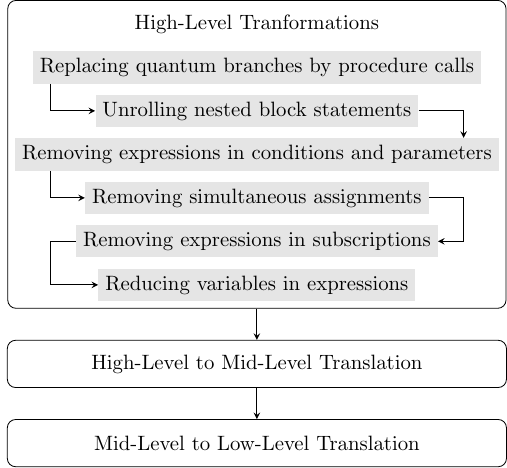}
	\caption{The compilation process.}
	\label{fig:compdiagram}
\end{wrapfigure}

As usual, the first step in the implementation of quantum recursive programs is their compilation. The compilation of a program $\calP$ in $\mathbf{RQC}^{++}$
consists of the following passes.
\begin{enumerate}
	\item 
		First, a series of high-level transformations are performed on the original $\calP$ to obtain $\calP_h$,
		which simplify the program structure and make it easier to be further compiled. 
	\item
		Then, the transformed program $\calP_h$ is translated into an intermediate program $\calP_{m}$ in the mid-level language
		composed of instructions similar to those in $\mathbf{QINS}$ but more flexible.
	\item
		Finally, the mid-level $\calP_m$ is translated into a program $\calP_l$ in the low-level language $\mathbf{QINS}$.
\end{enumerate}

The compilation process is visualised in \Cref{fig:compdiagram}.
The remainder of this section is devoted to describe these passes carefully.  
In the sequel, 
we always assume the source program $\calP$ to be compiled 
satisfies \Cref{cnd:qif-external-formal,cnd:no-fcv-qif,cnd:no-fcv-proc-body-formal} in \Cref{sub:conditions_for_well_defined_semantics}
and \textit{has well-defined semantics}.
We will not bother checking the syntax and semantics of $\calP$.

\subsection{High-Level Transformations}
\label{sub:high-level-transformations}
In this section, we describe the first pass of high-level transformation from $\calP$ to $\calP_h$. 
The major target of this pass is to simplify the 
automatic uncomputation of classical variables in later passes.

\begin{wrapfigure}{R}{0.23\textwidth}
	\centering
	\includegraphics[width=0.22\textwidth]{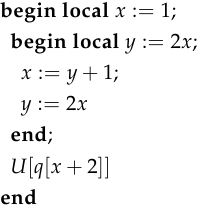}
	\caption{An example of nested block statement.}
	\label{fig:eg-nested-block}
\end{wrapfigure}

A program $\calP$ in $\mathbf{RQC}^{++}$ may contain irreversible classical statements,
e.g., assignment $x:=1$.
Reversibly implementing these statements
introduces garbage data,
e.g., through the standard Landauer~\cite{Landauer61} and Bennett~\cite{Bennett73} methods.
For the overall correctness of the quantum computation,
these garbage data should be properly uncomputed. 
Moreover, the block statement in $\mathbf{RQC}^{++}$ explicitly requires uncomputation of local variables
at the end of the block.

In the execution of a program, when should we perform uncomputation?
First, we realise a difficulty from the uncomputation of local variables
in nested block statements.
Consider the example in \Cref{fig:eg-nested-block}.
The inner block modifies $x$,
which is used by the outer block.
If one tries to uncompute the local variable $y$ at the end of the inner block,
the change on $x$ (by the inner block) is also uncomputed,
which is an undesirable side effect.

To overcome this difficulty,
we will perform a series of transformations on the source program $\calP$,
such that the transformed $\calP_h$ no longer contains nested block statements.
Along the way, we also simplify the structure of the program.
Consequently, for $\calP_h$,
we only need to perform uncomputation at the end of every procedure body (of procedure declarations),
which will be automatically done in the high-level to mid-level translation (in \Cref{sub:high_level_to_mid_level_translation}).

An overview of high-level transformations is already shown in \Cref{fig:compdiagram}.
In the following we only select the first two steps for explanation,
while other steps are rather standard (see e.g., the textbook~\cite{AMSJ07})
and presented in \Cref{sub:details_high_level_transformations}.

\subsubsection{Replacing Quantum Branches by Procedure Calls}
\label{sub:replace_qif_proc}

In this step, we replace the program in every quantum branch of every $\mathbf{qif}$ statement
by a procedure call.
More specifically, for every $\mathbf{qif}$ statement, if $C_0,C_1$ are not procedure identifiers or $\mathbf{skip}$ statements, 
then we introduce fresh procedure identifiers $P_0,P_1$, perform the replacement:
\begin{equation*}
	\mathbf{qif}[q](\ket{0}\rightarrow C_0)\square (\ket{1}\rightarrow C_1)\mathbf{fiq}
	\qquad\Rightarrow\qquad
	\mathbf{qif}[q](\ket{0}\rightarrow P_0)\square (\ket{1}\rightarrow P_1)\mathbf{fiq},
\end{equation*} 
and add new procedure declarations $P_i\Leftarrow C_i$ (for $i\in \braces*{0,1}$) to $\calP$.
If only one of $C_0,C_1$ is procedure identifier or $\mathbf{skip}$,
then the replacement is performed only for the other branch.
It is easy to see the above transformation does not violate  \Cref{cnd:qif-external-formal,cnd:no-fcv-qif,cnd:no-fcv-proc-body-formal}.

\subsubsection{Unrolling Nested Block Statements}
\label{sub:unroll_nested_blocks}

In this step, we unroll all nested block statements.
The program after this step is promised to no more contain block statements,
but uncomputation of classical variables needs to be done at the end of every procedure body
when the program is implemented,
for it to preserve its original semantics.
To do this, 
for any $P\parens*{\overline{u}}\Leftarrow C'\in \calP$
and every block statement appearing in $C'$,
we perform the replacement:
\begin{equation*}
	\mathbf{begin}\ \mathbf{local}\ \overline{x}:=\overline{t};\ C\ \mathbf{end}
	\qquad
	\Rightarrow
	\qquad
	\overline{x'}:=\overline{t};
	C\bracks*{x'/x},
\end{equation*}
where $\overline{x'}$ is a list of fresh variables, and $\bracks*{x'/x}$ stands for replacing variable $x$ by $x'$.
Also, we append $\overline{x'}:=\overline{0}$ 
at the beginning of $C'$.
The above transformation keeps \Cref{cnd:qif-external-formal,cnd:no-fcv-qif,cnd:no-fcv-proc-body-formal} too.

Note that after this step,
the program is technically in some new language with the same syntax as $\mathbf{RQC}^{++}$,
but whose semantics requires the uncomputation of classical variables at the end of every procedure body.

\subsubsection{After the High-Level Transformations}

We observe that the program $\calP_h=\braces*{P\parens*{\overline{u}}\Leftarrow C}_P$ after the high-level transformations in \Cref{fig:compdiagram}
has the following simplified syntax:
\begin{equation*}
	\begin{split}
		C::= &\ \mathbf{skip} \mid x:=t\mid U\bracks*{\overline{q}}\mid C_0; C_1\mid P(\overline{x})\mid \mathbf{if}\ x\ \mathbf{then}\ C_0\ \mathbf{else}\ C_1\ \mathbf{fi}\mid \mathbf{while}\ x\ \mathbf{do}\ C\ \mathbf{od}\\
				 &\mid \mathbf{qif}\bracks*{q}(\ket{0}\rightarrow P_0)\square (\ket{1}\rightarrow P_1)\mathbf{fiq},
	\end{split}
\end{equation*}
where every subscripted variable and procedure identifier has a basic classical variable as its subscription (e.g., $x[y]$),
and every expression has the form $t\equiv \mathit{op}\ x$ or $t\equiv x\ \mathit{op}\ y$.
As aforementioned, the semantics is slightly changed:
uncomputation of classical variables are needed at the end of every procedure body in the implementation,
(which will be automatically done in the high-level to mid-level translation in \Cref{sub:high_level_to_mid_level_translation}).

For illustration, on the LHS of \Cref{fig:q-multiplexor-after},
we show an after-the-high-level-transformations version  of the quantum multiplexor program in \Cref{fig:q-multiplexor}.

\begin{figure}
	\centering
	\includegraphics[width=\textwidth]{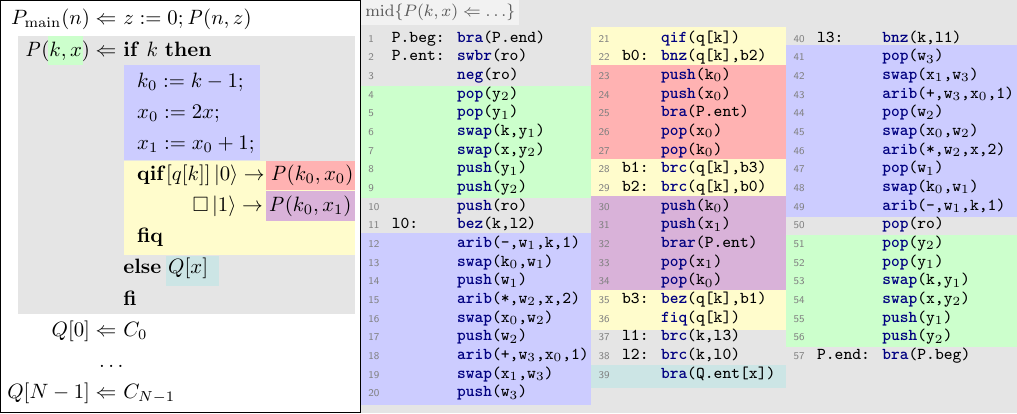}
	\caption{Example of high-level transformations and high-to-mid-level translation. 
		The original program is the quantum multiplexor program in \Cref{fig:q-multiplexor}.
		Here, on the LHS is the program after the high-level transformations. 
	On the RHS is the high-to-mid-level translation of the procedure $P(k,x)$.
	Their connections are highlighted in colors.
    Note that new variables $x_0,x_1,k_0$ are introduced by the rather standard third step of high-level transformations (see also \Cref{sub:details_high_level_transformations}), and
some transformations have no effect on this example.}
	\label{fig:q-multiplexor-after}
\end{figure}


\subsection{High-Level to Mid-Level Translation}
\label{sub:high_level_to_mid_level_translation}

Now we translate the transformed high-level program $\calP_h$ obtained in the previous subsection into $\calP_m$ in a mid-level language,
which is different from the low-level language $\mathbf{QINS}$ (defined in \Cref{sub:the_low_level_language_qins}) in the following aspects:
\begin{itemize}
	\item
		We do not consider the memory allocation.
		Thus, instructions \hlt{ld}, \hlt{ldr} and \hlt{fetr} are not needed at this stage.
	\item
		Beyond registers and numbers,
		instructions can also take variables and labels as input.
		Here, like in the classical assembly language, a label is an identifier for the address of an instruction.
		(When the program is further translated into $\mathbf{QINS}$, in the next section,
		every label $l$ will be replaced by the offset of the address of where $l$ is defined
		from the address of where $l$ is used.)

	\item 
		We have additional instructions \hlt{push} and \hlt{pop}
		for stack operations.
        Also, an additional branching instruction \hlt{brc} will be used 
        in pair with \hlt{bez} (or \hlt{bnz}).
        In particular, \hlt{brc}\texttt{(x,l)}, compared to \hlt{bra}\texttt{(l)},
        has the additional information of some variable $x$.
\end{itemize}
The high-to-mid-level translation also automatically handles the initialisation of formal parameters
and the uncomputation of classical variables at the end of procedure bodies (see \Cref{sub:high-level-transformations}). 

Let us use $\mathrm{mid}\braces*{D}$ to denote the high-to-mid-level translation 
of a statement (or declaration) $D$ in $\mathbf{RQC}^{++}$.
In \Cref{fig:HightoMid}, we present selected examples of the high-to-mid-level translation,
and more details are shown in \Cref{sub:further_details_of_high_level_to_mid_level_translation}.
Here, $\mathrm{init}\braces*{\cdot}$ and $\mathrm{uncp}\braces*{\cdot}$ denote the initialisation of formal parameters 
and uncomputation of classical variables, respectively.
We further explain them as follows.
\begin{itemize}
	\item
		To reversibly implement $\mathbf{while}\ x\ \mathbf{do}\ C\ \mathbf{od}$,
		a fresh variable $y$ is introduced to count the number of loops.
		Similar to the classical reversible architectures~\cite{Vieri99,Frank99,AGY07,TAG12}, we use a pair of branching instructions
		(e.g., \hlt{bnz} and \hlt{brc}) to realise reversible (conditional) branching.
	\item
		In the translation of $\mathbf{qif}[q]\parens*{\ket{0}\rightarrow C_0}\square \parens*{\ket{1}\rightarrow C_1}\mathbf{fiq}$,
		we have a pair of instructions \hlt{qif}\texttt{(q)} and \hlt{fiq}\texttt{(q)},
		which indicate the creation and join of quantum branching, respectively.
		They will be used in the partial evaluation of quantum control flow and the final execution.
	\item
		The translations of procedure call $P\parens*{\overline{x}}$ and declaration $P\parens*{\overline{u}}\Leftarrow C$
		are inspired by their counterparts in classical reversible computing~\cite{Axelsen11}.
		Here, the biggest difference is our design of the \textit{automatic uncomputation} of classical variables,
		performed by the program $\mathrm{uncp}\braces*{C}$
		at the end of procedure body $C$ (see also \Cref{sub:unroll_nested_blocks}),
		which reverses the changes on classical variables in $C$.
		The uncomputation $\mathrm{uncp}\braces*{\cdot}$ is also recursively defined,
		where $\mathrm{uncp}\braces*{P\parens*{\overline{x}}}$ and $\mathrm{uncp}\braces*{\mathbf{qif}\ldots\mathbf{fiq}}$
		are set to empty, due to \Cref{cnd:no-fcv-qif,cnd:no-fcv-proc-body-formal}.
\end{itemize}

\begin{figure}
	\centering
	\includegraphics[width=\textwidth]{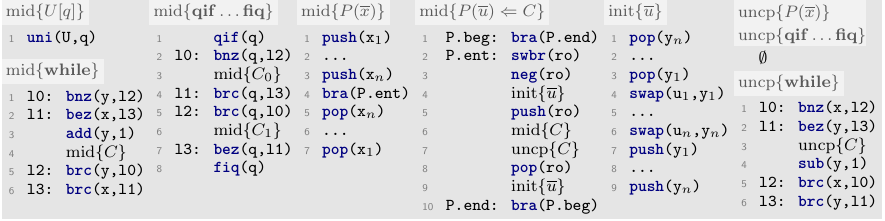}
	\caption{Selected examples of the high-to-mid-level translation.
	Here, $\mathrm{init}\braces*{\cdot}$ and $\mathrm{uncp}\braces*{\cdot}$
	stand for the initialisation of formal parameters and uncomputation of classical variables,
    respectively.
	$y,y_1,\ldots, y_n$ are all fresh variables.
    Also, $\mathrm{mid}\braces*{\mathbf{while}}$ and $\mathrm{uncp}\braces*{\mathbf{while}}$ use the same fresh variable $y$.}
	\label{fig:HightoMid}
\end{figure}

For illustration, on the RHS of \Cref{fig:q-multiplexor-after},
we present an example of the high-to-mid-level translation
of the quantum multiplexor program.
For simplicity, we only show the translation of the recursive procedure $P(k,x)$.
The full translation can be found in \Cref{sub:further_details_of_high_level_to_mid_level_translation}. 

%
%
%

\subsection{Mid-Level to Low-Level Translation}
\label{sub:mid_level_to_low_level_translation}

Now we are ready to describe the last pass in which the mid-level program $\calP_m$ obtained in the previous sections is translated 
into a program $\calP_l$ in the low-level language $\mathbf{QINS}$ and thus executable on the quantum register machine. In this pass, 
instructions that take variables and labels as inputs will be translated 
to instructions that only take registers and immediate numbers as inputs.
The additional instructions \hlt{push}, \hlt{pop} and \hlt{brc} also need to be translated.
To do this, we need the load instructions \hlt{ld}, \hlt{ldr} and \hlt{fetr}.
Let us use $\mathrm{low}\braces*{i}$ to denote the mid-to-low-level translation 
of an instruction $i$.
In \Cref{fig:MidtoLow}, we present selected examples of the mid-to-low-level translation,
and more details are shown in \Cref{sub:further_details_mid_low_translation}.
We further explain them as follows.
\begin{itemize}
	\item 
		The translation of \hlt{uni}\texttt{(U,q[x])}
		shows how to handle inputs containing subscripted quantum variables.
		We use $@ x$ to denote the address of the name $x$
		(in the symbol table section of the QRAM; see \Cref{sub:QRAM-layout}).
		The word at $@ x$ stores the address $\& x$ of the variable $x$
		(in the variable section).
		Lines 1--2 load the value of $x$
		into free register $r_2$.
		To obtain the address of $q[x]$,
		we add the address $\& q=\& \parens*{q[0]}$ and the value of $x$,
		in Lines 3--4.
		Line 5 loads the value of $q[x]$ into free register $r_4$,
		on which the instruction \hlt{uni}\texttt{(U,r$_4$)} is executed.
		Lines 7--11 reverse the effects of Lines 1-5.
        Further details of the symbol table and memory allocation of variables can be found in \Cref{sub:symbol_table_and_memory_allocation_of_variables}.
	\item
		In the translation of \hlt{bra}\texttt{(P.ent)},
		recall that the classical variable $P.\mathit{ent}$
		corresponds to some procedure identifier $P$.
        Here, Lines 1--3 loads the value of $P.\mathit{ent}$ into free register $r_2$.
        Note that Line 2 uses \hlt{fetr} instead of \hlt{ldr} to preserve the copy of $P.\mathit{ent}$ in the QRAM for recursive procedure calls.
        Lines 4--5 calculate in $r_2$ the offset of $P.\mathit{ent}$ from the address of Line 6.
        When the branching occurs after Line 6, note that registers $r_1$ and $r_2$ are cleared.
        Lines 7--12 are similar. 
	\item
		The translations of \hlt{push} and \hlt{pop} are rather simple.
		Note that they are reversible, e.g., if an element is pushed into the stack,
		the original register $r$ will be cleared.
    \item 
        The translations of \hlt{bez} and \hlt{brc} are related
        when they are used in pairs: they use the same free registers $r_1$ and $r_2$. 
\end{itemize}

\begin{figure}
	\centering
	\includegraphics[width=0.68\textwidth]{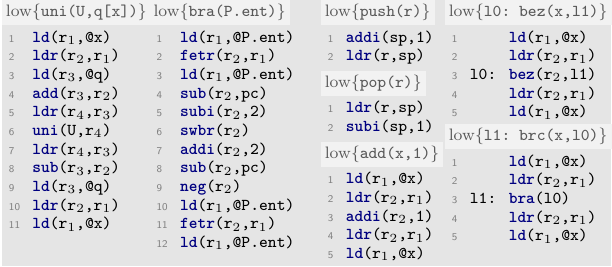}
	\caption{Selected examples of the mid-to-low-level translation.
	Here, all registers $r_i$ are free registers.}
	\label{fig:MidtoLow}
\end{figure}

To end the compilation,
we need to replace every label $l$ in the compiled program
by the offset of the address of where $l$ is defined from
where $l$ is used.
The compiled program is not yet loaded into the QRAM,
but stored classically for later partial evaluation in \Cref{sec:partial-evaluation}.
We present the example of mid-to-low-level translation of the quantum multiplexor program in \Cref{sub:further_details_mid_low_translation}.

\section{Partial Evaluation of Quantum Control Flow}
\label{sec:partial-evaluation}

At the end of the last section, 
a compiled program
in the low-level language $\mathbf{QINS}$ is obtained.
For its execution on the quantum register machine, 
we need to first perform a partial evaluation of quantum control flow to generate a data structure called qif table, 
to be loaded into the QRAM. In this section, we carefully describe this partial evaluation.

\subsection{The Synchronisation Problem}
\label{sub:the_synchronisation_problem}

Programs with only classical control flow
can be straightforwardly executed without partial evaluation.
However, for programs with quantum control flow,
there is an obstruction known as the synchronisation problem~\cite{Myers97,BV93,Ozawa98,LP98,Ozawa98b,Shi02,MO05,WY23,YVC24} 
(see further discussion in \Cref{sub:further_synchronisation_problem}).
In our case, it means in executing
the statement
$\mathbf{qif}\bracks*{q}\parens*{\ket{0}\rightarrow C_0}\square \parens*{\ket{1}\rightarrow C_1}\mathbf{fiq}$,
$C_0$ and $C_1$ can take different numbers of instruction cycles to terminate.
Consequently, the arrival times of two control flows (corresponding to two branches, in superposition) at the $\mathbf{fiq}$ are asynchronous,
and hence they cannot be correctly merged into one control flow, in the same cycle.
Another way to view the synchronisation problem 
is from the (QIF) rule in \Cref{fig:semantics-QRPL}.
The problem occurs when
$\parens*{C_0,\sigma,\ket{\theta_0}}\rightarrow^{k_0} \parens*{\downarrow, \sigma',\ket{\theta_0'}}$
and $\parens*{C_1,\sigma,\ket{\theta_1}}\rightarrow^{k_1} \parens*{\downarrow,\sigma',\ket{\theta_1'}}$
for some $k_0\neq k_1$.

The synchronisation problem becomes more complicated for
general quantum recursive programs.
Note that $C_0$ and $C_1$ can further contain quantum recursion,
and the number of nested procedure calls involved cannot be determined before hand.
The program might not even terminate.
How to deal with the probably unbounded quantum recursion?

Our solution is by partial evaluation of the quantum control flow.
When the classical inputs are given (while the quantum inputs remained unknown),
we can check whether the  compiled program $\calP$
terminates in some practical (manually set) running time $T_{\textup{prac}}\parens*{\calP}$.
If the program terminates in $T_{\textup{prac}}\parens*{\calP}$ cycles,
for every $\mathbf{qif}$ statement,
we can count the number of cycles for executing 
$C_0$ and $C_1$,
as well as determine the structure of nested quantum branching induced by nested procedure calls.
These can be gathered into
a classical data structure called \textit{qif table},
which will be used later in quantum superposition at runtime
to synchronise two quantum branches in every $\mathbf{qif}$ statement.
Note that this process is only dependent on the classical inputs but \textit{independent} of the quantum inputs, and it does not change the static program text (compared to \cite{YVC24}).
Also, our partial evaluation is different from those (e.g., \cite{JGS93,LVHPWH22}) that aim at optimising the programs.

Along with generating the qif table, 
given the classical inputs, we can also determine the sizes of all arrays
and allocate the addresses for variables (including determining the symbol table).
This task is simple and we will not describe its details.

\subsection{Qif Table}
\label{sub:qif_table}

Now let us introduce the notion of qif table,
storing the history information of quantum branching 
for an execution of the compiled program $\calP$, 
within a given practical running time $T_{\textup{prac}}\parens*{\calP}$.
The qif table is a classical data structure that will be used \textit{in quantum superposition} at runtime.

\subsubsection{Nodes and Links in Qif Table}
\label{sub:nodes_and_links_in_qif_table}

\begin{definition}[Qif table]
    \label{def:qif-table}
	A qif table is composed of linked nodes.
	There are two types of nodes in the qif table.
	Each node of type $\bullet$
	represents an instantiation of $\mathbf{qif}\ldots \mathbf{fiq}$;
	i.e., an execution running through the \hlt{qif} to the corresponding \hlt{fiq} once.
	Nodes of type $\circ$ are ancilla nodes for the qif table to be \textit{reversibly used}.
	Each node $v$ of type $\bullet$ records the following information:
	\begin{enumerate}
		
		\item
			(Next link $v.\mathit{nx}$):
			If $v$ has a \textit{continuing} non-nested instantiation $v'$ of $\mathbf{qif}\ldots \mathbf{fiq}$,
			then $v.\mathit{nx}=v'$.
			Otherwise, we set $v.\mathit{nx}=v''$ for some node $v''$ of type $\circ$, enabling the qif table to be reversibly used (no matter whether $v$ has a continuing instantiation of $\mathbf{qif}\ldots \mathbf{fiq}$) in later execution.
		\item 
			(First children links $v.\mathit{fc}_i$) and (Last children links $v.\mathit{lc}_i$) for $i\in \braces*{0,1}$:
			If $v$ has \textit{enclosed} nested instantiations of $\mathbf{qif}\ldots \mathbf{fiq}$,
			then $v.\mathit{fc}_0$ and $v.\mathit{fc}_1$ links to the first two children nodes,
			representing the first two enclosed instantiations of $\mathbf{qif}\ldots \mathbf{fiq}$ 
			(corresponding to branches $\ket{0}$ and $\ket{1}$ from $v$, respectively).
			Moreover, $v.\mathit{lc}_0$ and $v.\mathit{lc}_1$ links to the last two children nodes,
			which are the two next nodes 
			(specified by the next link $\mathit{nx}$ and of type $\circ$)
			of the last two enclosed instantiations of $\mathbf{qif}\ldots\mathbf{fiq}$
			(corresponding to branches $\ket{0}$ and $\ket{1}$ from $v$, respectively).

			Otherwise, $v.\mathit{fc}_0=v.\mathit{lc}_0=v'$ and $v.\mathit{fc}_1=v.\mathit{lc}_1=v''$
			for some nodes $v',v''$ of type $\circ$.
	\end{enumerate}

	Further, each node $v$ of either type $\bullet$ or $\circ$ records the following information:
	\begin{enumerate}
         \item
			(Wait counter $v.w$):
			It stores the number of cycles to wait at node $v$.
		\item 
			($v.\mathit{pr}$):
			$\mathit{pr}$ is the inverse link of $\mathit{nx}$.
			If $v.\mathit{nx}=v'$,
			then $v'.\mathit{pr}=v$.
		\item
			($v.\mathit{cf}$):
			$\mathit{cf}$ is the inverse link of $\mathit{fc}_0$ and $\mathit{fc}_1$.
			If $v.\mathit{fc}_0=v_0$ and $v.\mathit{fc}_1=v_1$,
			then $v_0.\mathit{cf}=v_1.\mathit{cf}=v$.
		\item
			($v.\mathit{cl}$):
			$\mathit{cl}$ is the inverse link of $\mathit{lc}_0$ and $\mathit{lc}_1$.
			If $v.\mathit{lc}_0=v_0$ and $v.\mathit{lc}_1=v_1$,
			then $v_0.\mathit{cl}=v_1.\mathit{cl}=v$.
	\end{enumerate}
\end{definition}

\begin{figure}
	\centering
	\includegraphics[width=0.9\textwidth]{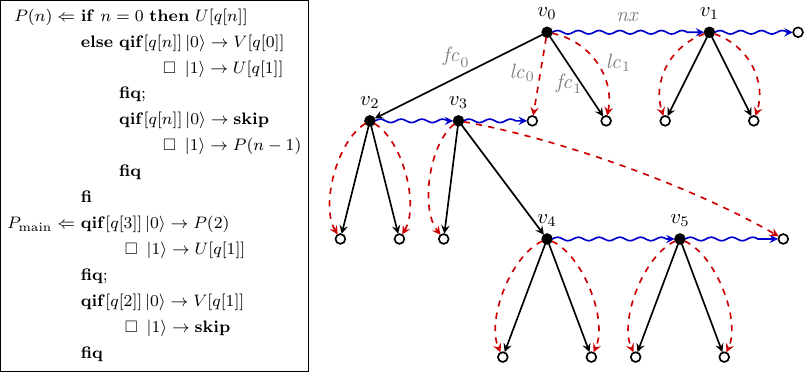}
	\caption{Example of a program in $\mathbf{RQC}^{++}$ and its corresponding qif table.
	In the qif table, 
	we only show the links $\mathit{fc}_i$ (colored in black), $\mathit{lc}_i$ (colored in red and dashed) 
	and $\mathit{nx}$ (colored in blue and squiggled),
	while $w$, $\mathit{cf}$, $\mathit{cl}$ and $\mathit{pr}$ are omitted for simplicity.
    The correspondence between the nodes on the RHS and the instantiations of $\mathbf{qif}$ statements on the LHS is as follows:
    (1) $v_0$: instantiation of the first $\mathbf{qif}\ldots \mathbf{fiq}$ in $P_{\textup{main}}$.
    (2) $v_1$: instantiation of the second $\mathbf{qif}\ldots \mathbf{fiq}$ in $P_{\textup{main}}$.
    (3) $v_2$ and $v_4$: the first and second instantiations of the first $\mathbf{qif}\ldots\mathbf{fiq}$ in $P(n)$.
    (4) $v_3$ and $v_5$: the first and second instantiations of the second $\mathbf{qif}\ldots\mathbf{fiq}$ in $P(n)$.
	}
	\label{fig:eg-qif-table}
\end{figure}

In \Cref{fig:eg-qif-table}, we give an example of a program and its corresponding qif table.
We only show the links $\mathit{fc}_i$, $\mathit{lc}_i$ and $\mathit{nx}$,
and omit $\mathit{w}$, $\mathit{cf}$, $\mathit{cl}$ and $\mathit{pr}$ for simplicity of presentation.
The partial evaluation should be done on the compiled program,
but for clarity we present the original program written in $\mathbf{RQC}^{++}$.
We also show the correspondence between nodes in the qif table and the instantiations of $\mathbf{qif}$ statements in the program.
It is easy to verify that the links are consistent with \Cref{def:qif-table}.

Additionally, we remark that to store the qif table in the QRAM,
we need to encode all links and counter recorded at a node.
The simplest way is to store them into a tuple,
where links like $v.\mathit{nx}$ records the base address of the tuple of the corresponding node.
Further discussion can be found in \Cref{sub:memory_allocation_of_qif_table,sub:qif_table_qmux}.

\subsubsection{Generation of Qif Table}
\label{sub:generation_of_qif_table}

For a compiled program $\calP$,
we fix a practical running time $T_{\textup{prac}}\parens*{\calP}$.
The partial evaluation is performed by multiple parallel processes.
We classically emulate the execution of the compiled program,
neglecting all quantum inputs and unitary gates.
Whenever a \hlt{qif} is met,
the current process forks into two sub-processes,
each continuing the evaluation of the corresponding quantum branch.
Whenever a \hlt{fiq} is met,
the current process waits for its pairing sub-process,
and collects information from both sub-processes to merge into one process.
Every process only goes into a single quantum branch and therefore contains no quantum superposition.

For each process, we maintain the following classical information.
We have $6$ system registers
$\mathit{pc}$, $\mathit{ins}$, $\mathit{br}$, $\mathit{sp}$, $v$, $t$ and a constant number of user registers.
Here, $v$ points to the current node in the qif table,
and $t$ is a counter that records the number of instructions already executed.
We also have a classical memory $M$ storing classical variables and the stack.
Let $M_i$ be the value stored at the memory location $i$.

The algorithm for partial evaluation of quantum control flow
and generation of the qif table is presented as \Cref{alg:partial-evaluation}.
The major part of the function \textsc{QEva}
is the loop between Lines~\ref{alstp:qeva-while-start}--\ref{alstp:qeva-while-end},
which consists of three stages that also appear in the execution in \Cref{sec:execution}.
The first and the last stages are similar to their classical counterparts.
The handling of instructions \hlt{qif} or \hlt{fiq} in the stage \textbf{(Decode \& Execute)} is highlighted.
\Cref{alg:partial-evaluation} returns a timeout error if $t$ exceeds the practical running time $T_{\textup{prac}}\parens*{\calP}$.
Otherwise, we obtain the actual running time $t=T_{\textup{exe}}\parens*{\calP}$ for later use in \Cref{sec:execution}.
More detailed explanation of \Cref{alg:partial-evaluation} is provided in \Cref{sub:details_qif_table_generation}.

\begin{algorithm}
	\caption{Partial evaluation of quantum control flow and generation of qif table.}
	\label{alg:partial-evaluation}
	\begin{algorithmic}[1]
        \Function{QEva}{}
			\State Initialise $\mathit{pc}\gets \text{starting address of the compiled main program}$ and $t\gets 0$
			\label{alstp:qeva-tgets0}
			\State Create an initial node $v$
			\While{$t\leq T_{\textup{prac}}\parens*{\calP}$}
            \BeginBox[fill=blue!5!white]
			\State \textbf{(Fetch)}: Let $\mathit{ins}\gets M_{\mathit{pc}}$ and $t\gets t+1$ \label{alstp:qeva-while-start}
			\label{alstp:qeva-insgets}
            \EndBox
            \BeginBox[fill=red!5!white]
            \State \textbf{(Decode \& Execute)}: 
            \If {$\mathit{ins}=$ \hlt{qif}\texttt{(q)}}
            \Comment{creation of quantum branching}
            \State Create nodes $v_0$ and $v_1$. Set
            $v.\mathit{fc}_i,v.\mathit{lc}_i\gets v_i$ and $v_i.\mathit{cf}, v_i.\mathit{cl}\gets v$
			\label{alstp:qeva-create-qif}
            \Comment{for the enclosed branches}
            \State Fork into two sub-processes \textsc{QEva}$_0$ and \textsc{QEva}$_1$.
            For \textsc{QEva}$_i$, set $v\gets v_i$ 
			\label{alstp:qeva-fork}
            \ElsIf {$\mathit{ins}=$ \hlt{fiq}\texttt{(q)}}
            \Comment{join of quantum branching}
            \State Wait for the pairing sub-process \textsc{QEva}$'$ with $v'.\mathit{cl}=v.\mathit{cl}=\hat{v}$ for some parent node $\hat{v}$
			\label{alstp:qeva-wait}
			\State $\hat{t}\gets \max\{t,t'\}$, $v.w\gets \hat{t}-t$ and $v'.w\gets \hat{t}-t'$
			\Comment{$v',t'$ are corresponding $v,t$ in \textsc{QEva}$'$.}
            \State Merge with the pairing sub-process \textsc{QEva}$'$ by letting $t\gets \hat{t}$ and $v\gets \hat{v}$
			\label{alstp:qeva-merge}
            \State Create node $u$. Set $v.\mathit{nx}\gets u$ and $u.\mathit{pr}\gets v$.
            \Comment{for the continuing branch}
            \State
			Suppose $v.\mathit{cl}=\hat{u}$ and $\hat{u}.\mathit{lc}_x=v$ for some $\hat{u}$ and $x$. 
			Let $u.\mathit{cl}\gets \hat{u}$, $\hat{u}.\mathit{lc}_x\gets u$ and $v.\mathit{cl}\gets 0$
			\State Update $v\gets u$
			\ElsIf {$\mathit{ins}=$ \hlt{finish}}
            \Comment{termination}
			\State \Return $t$
            \ElsIf {$\mathit{ins}\notin \{\textup{\hlt{uni}\texttt{(G,r)}},\textup{\hlt{unib}\texttt{(G,r$_1$,r$_2$)}}\}$}
            \Comment{neglect quantum gates}
			\State Update registers and $M$ according to \Cref{fig:table-ins}
            \EndBox
            \EndIf
            \BeginBox[fill=green!5!white]
            \State \textbf{(Branch)}:
            \If {$\mathit{br}\neq 0$}
            \State Let $\mathit{pc}\gets \mathit{pc}+\mathit{br}$
			\Else \State Let $\mathit{pc}\gets \mathit{pc}+1$ \label{alstp:qeva-while-end}
            \EndBox
            \EndIf
    		\EndWhile
			\State \Return Timeout error
        \EndFunction
	\end{algorithmic}
\end{algorithm}

\section{Execution on Quantum Register Machine}
\label{sec:execution}

Now we are ready to describe how the compiled program $\calP$ is executed,
with the aid of partial evaluation results (including symbol table and qif table),
on the quantum register machine.
Let us load all these instructions and data into the QRAM,
according to the layout described in \Cref{sub:QRAM-layout}.

\subsection{Unitary \texorpdfstring{$U_{\textup{cyc}}$}{U\_{cyc}} and Unitary \texorpdfstring{$U_{\textup{exe}}$}{U\_{exe}}}
\label{sub:cycle_unitary_u_cyc_and_execution_unitary_u_exe}

\Cref{alg:exe-qrm} presents the execution on quantum register machine,
which consists of repeated cycles,
each performing the unitary $U_{\textup{cyc}}$.
We fix the number of repetitions to be $T_{\textup{exe}}=T_{\textup{exe}}\parens*{\calP}$,
obtained from \Cref{alg:partial-evaluation}.

In $U_{\textup{cyc}}$, 
we need to decide whether to wait (i.e., skip the current cycle) or execute,
according to the wait counter information stored in the current node of the qif table.
To reversibly implement this procedure, $U_{\textup{cyc}}$ consists of three stages and exploits the registers $\mathit{qifw}$ and $\mathit{wait}$.
As a subroutine of $U_{\textup{cyc}}$,
the unitary $U_{\textup{exe}}$ consists of four stages,
inspired by the design of classical reversible processor (e.g., \cite{TAG12}).
In the \textbf{(Decode \& Execute)} stage,
the unitary $U_{\textup{dec}}$ is defined in \Cref{fig:table-type-eg}.
We provide more detailed discussion on \Cref{alg:exe-qrm} and its visualisation as quantum circuits in \Cref{sub:details_uni_cyc_and_exe}.

\algrenewcommand\algorithmicprocedure{\textbf{Unitary}}

\begin{algorithm}
	\caption{Execution on quantum register machine.}
	\label{alg:exe-qrm}
	\begin{algorithmic}[1]

	\Procedure {$U_{\textup{main}}$}{}
		\State Initialise registers according to \Cref{sub:quantum_registers}
		\For {$t=1,\ldots,T_{\textup{exe}}$}
		\State Apply the unitary \BoxedString[fill=blue!5!white]{$U_{\textup{cyc}}$} (defined below)
		\EndFor
	\EndProcedure

	\BeginBox[fill=blue!5!white]
	\Procedure {$U_{\textup{cyc}}$}{}
    \State \textbf{(Set wait flag)}:
        Conditioned on $\mathit{qifw}$, set the wait flag in $\mathit{wait}$:
    \Statex
        \qquad \qquad Perform $\sum_{w,z}\ket{w}\!\bra{w}_{\mathit{qifw}}\otimes\ket{z\oplus \bracks*{w>0}}\!\bra{z}_{\mathit{wait}}$
		\label{alstp:exe-set-wait-flag}
    \State 
        \textbf{(Execute or wait)}:
        Conditioned on $\mathit{wait}$,
		apply the unitary \BoxedString[fill=red!5!white]{$U_{\textup{exe}}$} (defined below),
        or wait and decrement the value in $\mathit{qifw}$:
    \Statex
        \qquad \qquad Perform $\ket{0}\!\bra{0}_{\mathit{wait}}\otimes U_{\textup{exe}} 
			+ \sum_{z\neq 0,w}\ket{z}\!\bra{z}_{\mathit{wait}}\otimes\ket{w-1}\!\bra{w}_{\mathit{qifw}}\otimes \Id$
		\label{alstp:exe-exe-or-wait}
    \State
        \textbf{(Clear wait flag)}:
        Conditioned $\mathit{qifw}$ and $\mathit{qifv}$,
		uncompute the wait flag in $\mathit{wait}$:
    \Statex
	\qquad \qquad Perform $\sum_{w,v,z}\ket{w}\!\bra{w}_{\mathit{qifw}}\otimes \ket{v}\!\bra{v}_{\mathit{qifv}}
            \ket{z\oplus \bracks*{w<v.w}}\!\bra{z}_{\mathit{wait}}$
		\label{alstp:exe-clear-wait-flag}
	\EndProcedure
	\EndBox

	\BeginBox[fill=red!5!white]
    \Procedure {$U_{\textup{exe}}$}{}
        \State \textbf{(Fetch)}:
        Apply the unitary $U_{\textup{fet}}\parens*{\mathit{ins},\mathit{pc},\mathit{mem}}$ 
		\label{alstp:exe-fetch}
		\Comment {$U_{\textup{fet}}$ is defined in \Cref{def:access_QRAM}.}
        \State \textbf{(Decode \& Execute)}:
        Apply the unitary $U_{\textup{dec}}$
		\label{alstp:exe-dec-exe}
		\Comment {$U_{\textup{dec}}$ is defined in \Cref{fig:table-type-eg}.}
        \State \textbf{(Unfetch)}:
		\label{alstp:exe-unfetch}
        Apply the unitary $U_{\textup{fet}}\parens*{\mathit{ins},\mathit{pc},\mathit{mem}}$ again
        \State \textbf{(Branch)}:
        Update $\mathit{pc}$, conditioned on $\mathit{br}$:
		\label{alstp:exe-branch}
        \Statex \qquad \qquad Apply $U_{+}\parens*{\mathit{pc},\mathit{br}}$
		\Comment {$U_{+}$ performs the mapping $\ket{x}\ket{y}\mapsto \ket{x+y}\ket{y}$.}
		\Statex \qquad \qquad Apply $\circ\parens*{\mathit{br}}$-$\sum_{x}\ket{x+1}\!\bra{x}_{\mathit{pc}}$
		\Comment {$\circ\parens*{\cdot}$-$U$ is defined in \Cref{sub:the_low_level_language_qins}.}
    \EndProcedure
	\EndBox
	\end{algorithmic}
\end{algorithm}

\subsection{Unitaries for Executing Qif Instructions}
\label{sub:unitaries_for_executing_qif_instructions}

It remains to define the unitaries $U_{\textup{qif}}$ and $U_{\textup{fiq}}$
that are unspecified in \Cref{fig:table-ins}.
We present their constructions in \Cref{alg:U-qif-fiq}.
Additional remarks are as follows.
\begin{itemize}
	\item 
		For $U_{\textup{qif}}$,
		note that we are promised that $\mathit{qifw}$ is initially in state $\ket{0}$,
		because $U_{\textup{qif}}$ is used as a subroutine in $U_{\textup{exe}}$,
		which will only be called by $U_{\textup{cyc}}$ when $\mathit{qifw}$ is in state $\ket{0}$.
	\item
		The information $\mathit{fc}_x$, $\mathit{cf}$, 
		$\mathit{lc}_x$, $\mathit{cl}$, $\mathit{nx}$, $\mathit{pr}$ and $w$ are stored in the qif node,
		and need to be fetched using $U_{\textup{fet}}$ into free registers before being used,
		of which details are omitted for simplicity.
\end{itemize}
Further explanation of \Cref{alg:U-qif-fiq} is provided in \Cref{sub:details_uni_qif_and_fiq}.

\begin{algorithm}
	\caption{The unitaries $U_{\mathit{qif}}$ and $U_{\mathit{fiq}}$ in \Cref{fig:table-type-eg}.}
	\label{alg:U-qif-fiq}
	\begin{algorithmic}[1]
    \Procedure {$U_{\textup{qif}}$}{$q$}
        \State 
            Conditioned on $q$,
            move $\mathit{qifv}$ to its first children node
            in the qif table via the links $\mathit{fc}_{0}$ and $\mathit{fc}_1$;
            i.e., perform the following series of unitaries:
        \Statex \qquad\qquad $V_{\mathit{fc}}=\sum_{v,x,u}\ket{v}\!\bra{v}_{\mathit{qifv}}\otimes \ket{x}\!\bra{x}_q \otimes \ket{u\oplus v.\mathit{fc}_x}\!\bra{u}_r$, where $r$ is a free register
        \Statex \qquad \qquad $V_{\mathit{cf}}=\sum_{v,u}\ket{v\oplus u.\mathit{cf}}\!\bra{v}_{\mathit{qifv}}\otimes \ket{u}\!\bra{u}_r$
        \Statex \qquad \qquad $U_{\textup{swap}}\parens*{r,\mathit{qifv}}$, which also clears register $r$
		\Comment {$U_{\textup{swap}}$ is defined in \Cref{sub:the_low_level_language_qins}.}
        \State
            Update $\mathit{qifw}$ with the wait counter information corresponding to $\mathit{qifv}$;
            i.e., perform:
        \Statex \qquad\qquad $\sum_{w,v} \ket{v}\!\bra{v}_{\mathit{qifv}}\otimes \ket{w\oplus v.w}\!\bra{w}_{\mathit{qifw}}$
    \EndProcedure

    \Procedure {$U_{\textup{fiq}}$}{$q$}
        \State
            Conditioned on $q$,
            move $\mathit{qifv}$ to its parent node in the qif table
            via the inverse link $\mathit{cl}$;
            i.e., perform the following series of unitaries:
        \Statex \qquad\qquad $V_{\mathit{cl}}=\sum_{v,u}\ket{v}\!\bra{v}_{\mathit{qifv}} \otimes \ket{u\oplus v.\mathit{cl}}\!\bra{u}_r$, where $r$ is a free register
        \Statex \qquad\qquad $V_{\mathit{lc}}=\sum_{v,x,u}\ket{v\oplus u.\mathit{lc}_q}\!\bra{v}_{\mathit{qifv}}\otimes \ket{x}\!\bra{x}_q\otimes \ket{u}\!\bra{u}_r$
        \Statex \qquad\qquad $U_{\textup{swap}}\parens*{r,\mathit{qifv}}$, which also clears register $r$
        \State Move $\mathit{qifv}$ to the next node in the qif table
        via the link $\mathit{nx}$; i.e., perform the following series of unitaries:
        \Statex \qquad\qquad $V_{\mathit{nx}}=\sum_{v,u}\ket{v}\!\bra{v}_{\mathit{qifv}}\otimes \ket{u\oplus v.\mathit{nx}}\!\bra{u}_r$, where $r$ is a free register
        \Statex \qquad\qquad $V_{\mathit{pr}}=\sum_{v,u}\ket{v\oplus u.\mathit{pr}}\!\bra{v}_{\mathit{qifv}}\otimes \ket{u}\!\bra{u}_r$
        \Statex \qquad\qquad $U_{\textup{swap}}\parens*{r,\mathit{qifv}}$, which also clears register $r$
    \EndProcedure
    \end{algorithmic}
\end{algorithm}

Now we remark on how the qif table as a classical data structure is used \textit{in quantum superposition} during the execution on quantum register machine.
Recall that at runtime, the value in register $\mathit{qifv}$ indicates the current node in the qif table. Register $\mathit{qifv}$ can be in a quantum superposition state,
in particular, entangled with the quantum coin $q$ (as well as other register and the QRAM) when instruction \hlt{qif}\texttt{(q)} is executed
(see \Cref{alg:U-qif-fiq}). 
For example, after the unitary $U_\textup{qif}(q)$ is performed, 
the state of the quantum register machine can be $\frac{1}{\sqrt{2}}\ket{0}_q\ket{v_1}_{\mathit{qifv}}\ket{\psi_0}
+ \frac{1}{\sqrt{2}}\ket{1}_q\ket{v_2}_{\mathit{qifv}}\ket{\psi_1}$,
where $\ket{\psi_0}$ and $\ket{\psi_1}$ are states of the remaining quantum registers and the QRAM. 
In this way, the information in the qif table is used in quantum superposition.

\section{Efficiency and Automatic Parallelisation}
\label{sec:computational_efficiency_and_algorithmic_speed_up}

An implementation of quantum recursive programs has been presented in the previous sections.
In this section, we analyse its efficiency,
and further show that as a bonus, such implementation also offers \textit{automatic parallelisation}.
For implementing certain algorithmic subroutine, like the quantum multiplexor introduced in \Cref{sub:motivating_example},
we can even obtain exponential parallel speed-up (over the straightforward implementation) from this automatic parallelisation.
This steps towards a \textit{top-down} design of efficient quantum algorithms:
we only need to design high-level quantum recursive programs,
and let the machine automatically realise the parallelisation 
(whose quality, of course, still depends on the program structure).
The intuition for the automatic parallelisation was already pointed out in \Cref{sub:motivating_example}:
(1) with quantum control flow, the quantum register machine can go through quantum branches in superposition; and (2) with recursive procedure calls, the program can generate exponentially many quantum branches (as each instantiation of the $\mathbf{qif}$ statement creates two quantum branches). 


In the following, we briefly describe the complexity of implementing quantum recursive programs; in particular, for partial evaluation and execution. We first describe the complexity in terms of elementary operations on registers and the QRAM, and then refine it into parallel time complexity measured by the standard (classical and quantum) circuit depth.
The full analysis can be found in \Cref{sub:complexity_partial_evaluation,sub:complexity_execution,sub:quantum_circuit_complexity_for_elementary_operations}.
Recall that, intuitively, $T_{\textup{exe}}\parens*{\calP}$ correspond to the time for executing the longest quantum branch in program $\calP$.
\begin{enumerate}
	\item 
		\Cref{alg:partial-evaluation} takes $O\parens*{T_{\textup{exe}}\parens*{\calP}}$
		classical parallel elementary operations.
		Here, ``elementary'' means the operation only involves
		a constant number of memory locations in the classical RAM 
		(as the partial evaluation is performed classically).
		``Parallel'' means multiple elementary operations
		performed simultaneously are counted as one parallel elementary operation,
		like in the standard parallel computing.
        The intuition for this complexity is that in the partial evaluation, each of the classical parallel processes only evaluate one quantum branch.

	\item
		\Cref{alg:exe-qrm} takes $O\parens*{T_{\textup{exe}}\parens*{\calP}}$ 
		quantum elementary operations, including on registers and QRAM accesses (see \Cref{def:access_QRAM}).
        The intuition was already presented in \Cref{sub:motivating_example}.
        
\end{enumerate}

The above complexities are in terms of elementary operations.
As mentioned in \Cref{sec:introduction},
the implementation will be eventually quantum circuits, so we need to translate elementary operations into quantum circuits.
The overall (classical and quantum) parallel time complexity 
of \Cref{alg:partial-evaluation,alg:exe-qrm} will be
\begin{equation*}
    O\parens*{T_{\textup{exe}}\parens*{\calP}\cdot \parens*{T_{\textup{reg}}+T_{\textup{QRAM}}}},
\end{equation*}
where $T_{\textup{reg}}$ and $T_{\textup{QRAM}}$ are parallel time complexities for elementary operations on registers 
and QRAM accesses, as aforementioned.
Here, we assume that classical elementary operations are cheaper than their quantum counterparts.

For concreteness, let us return to the example of quantum multiplexor program $\calP$ in \Cref{fig:q-multiplexor}.
Recall that in \Cref{fig:q-multiplexor-after} the programs 
after high-level transformations and high-to-mid-level translation are already presented.
Now we provide a proof sketch of \Cref{thm:parallel-qmux}, whose full proof can be found in \Cref{sub:proof-main-theorem}.

\begin{proof}[Proof Sketch of \Cref{thm:parallel-qmux}]
	Since each $C_x$ only consists of $T_x$ quantum unitary gates,
	the number of instructions in the compiled program of $P[x]\Leftarrow C_x$ will be $O\parens*{T_x}$.
	As a result, the whole compiled quantum multiplexor program (presented in \Cref{sub:further_details_mid_low_translation})
	contains $\Theta\parens*{\sum_{x\in [N]}T_x}$ instructions.
	It is easy to verify that $T_{\textup{exe}}\parens*{\calP}=O\parens*{\max_{x\in [N]} T_x+n}$.

	Let us determine $T_{\textup{reg}}$ and $T_{\textup{QRAM}}$
	by implementing the quantum register machine in the more common quantum circuit model.
	We can calculate the size $N_{\textup{QRAM}}$ of the QRAM and the word length $L_{\textup{word}}$
	for implementing $\calP$.
	In particular, taking $N_{\textup{QRAM}}=\Theta\parens*{\sum_x T_x}$ is sufficient.
	To see this, we can calculate that the sizes of the program, symbol table and variable sections
	are upper bounded by $\Theta\parens*{\sum_x T_x}$.
	The size of the qif table is $\Theta\parens*{2^n}$.
	The size of the stack is upper bounded by $\Theta\parens*{\sum_x T_x}+\Theta\parens*{n}$.
	To store an address in such QRAM, taking $L_{\textup{word}}=\Theta\parens*{\log N_{\textup{QRAM}}}$ is sufficient.

	By lifting results from classical parallel circuits for elementary arithmetic~\cite{Ofman63,Reif83,BCH86},
	we have $T_{\textup{reg}}=O\parens*{\log^2 L_{\textup{word}}}$.
	By extending existing circuit QRAM constructions (e.g., \cite{GLM08,HLGJ21}),
	we have $T_{\textup{QRAM}}=O\parens*{\log N_{\textup{QRAM}}+\log L_{\textup{word}}}$.
	The above calculations are carried out in terms of parallel time complexity, i.e., quantum circuit depth.
	Combining the above arguments leads to \Cref{thm:parallel-qmux}.
\end{proof}

\section{Related Work}
\label{sec:related_work}

\paragraph{Low-level quantum instructions} Several quantum instruction set architectures have been proposed in the literature, e.g., OpenQASM~\cite{ALJJ17}, Quil~\cite{RMW17}, eQASM~\cite{eQASM19}. Only the architecture introduced in \cite{YVC24}, called quantum control machine,  supports program counter in superposition (and hence quantum control flow), and the others do not support quantum control flow at the instruction level.
Quantum control machine also supports conditional jumps, but different from quantum register machine defined in this paper, it does not support arbitrary procedure calls.

\paragraph{Automatic parallelisation}
Numerous efforts have been devoted to parallelisation of quantum circuits of specific patterns, e.g., \cite{CW00,MN02,GHMP02,TD04,HS05,BGH07,TT13,JSTWWZ20,ZYY21,ZWY24,Rosenthal23,STYYZ23,ZLY22,YZ23}. 
Other than the quantum circuit model, the measurement-based quantum computing~\cite{RB01} is also shown to provide certain benefits for parallelisation~\cite{Josza05,DKP07,BK09,BKP11,Pius10,dSPK13}.
These techniques of parallelisation are at the low level. 
In comparison, the automatic parallelisation from our implementation is at the high level: the quantum register machine automatically exploits parallelisation opportunities in the  structures of the high-level quantum recursive programs.

\paragraph{Automatic uncomputation}
Silq~\cite{BBGV20} is the first quantum programming language that supports automatic uncomputation,
which was further investigated in \cite{PBSV21,YC22,VLRH23,YC24,PBV24,VGDV24}.
Silq's uncomputation is for quantum programs lifted from classical ones,
or in their terminology, lifted functions 
(whose semantics can be described classically and preserves the input).
Later works like~\cite{VLRH23,VGDV24} also considered uncomputation of quantum programs 
but they do not support quantum recursion.
In comparison, $\mathbf{RQC}^{++}$ supports quantum recursion,
where classical variables are used solely for specifying the control (not data).
The automatic uncomputation in our implementation is of these classical variables in quantum recursive programs.

\paragraph{Classical reversible languages}
There are extensive works in classical reversible programming languages, 
including the high-level language Janus~\cite{LD86,YG07,YAG08},
low-level instruction set architectures PISA~\cite{Vieri99,Frank99,AGY07} and BobISA~\cite{TAG12}.
Some of these reversible languages support local variables,
specified by a pair of local-delocal statements, which have explicitly reversible semantics.
In $\mathbf{RQC}^{++}$, irreversible classical computation can be done on local variables,
but their translations into low-level instructions become reversible.

\paragraph{Quantum control flow and data structures}
    Many works~\cite{AG05,YYF12,BP15,SVV18,BBGV20,VLRH23,YVC24,YC24} on quantum programming languages include quantum control flow as a feature. 
    Some~\cite{BP15,YVC24} discuss the limitation of quantum control flow.
    For example, it is shown in \cite{BP15} that the semantics of quantum recursion cannot be defined using Tarski's fixpoint theorem, when quantum measurements are involved. However, $\mathbf{RQC}^{++}$ considered in this paper only describes unitary operators and therefore circumvents this issue.
    A similar unitary restriction is used in \cite{YVC24} to support instruction-level quantum control flow.
    
    Another related topic is data structures in superposition~\cite{YC22,YC24}. The language Tower in~\cite{YC22} can describe recursive programs in superposition, which allows a \textit{single} layer of interleaving between quantum control flow and recursion. However, their syntax does not contain unitary gates and quantum if-statement, which cannot express the most general form of quantum recursion.
    In contrast, $\mathbf{RQC}^{++}$ allows \textit{arbitrary} interleaving between quantum if-statements and recursive procedure calls. Such expressive power of $\mathbf{RQC}^{++}$ also makes our implementation non-trivial.

\section{Conclusion}
\label{sec:discussion}

We propose the notion of quantum register machine,
an architecture that provides instruction-level support for quantum control flow and recursive procedure calls at the same time.
We design a comprehensive process of implementing quantum recursive programs on the quantum register machine,
including compilation, partial evaluation of quantum control flow and execution.
As a bonus, our implementation offers automatic parallelisation, from which we can even obtain exponential parallel speed-up (over the straightforward implementation)
for implementing some important quantum algorithmic subroutines like the quantum multiplexor.

To conclude this paper, let us list several topics for future research.
Firstly, an immediate next step is to develop a software 
that realises our implementation of quantum recursive programs for actual execution on future quantum hardware.
Moreover, one can consider certifying such software implementation, like in recent verified quantum compilers, e.g., \cite{ARS17,RPLZ19,HRH+21,TSY+22,LVHPWH22}.
Secondly, our implementation is designed to be simple for clarity.
It is worth extending the features of the quantum register machine
and further optimise the steps in the compilation, partial evaluation and execution.
Thirdly, it is interesting to see what other quantum algorithms (except those considered in \cite{YZ24} and this paper) can be written in quantum recursive programs
and benefit (with possible speed-up) from the efficient implementation of the quantum register machine.

\begin{acks}
    Zhicheng Zhang thanks Qisheng Wang for helpful discussions about the halting schemes of quantum Turing machine related to the synchronisation problem in \Cref{sub:the_synchronisation_problem} (see also \Cref{sub:further_synchronisation_problem}).
    We thank the anonymous reviewers for their valuable comments.
    This work was partly supported by the Australian Research Council (Grant Number: DP250102952).
    Zhicheng Zhang was supported by the Sydney Quantum Academy, NSW, Australia.
\end{acks}

\bibliographystyle{ACM-Reference-Format}
\bibliography{mybib}


\appendix

\newpage

\section{Quantum Recursive Programming Language \texorpdfstring{$\mathbf{RQC}^{++}$}{RQC++}}
\label{sec:the_high_level_language_QRPL}

In this appendix, we provide a more detailed introduction to the syntax and semantics of high-level language $\mathbf{RQC}^{++}$
for describing quantum recursive programs~\cite{YZ24}.
For the purpose of this paper, we also make some slight modifications and more illustrations, compared to the original definitions of $\mathbf{RQC}^{++}$ in~\cite{YZ24}.

Let us first describe several features of the language $\mathbf{RQC}^{++}$.
As aforementioned, it allows both quantum control flow and recursive procedure calls,
and therefore supports the quantum recursion. 
For simplicity, $\mathbf{RQC}^{++}$ is not explicitly typed.
A program in $\mathbf{RQC}^{++}$ describes a quantum circuit without measurements,
whose size is parameterised.
The alphabet of $\mathbf{RQC}^{++}$ contains classical and quantum variables, 
while classical variables solely serve for specifying the control of the programs.
The quantum control flow in $\mathbf{RQC}^{++}$ is fully managed by the $\mathbf{qif}$ statements:
quantum branches are only created by $\mathbf{qif}$,
and only merged by $\mathbf{fiq}$.

\subsection{Program Variables and Procedure Identifiers}
\label{sub:variables_and_procedures}

\subsubsection{Classical Variables}
\label{sub:classical_variables}

Classical variables in $\mathbf{RQC}^{++}$ are solely for specifying the control of programs.
They can be used to define formal parameters (of procedure declarations), store intermediate results, and express conditions (in $\mathbf{if}$ and $\mathbf{while}$ statements) and actual parameters (of procedure calls).
A program in $\mathbf{RQC}^{++}$ describes a quantum unitary transformation without measurements,
whose dimension can depend on classical parameters.
\textit{Classical variables are classical only in the eyes of the programmer};
or more specifically, in the eyes of the enclosing procedure.
Meanwhile, the procedure calls can be used in quantum superposition, 
e.g., within quantum branches created by the $\mathbf{qif}$ statements.
In the implementation, classical variables will be realised by the quantum hardware instead.

A classical variable $x$ has a type $T(x)$, which can be thought of as a set;
i.e., $x\in T\parens*{x}$.
In this paper, we will only consider three types of classical variables,
$\mathbf{Uint}$, $\mathbf{Int}$ and $\mathbf{Bit}$,
standing for unsigned integer type, integer type and bit type, respectively.
We use $\overline{x}=x_1x_2\ldots x_n$ to denote a list of classical variables.

\subsubsection{Quantum Variables}
\label{sub:quantum_variables}

Quantum variables are very different from classical variables.
The state of quantum variables can be in superposition,
and different quantum variables can be entangled.
An elementary quantum variable $q$ has a type $T(q)=\calH$,
which represents the corresponding Hilbert space of $q$.
The type (Hilbert space) of a list $\overline{q}$ of distinct quantum variables $q_1,\ldots,q_n$
is then the tensor product  $T(\overline{q})=\bigotimes_{i=1}^n \calH_i$,
where each $\calH_i=T(q_i)$ is the type of $q_i$.
In this paper, we will only consider two types of quantum variables,
$\mathbf{Qint}$ and $\mathbf{Qbit}$,
standing for quantum integer type and qubit type, respectively.

\subsubsection{Procedure Identifiers}
\label{sub:procedure_identifiers}

Procedure identifiers are the names of procedures and are of a designated type $\mathbf{Pid}$. 
For each procedure identifier $P$, 
we can associate with it a classical variable $P.\mathit{ent}$ of type $\mathbf{Uint}$,
storing the entry address of the procedure declaration of $P$.
The value of $P.\mathit{ent}$ is determined and static after the program is compiled and loaded into the memory.
In the compiled program, $P.\mathit{ent}$ will be used for handling procedure calls (see also \Cref{sub:mid_level_to_low_level_translation}).

\subsubsection{Arrays}
\label{sub:arrays}

Classical and quantum variables can all be generalised to array variables.
Procedure identifiers can be generalised to procedure arrays too.
An array can be subscripted by classical values.
For simplicity of presentation, in this paper,
we only consider one-dimensional arrays.
High-dimensional arrays can be easily simulated by one-dimensional arrays.

The type of an array depends on the type of its elements:
\begin{enumerate}
	\item 
		The type of a classical array $x$ is $\parens*{T\rightarrow X}\equiv X^{T}$,
		where $X$ is the type of the elements in $x$,
		and $T$ is the type of the subscript.
	\item
		The type of a quantum array $q$ is $\parens*{T\rightarrow \calH}\equiv \bigotimes_{t\in T} \calH_t$,\footnote{
		Here, we assume the elements in $T$ are ordered.}
		where $\calH_t=\calH$ is the type of the element,
		and $T$ is the type of the subscript.
	\item
		The type of a procedure array is $\parens*{T\rightarrow \mathbf{Pid}}\equiv \mathbf{Pid}^T$,
		where $T$ is the type of the subscript.
\end{enumerate}
Elements in a classical or quantum array are assigned contiguous addresses in the memory 
(see also \Cref{sub:symbol_table_and_memory_allocation_of_variables}),
and hence can be efficiently addressed.

Given an array variable, we can also write corresponding subscripted variables.
For classical array $x$ of type $T\rightarrow X$
and quantum array $q$ of type $T\rightarrow \calH$,
we can write subscripted variables $x[t]$, $q[t]$ for classical expression $t$ of type $T$, respectively.
They are of types $X$ and $\calH$, respectively.
For example, one can write subscripted quantum variable $q[5x+2y]$, where $x,y$ are classical variables.
Similarly, given a procedure array $P$, we can also write corresponding subscripted procedure identifiers $P[t]$.

For simplicity of presentation, 
we restrict the use of nested subscriptions.
In particular, for classical subscripted variable $x[t]$,
we require that $t$ contains no more subscripted variables.
For example, we do not allow subscripted variable $x[y[10]]$ for classical $x,y$,
but allow $q[x[7z]]$ for quantum $q$ and classical $x,z$.
The implementation of nested subscriptions can actually be treated in similar but more complicated ways.

Suppose $x$ is a classical array of type $T\rightarrow X$ with $T=\mathbf{Uint}$ or $\mathbf{Int}$.
We use the notation $x[k:l]$ to denote the restriction of $x$ to the interval $[k:l]=\braces*{k,k+1,\ldots,l-1,l}$.
Then, $x[k:l]$ is a variable of type $[k:l]\rightarrow X$.
The same convention applies to quantum and procedure arrays.


\subsubsection{Global Variables vs.\ Local Variables}
\label{sub:global_vs_local}

In $\mathbf{RQC}^{++}$, a variable is not declared before its use.
All variables are treated as global variables.
Local (classical) variables will be realised by the block statement 
\begin{equation*}
	\mathbf{begin}\ \mathbf{local}\ \overline{x}:=\overline{t}; \ldots\mathbf{end}.
\end{equation*}
Within the scope of the block,
the list of classical variables $\overline{x}$ are regarded as local variables,
initialised to new values specified by the list of expressions $\overline{t}$ at the begining of the block,
and restore their old values at the end of the block.
A consequence of this treatment is that in a procedure call,
the callee can use the variables setting up by the caller.

\subsection{Syntax}

The syntax of $\mathbf{RQC}^{++}$ is already summarised in \Cref{fig:syntax-QRPL}.
Here, we provide some more detailed explanations.

\subsubsection{Classical Assignment, If-Statement and Loop}

In $\mathbf{RQC}^{++}$,
the classical assignment, if-statement and loop are similar to their counterparts in classical programming languages.
\begin{itemize}
	\item 
		The assignment $\overline{x}:=\overline{t}$ 
		simultaneously assigns the values of the list of expressions $\overline{t}$ to the list of classical variables $\overline{x}$.
		Note that in $\overline{t}=t_1t_2\ldots t_n$, $t_i$ might contains variables in $\overline{x}=x_1x_2\ldots x_n$.
	\item
		The $\mathbf{if}\ b\ \mathbf{then}\ C_1\ \mathbf{else}\ C_2\ \mathbf{fi}$ statement chooses one of
		statements $C_1$ and $C_2$ to execute, depending on the value of the boolean expression $b$.
	\item
		The $\mathbf{while}\ b\ \mathbf{do}\ C\ \mathbf{od}$ statement repeatedly execute statement $C$,
		conditioned on that the value of the boolean expression $b$ is $1$.
\end{itemize}

\subsubsection{Block Statement}

The block statement is used to declare classical variables as local variables.
In particular, 
the statement 
\begin{equation*}
	\mathbf{begin}\ \mathbf{local}\ \overline{x}:=\overline{t};\ C\ \mathbf{end}
\end{equation*}
declares the list of variables $\overline{x}$ as local variables within the scope of $\mathbf{begin}\ldots \mathbf{end}$,
initialised to values of the list of expressions $\overline{t}$.
At the end of the block, $\overline{x}$ will restore the old values. 
The block statement is also useful in defining the semantics of procedure calls (see \Cref{sub:details_semantics}).

\subsubsection{Procedure Call}

The procedure call and procedure declaration are also similar to their classical counterparts.
\begin{itemize}
	\item 
		The procedure declaration $P\parens*{\overline{u}}\Leftarrow C$ declares a procedure identifier $P$
		with the list of formal parameters $\overline{u}$ and procedure body $C$.
	\item
		The procedure call $P\parens*{\overline{t}}$ calls the procedure with identifier $P$
		with the list of actual parameters $\overline{t}$.
\end{itemize}

\subsubsection{Quantum Unitary Gate}
\label{sub:quantum_unitary_gate}

The statement $U[\overline{q}]$ applies an elementary quantum unitary gate $U$ on the list $\overline{q}$ of quantum variables.
For simplicity of presentation, we restrict that $\overline{q}$ contains at most two quantum variables,
which are of type $\mathbf{Qbit}$.
The type of the unitary gate $U$ needs to be matched with $\overline{q}$.
If $\overline{q}=q_1$, then $U$ is a single-qubit unitary gate;
if $\overline{q}=q_1q_2$, then $U$ is a two-qubit unitary gate.
Here, for the two-qubit case, note that we need to promise $q_1$ and $q_2$ are distinct quantum variables, which will be formally stated in \Cref{sub:details_semantics}.
Additionally, we restrict that $U$ is chosen from a fixed set of elementary unitary gate set $\calG$ of size $\abs*{\calG}=O(1)$.
For example, $\calG=\braces*{H,T,\mathit{CNOT}}$.
More complicated unitaries, like parameterised rotation $R_X(\theta)$,
can be implemented by procedure call with classical parameters.

\subsubsection{Quantum Control Flow}
\label{sub:quantum_control_flow}

The quantum if-statement $\mathbf{qif}[q]\parens*{\ket{0}\rightarrow C_0}\square\parens*{\ket{1}\rightarrow C_1}\mathbf{fiq}$ in $\mathbf{RQC}^{++}$ explicitly manages quantum control flow.
Unlike the classical if-statement with classical control flow,
the control flow in the $\mathbf{qif}$ statement is essentially quantum.
Conditioned on the value ($\ket{0}$ or $\ket{1}$) of the external quantum coin $q$,
the statements $C_0$ and $C_1$ will be executed.
When $q$ is in a quantum superposition state,
the subsystems that $C_0$ and $C_1$ act on will be entangled with $q$.
The condition of the external quantum coin is to promise the physicality of the $\mathbf{qif}$ statement,
which is already stated in \Cref{cnd:qif-external-formal},
and will be further explained later when we come to the semantics of $\mathbf{RQC}^{++}$.

It is also worth pointing out that in $\mathbf{RQC}^{++}$, the quantum control flow 
is fully guarded by the $\mathbf{qif}$ statements.
In particular, quantum branches are only created by $\mathbf{qif}$,
and only merged by $\mathbf{fiq}$.

\subsection{Semantics}
\label{sub:details_semantics}

As aforementioned in \Cref{sub:semantics},
the operational semantics of $\mathbf{RQC}^{++}$
is defined in terms of transitions between configurations.
Let us fix a finite set of classical and quantum variables.
A configuration is represented by $\parens*{C, \sigma, \ket{\psi}}$,
where $C$ is the remaining statement to be executed or $C=\ \downarrow$ (standing for termination; and we denote $\downarrow; C'\equiv C'$),
$\sigma$ is the current classical state of all classical variables, and $\ket{\psi}$ is the current quantum state of all quantum variables.
Let $\calC$ be the set of configurations.
Then, the operational semantics is defined as  the transition relation $\rightarrow\ \in\calC\times \calC$,
of the form $\parens*{C,\sigma,\ket{\psi}}\rightarrow \parens*{C', \sigma',\ket{\psi'}}$, by a series of transition rules. 
The transition rules for defining the operational semantics of $\mathbf{RQC}^{++}$ were already presented in \Cref{fig:semantics-QRPL}.
Let us also further explain some of them as follows.

\begin{itemize}
    \item 
        In the (GA) rule, $\sigma(\overline{q})$ denotes the subsystem specified by $q$ with respect to the classical state $\sigma$.
        In particular, if $\overline{q}=q_1$, where $q_1$ is not subscripted, then $\sigma(\overline{q})=q_1$; if $\overline{q}=q_1$
        with $q_1=r[t_1]$ for quantum array $r$, then $\sigma(\overline{q})=r[\sigma(t_1)]$;
        if $\overline{q}=q_1q_2$ with $q_1=r[t_1]$ and $q_2=r'[t_2]$ for quantum arrays $r$ and $r'$, then $\sigma(\overline{q})=r[\sigma(t_1)]r'[\sigma(t_2)]$.
        The condition $\sigma\models \mathit{Dist}(\overline{q})$ means in the classical state $\sigma$, $\overline{q}$ is a list of distinct quantum variables. In particular, if $\overline{q}=q_1$,
        then $\mathit{Dist}(\overline{q})$ is always true; if $\overline{q}=q_1q_2$ with $q_1=r[t_1]$ and $q_2=r'[t_2]$ for quantum arrays $r$ and $r'$, then $\mathit{Dist}(\overline{q})$ is the logical formula $r=r'\rightarrow t_1\neq t_2$.
        For simplicity, we assume all programs considered always satisfy this condition.
    \item 
        In the (QIF) rule, we see how the condition of external quantum coin (\Cref{cnd:qif-external-formal},
        in which $\mathit{qv}$ will be defined in \Cref{def:qv} below) is used: it promises that $q$ is separated from the subsystems that $C_0$ and $C_1$ act on,
        and therefore the current state $\ket{\psi}$ can be written as $\alpha_0\ket{0}_{\sigma(q)}\ket{\theta_0}+\alpha_1\ket{1}_{\sigma(q)}\ket{\theta_1}$, given the classical state $\sigma$.
        
        As already pointed out in \Cref{sub:semantics}, another point to note is that the executions of $C_i$ in both quantum branches ($i=0,1$)
        terminate in the same classical state $\sigma$.
        This promises that classical variables are \textit{disentangled} from quantum variables
        after the $\mathbf{qif}$ statement.
        In \cite{YZ24}, originally, both quantum branches are only required to terminate in the same classical state $\sigma'$ that may differ from the initial $\sigma$.
        For simplicity of later implementation, here in \Cref{fig:semantics-QRPL},
        the requirement has been made slightly stricter, while remaining easy to meet in practice. 
        
        It is worth stressing again that classical variables in $\mathbf{RQC}^{++}$
        are classical only in the eyes of the programmer (or the enclosing procedure);
        they become quantum when the program is implemented on quantum hardware.
    \item
        In the (BS) rule, 
        the sequential composition $\overline{x}:=\overline{t}; C; \overline{x}:=\sigma\parens*{\overline{x}}$
        formalises the meaning of initialising $\overline{x}$ at the beginning of the block
        and restoring their old values (i.e., $\sigma\parens*{\overline{x}}$) at the end of the block.
    \item 
        In the (RC) rule, 
        the formal parameters $\overline{u}$ to be used by the callee will be locally replaced by the actual parameters $\overline{t}$ set by the caller. 
\end{itemize}

\subsubsection{Conditions for Well-defined Semantics}
\label{sub:qv-fcv}

A program written in the syntax in \Cref{fig:syntax-QRPL} is not yet promised to have well-defined semantics.
To guarantee that a program in $\mathbf{RQC}^{++}$ has well-defined semantics, recall that in \Cref{sub:conditions_for_well_defined_semantics} we have introduced three conditions (\Cref{cnd:qif-external-formal,cnd:no-fcv-qif,cnd:no-fcv-proc-body-formal}),
in which two notions remain to be formally defined.

The first notion is the quantum variables $\mathit{qv}\parens*{C,\sigma}$ in a statement $C$ with respect to a given classical state $\sigma$. It is used in \Cref{cnd:qif-external-formal}.
Before formally defining $\mathit{qv}$,
we need some additional notions.
We use $\calQ\calV$ to denote the set of all quantum variables.
Let $\calC'$ be the set of classical configurations of the form $(C,\sigma)$.
Let $\calF\equiv \calC' \rightarrow 2^{\calQ\calV}$ be the set of functions $f$
that maps a configuration $(C,\sigma)\in \calC'$ to a set of quantum variables $f(C,\sigma)\in 2^{\calQ\calV}$.
Then, we can define a partial order $\sqsubseteq$ on $\calF$,
such that for $f,g\in \calF$, $f\sqsubseteq g$ iff $\forall (C,\sigma)\in \calC'$, $f(C,\sigma)\subseteq g(C,\sigma)$.

\begin{definition}[Quantum variables]
	\label{def:qv}
	For a statement $C\in \mathbf{RQC}^{++}$ and a  classical state $\sigma$,
	the quantum variables of $C$ with respect to $\sigma$
    is denoted by $\mathit{qv}\parens*{C,\sigma}$, and $\mathit{qv}(\cdot,\cdot): \calC' \rightarrow 2^{\calQ\calV}$ is defined as the least element $f\in\calF$ (with respect to the order $\sqsubseteq$) that satisfies  the following conditions:
	\begin{enumerate}
		\item 
			$f\parens*{C,\sigma}\supseteq\emptyset$ if $C\equiv \mathbf{skip}\mid \overline{x}:=\overline{t}$.
		\item
			$f\parens*{U[\overline{q}],\sigma}\supseteq\overline{q}$.
		\item
			$f\parens*{C_1;C_2,\sigma}\supseteq f\parens*{C_1,\sigma}\cup f\parens*{C_2,\sigma'}$,
			where $\sigma'$ satisfies $\parens*{C_1,\sigma,\ket{\psi}}\rightarrow^* \parens*{\downarrow,\sigma',\ket{\psi}}$ for any quantum state $\ket{\psi}$.
		\item
			$f\parens*{\mathbf{if}\ b\ \mathbf{then}\ C_1\ \mathbf{else}\ C_2\ \mathbf{fi},\sigma}\supseteq
			\begin{cases}
				f\parens*{C_1,\sigma}, & \sigma\models b,\\
				f\parens*{C_2,\sigma}, & \sigma\models \neg b.
			\end{cases}$
		\item
			$f\parens*{\mathbf{while}\ b\ \mathbf{do}\ C\ \mathbf{od},\sigma}\supseteq
			\begin{cases}
				f\parens*{C;\mathbf{while}\ b\ \mathbf{do}\ C\ \mathbf{od},\sigma}, & \sigma\models b,\\
				\emptyset, &\sigma\models \neg b.
			\end{cases}$
		\item
            \label{stp:cnd-7}
			$f\parens*{\mathbf{qif}[q]\parens*{\ket{0}\rightarrow C_0}\square \parens*{\ket{1}\rightarrow C_1}\mathbf{fiq},\sigma}
			\supseteq\braces*{q}\cup f\parens*{C_0,\sigma}\cup f\parens*{C_1,\sigma}$.
		\item
			$f\parens*{\mathbf{begin}\ \mathbf{local}\ \overline{x}:=\overline{t};C\ \mathbf{end}, \sigma}\supseteq
			f\parens*{\overline{x}:=\overline{t};C;\overline{x}:=\sigma\parens*{\overline{x}},\sigma}$.
		\item
			$f\parens*{P\parens*{\overline{t}},\sigma}\supseteq f\parens*{\mathbf{begin}\ \mathbf{local}\ \overline{u}:=\overline{t};C\ \mathbf{end},\sigma}$, if $P\parens*{\overline{u}}\Leftarrow C\in \calP$.
	\end{enumerate}
\end{definition}

The following lemma shows that $\mathit{qv}(\cdot,\cdot)$ in \Cref{def:qv} is well-defined.

\begin{lemma}
    \label{lmm:least-ele} Let $S$ be the set of all functions $f\in\calF$ that  satisfy the conditions in \Cref{def:qv}. Then $S$ has the least element.
\end{lemma}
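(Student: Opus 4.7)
The plan is to realise $S$ as the set of pre-fixed points of a monotone operator on a complete lattice and then invoke the Knaster--Tarski fixed-point theorem. First I would observe that $(\calF,\sqsubseteq)$ is a complete lattice, with meets and joins given pointwise by intersection and union of subsets of $\calQ\calV$; in particular, it has a top element $f_{\top}(C,\sigma)=\calQ\calV$ and a bottom element $f_{\bot}(C,\sigma)=\emptyset$. This is just because, for each fixed $(C,\sigma)$, the powerset $\parens*{2^{\calQ\calV},\subseteq}$ is a complete lattice and pointwise order on $\calF$ lifts this structure.

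Next I would package the nine clauses of \Cref{def:qv} into a single operator $\Phi:\calF\to\calF$, letting $\Phi(f)(C,\sigma)$ be the right-hand side of the unique clause whose syntactic pattern matches $C$. By construction, $f\in S$ holds if and only if $\Phi(f)\sqsubseteq f$, i.e., $f$ is a pre-fixed point of $\Phi$. Monotonicity of $\Phi$ is the key fact to check, but it is immediate: every right-hand side is a finite union built from fixed singletons $\braces*{q}$ together with positive occurrences of $f$ applied to auxiliary classical configurations, so $f\sqsubseteq g$ yields $\Phi(f)\sqsubseteq \Phi(g)$. Non-emptiness of $S$ is witnessed by $f_{\top}$, which trivially satisfies every clause. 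Applying Knaster--Tarski to $\Phi$ on $\calF$ then yields a least pre-fixed point $f^*$, which by the characterisation above is precisely the least element of $S$; equivalently, $f^*(C,\sigma)=\bigcap_{f\in S} f(C,\sigma)$.

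The one step I expect to require care is clause~(3) (and clause~(9) after unfolding the block statement), whose right-hand side involves a classical post-state $\sigma'$ reached by a terminating execution $\parens*{C_1,\sigma,\ket{\psi}}\rightarrow^*\parens*{\downarrow,\sigma',\ldots}$. I would argue that $\sigma'$ is determined purely by the classical updates performed along $C_1$'s execution and is therefore a function of $(C_1,\sigma)$ alone, independent of both $\ket{\psi}$ and $f$. Under this reading, the right-hand side of clause~(3) really is a monotone expression in $f$, and no circular dependence on $\mathit{qv}$ itself is introduced. Once this is justified, the Knaster--Tarski step closes the argument and $\mathit{qv}$ is well-defined.
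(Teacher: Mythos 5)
Your proof is correct and, at its core, is the same argument as the paper's: both construct the least element as the pointwise intersection $\bigcap_{f\in S} f$, and the decisive fact in both cases is that the right-hand sides of the clauses in \Cref{def:qv} are monotone in $f$ (the paper verifies this concretely for the $\mathbf{qif}$ clause and declares the rest similar; you package the same computation as monotonicity of $\Phi$ plus Knaster--Tarski applied to pre-fixed points). Your observation that the post-state $\sigma'$ in the sequential-composition clause is determined by $(C_1,\sigma)$ alone, so no hidden dependence on $f$ or on $\ket{\psi}$ enters the operator, is a point the paper's proof glosses over and is worth making explicit.
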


\begin{proof}
    It is easy to see $S\neq \emptyset$,
    because the constant function $h$ with $\forall (C,\sigma)\in \calC', h(C,\sigma)=\calQ\calV$
    is in $S$.  
    Let $g\in \calF$ be defined by $g(C,\sigma)=\bigcap_{f\in S} f(C,\sigma)$ for all $(C,\sigma)\in \calC'$. 
    Then it suffices to show that $g\in S$; that is, $g$ satisfies the conditions in \Cref{def:qv}. Here, we only prove that $g$ satisfies Condition~\ref{stp:cnd-7} in \Cref{def:qv}; other conditions can be verified similarly. 
    \begin{align*}
        g\parens*{\mathbf{qif}[q]\parens*{\ket{0}\rightarrow C_0}\square \parens*{\ket{1}\rightarrow C_1}\mathbf{fiq},\sigma}& =\bigcap_{f\in S}f\parens*{\mathbf{qif}[q]\parens*{\ket{0}\rightarrow C_0}\square \parens*{\ket{1}\rightarrow C_1}\mathbf{fiq},\sigma}\\ &\supseteq\bigcap_{f\in S}\parens*{\braces*{q}\cup f\parens*{C_0,\sigma}\cup f\parens*{C_1,\sigma}}\\ 
        &\supseteq\braces*{q}\cup\Big(\bigcap_{f\in S} f\parens*{C_0,\sigma}\Big)\cup\Big(\bigcap_{f\in S}f\parens*{C_1,\sigma}\Big)\\
        &=\braces*{q}\cup g\parens*{C_0,\sigma}\cup g\parens*{C_1,\sigma}
    \end{align*}

\end{proof}

The second notion is free changed (classical) variables $\mathit{fcv}\parens*{C,\sigma}$
in a statement $C$ with respect to a given classical state $\sigma$.
It is used in \Cref{cnd:no-fcv-qif,cnd:no-fcv-proc-body-formal}.
Similar to the case of defining $\mathit{qv}$, we also need some additional notions.
We use $\calC\calV$ to denote the set of all classical variables.
Let $\calG\equiv \calC'\rightarrow 2^{\calC\calV}$,
which also has a partial order $\sqsubseteq$ induced by the subset order $\subseteq$.

\begin{definition}[Free changed (classical) variables]
	\label{def:fcv}
	For a statement $C\in \mathbf{RQC}^{++}$ and a  classical state $\sigma$, the free changed (classical) variables in $C$ with respect to  $\sigma$ is 
	denoted by $\mathit{fcv}\parens*{C,\sigma}$, and $\mathit{fcv}(\cdot,\cdot): \calC'\rightarrow 2^{\calC\calV}$  is defined as the least element $f\in \calG$ (with respect to the order $\sqsubseteq$)  that satisfies  the following conditions:
	\begin{enumerate}
		\item 
			$f\parens*{C,\sigma}\supseteq \emptyset$ if $C\equiv \mathbf{skip}\mid U[\overline{q}]\mid P(\overline{t})$.
		\item
			$f\parens*{C,\sigma}\supseteq \overline{x'}$ if $C\equiv \overline{x}:=\overline{t}$,
			where $x'_i=y_i$ if $x_i=y_i$ for some basic classical variable $y_i$,
			and $x'_i=y_i\bracks*{\sigma\parens*{e_i}}$ if $x_i=y_i\bracks*{e_i}$ for some classical array variable $y_i$ and expression $e_i$.
		\item
			$f\parens*{C_1;C_2,\sigma}\supseteq f\parens*{C_1,\sigma}\cup f\parens*{C_2,\sigma'}$,
			where $\sigma'$ satisfies $\parens*{C_1,\sigma,\ket{\psi}}\rightarrow^* \parens*{\downarrow, \sigma',\ket{\psi}}$ for any quantum state $\ket{\psi}$.
		\item
			$f\parens*{\mathbf{if}\ b\ \mathbf{then}\ C_1\ \mathbf{else}\ C_2\ \mathbf{fi},\sigma}\supseteq 
			\begin{cases}
				f\parens*{C_1,\sigma}, & \sigma\models b,\\
				f\parens*{C_2,\sigma}, & \sigma\models \neg b.
			\end{cases}$
		\item
			$f\parens*{\mathbf{while}\ b\ \mathbf{do}\ C\ \mathbf{od},\sigma}\supseteq 
			\begin{cases}
				f\parens*{C;\mathbf{while}\ b\ \mathbf{do}\ C\ \mathbf{od},\sigma}, & \sigma\models b,\\
				\emptyset, &\sigma\models \neg b.
			\end{cases}$
		\item
			$f\parens*{\mathbf{qif}\bracks*{q}(\ket{0}\rightarrow C_0)\square (\ket{1}\rightarrow C_1)\mathbf{fiq},\sigma}\supseteq f\parens*{C_0,\sigma}\cup f\parens*{C_1,\sigma}$.
		\item
			$f\parens*{\mathbf{begin}\ \mathbf{local}\ \overline{x}:=\overline{t};\ C\ \mathbf{end}}\supseteq f\parens*{C}-\overline{x}$.
	\end{enumerate}
\end{definition}

We can also prove $\mathit{fcv}$ in \Cref{def:fcv} is well-defined, as shown in the following lemma.

\begin{lemma}
    Let $R$ be the set of all functions in $\calG$ that satisfy the conditions in \Cref{def:fcv}. Then $R$ has the least element.
\end{lemma}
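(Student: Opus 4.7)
The plan is to mirror the proof of \Cref{lmm:least-ele} almost verbatim, since the structural setup for $\mathit{fcv}$ is the same as for $\mathit{qv}$: both are defined as least fixed points in a poset $(\calG,\sqsubseteq)$ (respectively $(\calF,\sqsubseteq)$) induced by pointwise subset ordering on a function space into $2^{\calC\calV}$ (respectively $2^{\calQ\calV}$), and the defining conditions are all of the form ``$f(C,\sigma)$ contains some union of terms involving $f$ applied to subconfigurations.'' This monotonic, union-based shape is exactly what makes the intersection of an arbitrary family of solutions again a solution.

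First I would verify $R\neq \emptyset$ by exhibiting a trivial witness: the constant function $h\in\calG$ with $h(C,\sigma)=\calC\calV$ for all $(C,\sigma)\in\calC'$ vacuously satisfies each of the seven conditions in \Cref{def:fcv}, since the right-hand sides are all subsets of $\calC\calV$. Then I would define the candidate least element $g\in\calG$ by pointwise intersection,
\begin{equation*}
    g(C,\sigma)=\bigcap_{f\in R} f(C,\sigma)\qquad \text{for all }(C,\sigma)\in \calC',
\end{equation*}
and observe that by construction $g\sqsubseteq f$ for every $f\in R$. It then remains to show $g\in R$, i.e.\ $g$ itself satisfies the seven conditions.

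The verification proceeds case-by-case through conditions 1--7 of \Cref{def:fcv}, using the elementary fact that intersection distributes over (or, more precisely, sub-distributes into) union: $\bigcap_f (A_f\cup B_f)\supseteq (\bigcap_f A_f)\cup (\bigcap_f B_f)$, and for constant sets $\bigcap_f (D\cup A_f)\supseteq D\cup \bigcap_f A_f$. For instance, for condition 6,
\begin{equation*}
    g\parens*{\mathbf{qif}[q](\ket{0}\rightarrow C_0)\square(\ket{1}\rightarrow C_1)\mathbf{fiq},\sigma}=\bigcap_{f\in R}f(\cdots)\supseteq \bigcap_{f\in R}\parens*{f(C_0,\sigma)\cup f(C_1,\sigma)}\supseteq g(C_0,\sigma)\cup g(C_1,\sigma),
\end{equation*}
and the other clauses (empty-set lower bounds, assignment, sequential composition, classical if/while, block, procedure call) follow analogously. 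The set-difference in the block case ($f(C)-\overline{x}$) is monotone in $f$ as well, since $A\subseteq B$ implies $A-\overline{x}\subseteq B-\overline{x}$, so intersection passes through it in the same direction.

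The only potentially subtle step is the sequential-composition clause (condition 3), where the ``second'' classical state $\sigma'$ is determined by the semantics of $C_1$ starting from $\sigma$; I would observe that $\sigma'$ is uniquely determined by $(C_1,\sigma)$ (and is independent of the quantum state, as $\parens*{C_1,\sigma,\ket{\psi}}\rightarrow^* \parens*{\downarrow, \sigma',\ket{\psi}}$), so the same $\sigma'$ appears on the right-hand side for every $f\in R$, allowing the intersection to be pulled inside the union exactly as above. I expect this to be the main technical subtlety, but it is already implicitly handled in the analogous clause of \Cref{lmm:least-ele} and requires no new ideas.
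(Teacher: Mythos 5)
Your proposal is correct and follows essentially the same route as the paper, which simply defers to the proof of \Cref{lmm:least-ele}: exhibit the constant function $h(C,\sigma)=\calC\calV$ to show $R\neq\emptyset$, take the pointwise intersection $g$, and verify each condition using sub-distributivity of intersection over union (plus monotonicity of the set-difference in the block clause). Your added remarks on the block case and on the determinacy of $\sigma'$ in the sequential clause are sound and consistent with what the paper's argument implicitly relies on.
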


\begin{proof}
    The proof is very similar to that of \Cref{lmm:least-ele} and thus omitted.
\end{proof}

\section{Details of Quantum Register Machine}
\label{sec:details_quantum_register_machine}

In this appendix, we present further details of the quantum register machine,
in addition to \Cref{sec:quantum_register_machine}.

\subsection{Elementary Operations on Registers}
\label{sub:elementary_operations_on_registers}

In \Cref{sub:quantum_registers} we have briefly mentioned that
the quantum register machine can perform a series of elementary operations on registers,
each taking time $T_{\textup{reg}}$.
Now let us make these elementary operations more specific as follows.

\begin{definition}[Elementary operations on registers]
	\label{def:ele-op-reg}
	The quantum register machine can perform the following elementary operations on registers.
	\begin{itemize}
		\item 
			(Reversible elementary arithmetic):
			For binary operator $*\in\braces*{\oplus,+,-}$, let
                \begin{equation*}
                    U_{*}\parens*{r_1,r_2}:=\sum_{x,y} \ket{x*y}\!\bra{x}_{r_1}\otimes \ket{y}\!\bra{y}_{r_2}
                \end{equation*}
			for distinct registers $r_1\neq r_2$.
            Here, $\oplus$ denotes the XOR operator.
			Also, let $U_{\textup{neg}}\parens*{r}:=\sum_{x}\ket{-x}\!\bra{x}_{r}$ for register $r$.
		\item
			(Reversible versions of possibly irreversible arithmetic):
			Let $\calO\calP$ be a fix set of concerned unary and binary arithmetic  operators 
			with $\abs*{\calO\calP}=O\parens*{1}$.
			For unary operator $\mathit{op}\in \calO\calP$, let
			$U_{\mathit{op}}\parens*{r_1,r_2}:=\sum_{x,y}\ket{x\oplus \parens*{\mathit{op}\ y}}\!\bra{x}_{r_1}\otimes
			\ket{y}\!\bra{y}_{r_2}$ for distinct registers $r_1,r_2$.

			For binary operator $\mathit{op}\in \calO\calP$, let
			$U_{\mathit{op}}\parens*{r_1,r_2,r_3}:=\sum_{x,y,z}\ket{x\oplus \parens*{y\ \mathit{op}\ z}}\!\bra{x}_{r_1}\otimes \ket{y}\!\bra{y}_{r_2}\otimes
			\ket{z}\!\bra{z}_{r_3}$ for distinct registers $r_1,r_2,r_3$.
		\item
			(Swap):
			Let the swap unitary
			$U_{\textup{swap}}\parens*{r_1,r_2}:=\sum_{x,y}\ket{y}\!\bra{x}_{r_1}\otimes\ket{x}\!\bra{y}_{r_2}$
			for distinct registers $r_1,r_2$.
		\item
			(Unitary gate):
			For unitary gate specified by $G\in \calG$ (see also \Cref{sub:quantum_unitary_gate}),
			let $U_G\parens*{r}$ be its corresponding unitary applied on register $r$.
		\item
			(Controlled operations):
			For a register $r$ and an elementary operation (unitary) $U$ not acting on $r$, let the controlled versions of $U$ be
			\begin{align*}
				\bullet\parens*{r}\textup{-}U:=\sum_{x\neq 0}\ket{x}\!\bra{x}_r \otimes U+\ket{0}\!\bra{0}\otimes \Id,\\
				\circ\parens*{r}\textup{-}U:=\sum_{x\neq 0}\ket{x}\!\bra{x}_r \otimes \Id +\ket{0}\!\bra{0}\otimes U.
			\end{align*}
	\end{itemize}
	Suppose that performing any of the above elementary operations takes time $T_{\textup{reg}}$.
\end{definition}

Elementary quantum operations in \Cref{def:ele-op-reg} will be eventually implemented at a lower level by standard quantum circuits composed of one- and two-qubit gates. Further analysis is shown in \Cref{sub:quantum_circuit_complexity_for_elementary_operations}.

\subsection{Details of the Low-Level Language QINS}
\label{sub:details_qins}

\begin{figure}
    \centering
    \includegraphics[width=0.3\textwidth]{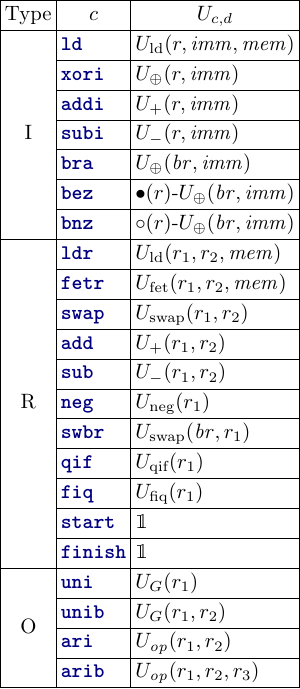}
    \caption{Full list of unitaries $U_{c,d}$ for implementing instructions in $\mathbf{QINS}$.}
    \label{fig:full-list-qins}
\end{figure}

In \Cref{sub:the_low_level_language_qins}, 
we have shown the list of instructions in the low-level language $\mathbf{QINS}$ 
and their classical effects in \Cref{fig:table-ins}.
We have also presented  several selected examples of instructions and their corresponding implementation unitaries in \Cref{fig:table-type-eg}. 
In this appendix, we provide further details.

Let us first have more discussions about the classical effects of instructions in \Cref{fig:table-ins}.

\begin{itemize}
	\item 
		Load instructions are used to retrieve information (including instructions and data) from the QRAM.
        In particular,
		\hlt{ld}\texttt{(r,i)} swaps the value $M_i$ at address $i$ specified by the immediate number and the value in register $r$.
		\hlt{ldr}\texttt{(r$_1$,r$_2$)} works similarly with the address specified by the value in register $r_2$.
        \hlt{fetr}\texttt{(r$_1$,r$_2$)} instead copies (via the XOR operator $\oplus$) the value $M_i$ at address $i$ specified by the immediate number into register $r$.
	\item
		Unitary gate instructions are used to apply elementary quantum gates 
		(chosen from the fixed set $\calG$, specified by $\mathbf{RQC}^{++}$ in \Cref{sub:quantum_unitary_gate}) on registers.
		In particular, \hlt{uni}\texttt{(G,r$_1$)} applies the single-qubit unitary gate specified by $G$ on  register $r_1$.
		\hlt{unib}\texttt{(G,r$_1$,r$_2$)} applies the two-qubit unitary gate specified by $G$ on registers $r_1$ and $r_2$.
	\item
		Arithmetic instructions are used to perform word-level elementary arithmetic operations,
		including reversible and (possibly) irreversible ones (see also \Cref{def:ele-op-reg}).

		Reversible arithmetic includes XOR (\hlt{xori} and \hlt{xor}),
		addition (\hlt{addi} and \hlt{add}), subtraction (\hlt{subi} and \hlt{sub}), negation (\hlt{neg})
		and swap (\hlt{swap}).
		These arithmetic operations can also take the immediate number $i$ as one of the inputs.

		A constant number of flexible choices of irreversible arithmetic operations is also allowed.
		In particular, \hlt{ari}\texttt{(op,r$_1$,r$_2$)} copies (by the XOR operator $\oplus$) the result of $\parens*{\mathit{op}\ r_2}$ into register $r_1$,
		where $\mathit{op}$ specifies an unary arithmetic operation.
		\hlt{arib}\texttt{(op,r$_1$,r$_2$,r$_3$)} works similarly with the $\mathit{op}$ specifying a binary arithmetic operation.
		In both cases, we assume $\mathit{op}\in \calO\calP$ for a fixed set $\calO\calP$ of $O(1)$ size (see also \Cref{def:ele-op-reg}).
		Here, the irreversible arithmetic operations are actually implemented reversibly by introducing garbage data.
	\item
		Branch instructions are used to realise transfer of control flow.
		In particular, \hlt{bra}\texttt{(i)} copies (by the XOR operator $\oplus$) 
        the branch offset $i$ specified by the immediate number $i$ into register $\mathit{br}$.
		\hlt{bez}\texttt{(r,i)} and \hlt{bnz}\texttt{(r,i)} are conditional versions of \hlt{bra}:
		the former performs \hlt{bra} when the value in register $r$ is zero;
		while the latter works in the non-zero case.
        \hlt{swbr}\texttt{(r)} swaps the values in register $r$ and register $\mathit{br}$.
	\item
		Qif instructions are used to update the registers $\mathit{qifv}$ and $\mathit{qifw}$.
        Their effects are already described in \Cref{sec:execution}.
	\item
		Special instructions are used to specify the start and finishing points of the main program.
\end{itemize}

Now we provide more details of implementing instructions in $\mathbf{QINS}$.
Recall that to execute an instruction stored in register $\mathit{ins}$,
the quantum register machine performs a unitary
\begin{equation*}
    U_{\textup{dec}} = \sum_{c} \ket{c}\!\bra{c} \otimes \sum_d \ket{d}\!\bra{d}\otimes U_{c,d},
\end{equation*}
which is previously defined in \Cref{fig:table-type-eg}.
Here, $c$ ranges over possible values of the section $\mathit{opcode}$ (see \Cref{fig:table-type-eg}) of $\mathit{ins}$, which corresponds to the name of the instruction in $\mathit{ins}$.
Depending on the type (I/R/O) of the instruction (indicated by $c$; see \Cref{fig:table-type-eg}), $d$ ranges over possible values of other sections in $\mathit{ins}$:
\begin{itemize}
    \item 
        If the instruction is of type I, then $d$ ranges over possible values of the section $\mathit{reg}$ in $\mathit{ins}$.
    \item
        If the instruction is of type R, then $d$ ranges over possible values of the sections
        $\mathit{reg}_1$ and $\mathit{reg}_2$ in $\mathit{ins}$.
    \item 
        If the instruction is of type O, then $d$ ranges over possible values of the sections $\mathit{para}$, $\mathit{reg}_1$, $\mathit{reg}_2$ and $\mathit{reg}_3$ in $\mathit{ins}$.
\end{itemize}

In \Cref{fig:full-list-qins}, we present
the full list of unitaries $U_{c,d}$ for implementing instructions in $\mathbf{QINS}$.
Most unitaries on the third columns are defined in \Cref{def:ele-op-reg}.
The unitaries $U_{\textup{qif}}$ and $U_{\textup{fiq}}$ for implementing instructions \hlt{qif} and \hlt{fiq} are already defined in \Cref{sec:execution}.

\section{Details of Compilation}
\label{sec:details_compilation}

In this appendix, we provide further details of the compilation of quantum recursive programs written in the language  $\mathbf{RQC}^{++}$.

\subsection{Details of High-Level Transformations}
\label{sub:details_high_level_transformations}

In \Cref{sub:high-level-transformations},
we have shown a series of high-level transformations to be performed in \Cref{fig:compdiagram},
and only selected the first two steps for illustration.
Now we delineate every step in the high-level transformations,
as well as the simplified program syntax after each step.
Note that the last four steps shown in \Cref{sub:remov_exp_cnd_para,sub:remove_simul_assignment,sub:remove_expr_in_subscription,sub:reduce_variables_in_expr} are rather simple and standard (see e.g., the textbook~\cite{AMSJ07}),
but we describe them in our context (adapted to support some features of the language $\mathbf{RQC}^{++}$) for completeness.

\subsubsection{Replacing Quantum Branches by Procedure Calls}

The first step is to replace the programs used in every quantum branch of the $\mathbf{qif}$ statements by procedure calls. We already showed this step in \Cref{sub:replace_qif_proc}.
For every $\mathbf{qif}$ statement, if $C_0,C_1$ are not procedure identifiers or $\mathbf{skip}$ statements, then we perform the replacement:
\begin{equation*}
	\mathbf{qif}[q](\ket{0}\rightarrow C_0)\square (\ket{1}\rightarrow C_1)\mathbf{fiq}
	\qquad\Rightarrow\qquad
	\mathbf{qif}[q](\ket{0}\rightarrow P_0)\square (\ket{1}\rightarrow P_1)\mathbf{fiq},
\end{equation*} 
where $P_0,P_1$ are fresh procedure identifiers,
and we add new procedure declarations $P_i\Leftarrow C_i$ (for $i\in \braces*{0,1}$) to $\calP$.
If only one of $C_0,C_1$ is procedure identifier or $\mathbf{skip}$,
then we only perform the replacement for the other branch.
Note that \Cref{cnd:qif-external-formal,cnd:no-fcv-qif,cnd:no-fcv-proc-body-formal} are kept after this step.
In particular, \Cref{cnd:no-fcv-qif} promises that $\mathit{fcv}(C_0,\sigma)=\mathit{fcv}(C_1,\sigma)=\emptyset$ for any classical state $\sigma$,
which therefore implies the newly introduced procedure declarations $P_i\Leftarrow C_i$ satisfy \Cref{cnd:no-fcv-proc-body-formal}.

After this step, the program $\calP=\braces*{P(\overline{u})\Leftarrow C}_P$ 
has the following simplified syntax:
\begin{equation}
    \label{eq:syntax-step-1}
	\begin{split}
		C::= &\ \mathbf{skip} \mid \overline{x}:=\overline{t}\mid U\bracks*{\overline{q}}\mid C_0; C_1\mid P(\overline{t})\mid \mathbf{if}\ b\ \mathbf{then}\ C_0\ \mathbf{else}\ C_1\ \mathbf{fi}\mid \mathbf{while}\ b\ \mathbf{do}\ C\ \mathbf{od}\\
				 &\mid \mathbf{begin}\ \mathbf{local}\ \overline{x}:=\overline{t}; C\ \mathbf{end}\mid\mathbf{qif}\bracks*{q}(\ket{0}\rightarrow P_0)\square (\ket{1}\rightarrow P_1)\mathbf{fiq}.
	\end{split}
\end{equation}

\subsubsection{Unrolling Nested Block Statements}

The second step is to unroll all nested block statements.
We already showed this step in \Cref{sub:unroll_nested_blocks}.
After this step, we are promised that the program no longer contains block statements,
but when the program is implemented, 
we need to perform uncomputation of classical variables at the end of every procedure body
for the program to preserve its original semantics.
To this end,
for every procedure declaration $P\parens*{\overline{u}}\Leftarrow C'\in \calP$
and every block statement $B$ appearing in $C'$,
we perform the replacement:
\begin{equation*}B\equiv 
	\mathbf{begin}\  \mathbf{local}\ \overline{x}:=\overline{t};\ C\ \mathbf{end}
	\qquad
	\Rightarrow
	\qquad
	\overline{x'}:=\overline{t};
	C\bracks*{x'/x},
\end{equation*}
where $\overline{x'}$ is a list of fresh variables, and $\bracks*{x'/x}$ stands for replacing variable $x$ by $x'$.
Moreover, we need to append $\overline{x'}:=\overline{0}$ 
at the beginning of $C'$.

After this step, the program $\calP=\braces*{P(\overline{u})\Leftarrow C}_P$ 
has the following simplified syntax:
\begin{equation}
    \label{eq:syntax-step-2}
	\begin{split}
		C::= &\ \mathbf{skip} \mid \overline{x}:=\overline{t}\mid U\bracks*{\overline{q}}\mid C_0; C_1\mid P(\overline{t})\mid \mathbf{if}\ b\ \mathbf{then}\ C_0\ \mathbf{else}\ C_1\ \mathbf{fi}\mid \mathbf{while}\ b\ \mathbf{do}\ C\ \mathbf{od}\\
				 &\mid \mathbf{qif}\bracks*{q}(\ket{0}\rightarrow P_0)\square (\ket{1}\rightarrow P_1)\mathbf{fiq}.
	\end{split}
\end{equation}
It is worth mentioning again that 
the program is technically in some new language with the same syntax as $\mathbf{RQC}^{++}$,
but whose semantics requires the uncomputation of classical variables at the end of every procedure body.
The uncomputation is to be automatically done in the high-to-mid-level translation (see \Cref{sub:high_level_to_mid_level_translation}). 

\subsubsection{Removing Expressions in Conditions and Parameters}
\label{sub:remov_exp_cnd_para}

The third step is to remove the expression $b$ in every $\mathbf{if}$ statement and every $\mathbf{while}$ loop,
and the list of expressions $\overline{t}$ in every procedure call $P\parens*{\overline{t}}$.
After this step, 
classical expressions only appear in the assignment statements $\overline{x}:=\overline{t}$
and the subscriptions of variables and procedure identifiers.
\begin{itemize}
	\item
		For every $\mathbf{if}$ statement, we perform the replacement:
		\begin{equation*}
                \mathbf{if}\ b\ \mathbf{then}\ C_0\ \mathbf{else}\ C_1\ \mathbf{fi}
			\qquad
			\Rightarrow 
			\qquad
                x:=b;\mathbf{if}\ x\ \mathbf{then}\ C_0\ \mathbf{else}\ C_1\ \mathbf{fi}
		\end{equation*}
		where $x$ is a fresh boolean variable.
	\item
		For every $\mathbf{while}$ statement, we perform the replacement:
		\begin{equation*}
            \mathbf{while}\ b\ \mathbf{do}\ C\ \mathbf{od}
			\qquad
			\Rightarrow
			\qquad
			x:=b;
			\mathbf{while}\ x\ \mathbf{do}\ C;x:=b\ \mathbf{od}
		\end{equation*}
		where $x$ is a fresh boolean variable.
	\item
		For every procedure call $P\parens*{\overline{t}}$, we perform the replacement:
		\begin{equation*}
			P\parens*{\overline{t}}
			\qquad\Rightarrow\qquad
			\overline{x}:=\overline{t}; P\parens*{\overline{x}},
		\end{equation*}
		where $\overline{x}$ is a list of fresh variables.
\end{itemize}

After this step, the program $\calP=\braces*{P(\overline{u})\Leftarrow C}_P$ 
has the following simplified syntax:
\begin{equation*}
	\begin{split}
		C::= &\ \mathbf{skip} \mid \overline{x}:=\overline{t}\mid U\bracks*{\overline{q}}\mid C_0; C_1\mid P(\overline{x})\mid \mathbf{if}\ x\ \mathbf{then}\ C_0\ \mathbf{else}\ C_1\ \mathbf{fi}\mid \mathbf{while}\ x\ \mathbf{do}\ C\ \mathbf{od}\\
				 &\mid \mathbf{qif}\bracks*{q}(\ket{0}\rightarrow P_0)\square (\ket{1}\rightarrow P_1)\mathbf{fiq}.
	\end{split}
\end{equation*}

\subsubsection{Removing Simultaneous Assignments}
\label{sub:remove_simul_assignment}

The fourth step is to remove every simultaneous assignment $\overline{x}:=\overline{t}$ 
by converting it to a series of unary assignments $x:=t$.
To do this, for every $\overline{x}:=\overline{t}$ with $\overline{x}\equiv x_1,x_2,\ldots,x_n$
and $\overline{t}\equiv t_1,t_2,\ldots,t_n$, we perform the replacement
\begin{equation*}
	\begin{split}
		&\overline{x}:=\overline{t}
	\end{split}
	\qquad
	\Rightarrow
	\qquad
	\begin{split}
		&x_1':=t_1;\ldots; x_n':=t_n;\\
		&x_1:=x_1';\ldots; x_n:=x_n';
	\end{split}
\end{equation*}
where $x_1',\ldots,x_n'$ are fresh variables.

After this step, the program $\calP=\braces*{P(\overline{u})\Leftarrow C}_P$ 
has the following simplified syntax:
\begin{equation*}
	\begin{split}
		C::= &\ \mathbf{skip} \mid x:=t\mid U\bracks*{\overline{q}}\mid C_0; C_1\mid P(\overline{x})\mid \mathbf{if}\ x\ \mathbf{then}\ C_0\ \mathbf{else}\ C_1\ \mathbf{fi}\mid \mathbf{while}\ x\ \mathbf{do}\ C\ \mathbf{od}\\
				 &\mid \mathbf{qif}\bracks*{q}(\ket{0}\rightarrow P_0)\square (\ket{1}\rightarrow P_1)\mathbf{fiq}.
	\end{split}
\end{equation*}

\subsubsection{Removing Expressions in Subscriptions}
\label{sub:remove_expr_in_subscription}

The fifth step is to remove expressions in the subscriptions of subscripted variables.
Note that subscripted quantum variables only appear in the $U[\overline{q}]$ statements 
and the $\mathbf{qif}[q]\parens*{\ket{0}\rightarrow C_0}\square\parens*{\ket{1}\rightarrow C_1}\mathbf{fiq}$ statements,
and subscripted procedure identifiers only appear in the procedure calls $P\parens*{\overline{x}}$.
By the previous steps of high-level transformations,
subscripted classical variables only appear in the assignments $x:=t$
and the subscriptions of quantum variables and procedure identifiers.

\begin{enumerate}
	\item
		For every unitary gate statement $U\bracks*{\overline{q}}$,
		suppose that subscripted quantum variables appearing in it are $q_1[e_1],\ldots,q_n[e_n]$ 
		(by our assumption in \Cref{sub:quantum_unitary_gate}, $n\leq 2$),
		where $q_1,\ldots,q_n$ are quantum arrays and $e_1,\ldots,e_n$ are expressions.
		We perform the replacement:
		\begin{equation*}
			\begin{split}
				&U\bracks*{\overline{q}}
			\end{split}
			\qquad\Rightarrow\qquad
			\begin{split}
				&x_1:=e_1;\ldots;x_n:=x_n;\\
				&\parens*{U\bracks*{q}}\bracks*{q_1[x_1]/q_1[e_1],\ldots,q_n[x_n]/q_n[e_n]},
			\end{split}
		\end{equation*}
		where $x_1,\ldots,x_n$ are fresh variables,
		and $\parens*{U\bracks*{\overline{q}}}\bracks*{q_1[x_1]/q_1[e_1],\ldots,q_n[x_n]/q_n[e_n]}$
		represents replacing every expression $e_i$ in the subscription of $q_i$ by variable $x_i$ in the statement $U\bracks*{\overline{q}}$.
    \item 
        For every $\mathbf{qif}[q]\parens*{\ket{0}\rightarrow C_0}\square\parens*{\ket{1}\rightarrow C_1}\mathbf{fiq}$ statement
        with $q=q'[e]$ for some quantum array $q'$ and expression $e$,
        we perform the replacement:
        \begin{equation*}
				\mathbf{qif}[q]\parens*{\ket{0}\rightarrow C_0}\square\parens*{\ket{1}\rightarrow C_1}\mathbf{fiq}
			\quad\Rightarrow\quad
            x:=e;\mathbf{qif}\bracks*{q'[x]}\parens*{\ket{0}\rightarrow C_0}\square\parens*{\ket{1}\rightarrow C_1}\mathbf{fiq},
		\end{equation*}
        where $x$ is a fresh variable. 
	\item
		For every procedure call $P\parens*{\overline{x}}$ 
		(note that at this point, actual parameters in procedure calls are variables)
		with $P=Q[t]$ for some procedure array $Q$ and expression $t$,
		we perform the replacement:
		\begin{equation*}
			P\parens*{\overline{x}}
			\qquad\Rightarrow\qquad
			y:=t; Q[y]\parens*{\overline{x}},
		\end{equation*}
		where $x$ is a fresh variable.
	\item 
		For every classical assignment $x:=t$,
		suppose that subscripted classical variables appearing in it are $y_1[e_1],y_2[e_2],\ldots,y_n[e_n]$,
		where $y_1,\ldots,y_n$ are classical arrays and $e_1,\ldots,e_n$ are expressions.
		We perform the replacement:
		\begin{equation*}
			\begin{split}
				&x:=t
			\end{split}
			\qquad\Rightarrow\qquad
			\begin{split}
				&z_1:=e_1;\ldots; z_n:=e_n;\\
				&\parens*{x:=t}\bracks*{y_1[x_1]/y_1[e_1],\ldots,y_n[x_n]/y_n[e_n]},
			\end{split}
		\end{equation*}
		where $z_1,\ldots,z_n$ are fresh variables,
		and $\parens*{x:=t}\bracks*{y_1[z_1]/y_1[e_1],\ldots,y_n[z_n]/y_n[e_n]}$
		represents replacing every expression $e_i$ in the subscription of $y_i$ by variable $z_i$ in the assignment $x:=t$.
		Note that here $e_1,\ldots,e_n$ no more contain subscripted classical variables
		because of our assumption in \Cref{sub:arrays}.
\end{enumerate}

After this step, every subscription in the program will be a classical variable.

\subsubsection{Reducing Variables in Expressions}
\label{sub:reduce_variables_in_expr}

The sixth step is to replace every assignments $x:=t$ with $t$ involving multiple variables
by a series of equivalent assignments
\begin{equation*}
	x_1:=t_1;\ \ldots\ x_n:=t_n;\ x:=t_{n+1},
\end{equation*}
such that every expression $t_i$ contains at most two variables;
that is, either $t_i\equiv a\ \mathit{op}\ b$, where $a$ and $b$ are variables or constants,
and $\mathit{op}\in \calO\calP$ (see also \Cref{def:ele-op-reg}) is an elementary binary operator;
or $t_i\equiv \mathit{op}\ a$, where $a$ is a variable or constant,
and $\mathit{op}\in \calO\calP$ is an elementary unary operator.
We will not bother describing the details of this standard conversion.

This step is the last step of the high-level transformations.
The transformed program is denoted by $\calP_h=\braces*{P\parens*{\overline{u}}\Leftarrow C}_P$,
and has the following simplified syntax:
\begin{equation*}
	\begin{split}
		C::= &\ \mathbf{skip} \mid x:=t\mid U\bracks*{\overline{q}}\mid C_0; C_1\mid P(\overline{x})\mid \mathbf{if}\ x\ \mathbf{then}\ C_0\ \mathbf{else}\ C_1\ \mathbf{fi}\mid \mathbf{while}\ x\ \mathbf{do}\ C\ \mathbf{od}\\
				 &\mid \mathbf{qif}\bracks*{q}(\ket{0}\rightarrow P_0)\square (\ket{1}\rightarrow P_1)\mathbf{fiq}.
	\end{split}
\end{equation*}
Moreover, every subscripted variable and procedure identifier has a basic classical variable as its subscription (e.g., $z[x], q[y], P[w]$),
and every expression has the form $t\equiv \mathit{op}\ x$ or $t\equiv x\ \mathit{op}\ y$.

\subsection{Details of High-Level to Mid-Level Translation}
\label{sub:further_details_of_high_level_to_mid_level_translation}

In \Cref{sub:high_level_to_mid_level_translation},
we have presented selected examples of the high-to-mid-level translation.
Now we provide further details of the translation,
as well as the full translation of the quantum multiplexor program previously shown in \Cref{fig:q-multiplexor} (part of the translation already presented in \Cref{fig:q-multiplexor-after}).

We use $\mathrm{mid}\braces*{D}$ to denote the high-to-mid-level translation of a statement (or declaration) $D$ in the high-level language $\mathbf{RQC}^{++}$.
The definition of $\mathrm{mid}\braces*{\cdot}$ is recursive, 
which further involves $\mathrm{init}\braces*{\cdot}$ and $\mathrm{uncp}\braces*{\cdot}$:
the former stands for the initialisation of formal parameters in procedure declarations,
and the latter stands for the uncomputation of classical variables 
(as required by the slightly changed program semantics after high-level transformations in \Cref{sub:details_high_level_transformations};
see also the beginning of \Cref{sub:high-level-transformations}).

In \Cref{fig:full-high-to-mid},
we present the full list of the high-to-mid-level translation $\mathrm{mid}\braces*{\cdot}$, as well as the initialisation $\mathrm{init}\braces*{\cdot}$ and the uncompuation $\mathrm{uncp}\braces*{\cdot}$. We further explain as follows.

\begin{itemize}
    \item 
        For $D\equiv \mathbf{skip}\mid C_1;C_2\mid U[q]\mid U[q_1q_2]$,
        the translation $\mathrm{mid}\braces*{D}$ and uncomputation $\mathrm{uncp}\braces*{D}$ of are self-explanatory. 
        Note that our uncomputation is only for classical variables.
    \item 
        For $D\equiv x:=\mathit{op}\ y\mid x:=y\ \mathit{op}\ z$,
        in the translation $\mathrm{mid}\braces*{D}$,
        the standard technique of introducing garbage data in reversible computing~\cite{Landauer61,Bennett73} is used.
        We first compute the new value of $x$ into a fresh variable $w$,
        then swap the values of $x$ and $w$. 
        Finally, the old value of $x$ will be pushed into the stack.
        The uncomputation $\mathrm{uncp}\braces*{D}$ is simply the inverse of $\mathrm{mid}\braces*{D}$.
    \item 
        For $D\equiv \mathbf{if}\ x\ \mathbf{then}\ C_1\ \mathbf{else}\ C_2\ \mathbf{fi}$, 
        the construction of $\mathrm{mid}\braces*{D}$
        is inspired by \cite{Axelsen11}.
        Note that by the high-level transformation in \Cref{sub:remov_exp_cnd_para},
        variable $x$ is not changed by $C_1$ and $C_2$.
        So, we can use the value of $x$ to determine which branch ($x=0,1$) the control flow has run through.
        Here, the mid-level instruction \hlt{brc} (a variant of \hlt{bra}) is used in pair with conditional branch instructions \hlt{bez} and \hlt{bnz}
        to indicate the variable $x$ for the condition,
        which will be helpful later in the mid-to-low-level translation.
        The uncomputation $\mathrm{uncp}\braces*{D}$ is similarly constructed.
    \item 
        For $D\equiv \mathbf{qif}[q]\parens*{\ket{0}\rightarrow C_0}\square\parens*{\ket{1}\rightarrow C_1}\mathbf{fiq}$,
        the translation $\mathrm{mid}\braces*{D}$ is similar to that of the $\mathbf{if}$ statement. 
        
        What is \textit{new} is our introduction of a pair of instructions \hlt{qif}\texttt{(q)} and \hlt{fiq}\texttt{(q)},
        which indicate the creation and join of quantum branching controlled by the quantum coin $q$.
        They are relevant to operations on the qif table in the partial evaluation 
        (described in \Cref{sec:partial-evaluation}) and the execution (described in \Cref{sec:execution}).
        Note that the uncomputation $\mathrm{uncp}\braces*{D}$ of the $\mathbf{qif}$ statement is $\emptyset$ due to \Cref{cnd:no-fcv-qif} in \Cref{sub:conditions_for_well_defined_semantics}:
        there are no free changed variables and therefore the uncomputation of classical variables is not required.
    \item
        For $D\equiv \mathbf{while}\ x\ \mathbf{do}\ C\ \mathbf{od}$,
        in the translation $\mathrm{mid}\braces*{D}$,
        a fresh variable $y$ is introduced to records the number of steps taken in the loop. Note that by the high-level transformation in \Cref{sub:remov_exp_cnd_para}, $x$ is not changed by $C$.
        So, we can use the value of $x$ and $y$ to determine where the control flow comes from.
        The construction of the uncomputation $\mathrm{uncp}\braces*{D}$
        exploits the same fresh variable $y$ as the one in $\mathrm{mid}\braces*{D}$.
    \item 
        For $D\equiv P(\overline{x})$, 
        in the translation $\mathrm{mid}\braces*{D}$,
        we first push $n$ actual parameters $x_1,\ldots,x_n$
        into the stack.
        Then, we branch to the declaration of the procedure $P$,
        whose address is specified by the variable $P.\mathit{ent}$.
        Finally, when the procedure call is returned,
        we pop the actual parameters from the stack. 
        Note that the procedure $P$ here can be subscripted,
        e.g., $P=Q[y]$ for some procedure array $Q$ and variable $y$.
        In this case, $P.\mathit{ent}=Q.\mathit{ent}[y]$ is also subscripted.
        The uncomputation $\mathrm{uncp}\braces*{D}=\emptyset$,
        due to \Cref{cnd:no-fcv-proc-body-formal} in \Cref{sub:conditions_for_well_defined_semantics}.
    \item 
        For $D\equiv P(\overline{u})\Leftarrow C$,
        we need to handle the initialisation of formal parameters $\overline{u}$ and the automatic uncomputation of classical variables at the end of the procedure body $C$ (see \Cref{sub:details_high_level_transformations}).
        The design of control flow in the translation $\mathrm{mid}\braces*{D}$
        is inspired by that in the translation of classical reversible languages~\cite{Axelsen11}.
        Specifically, the entry point and exit point to the declaration of procedure $P$ are the same, specified by $P.\mathit{ent}$.
        Instruction \hlt{swbr} is used to update the return offset register $\mathit{ro}$ (at both entry and exit), whose value will be temporarily pushed into the stack (see also \Cref{sub:QRAM-layout}) for supporting recursive procedure calls.
        
        What is \textit{new} here is our automatic uncomputation of classical variables, 
        done by $\mathrm{uncp}\braces*{C}$ at the end of the procedure body $C$.
    \item 
        The initialisation $\mathrm{init}\braces*{\overline{u}}$ of 
        formal parameters $\overline{u}$ is rather standard:
        we simply pop the actual parameters from the stack,
        then update the formal parameters,
        followed by pushing their old values into the stack. 
\end{itemize}

\begin{figure}
    \centering
    \includegraphics[width=0.73\linewidth]{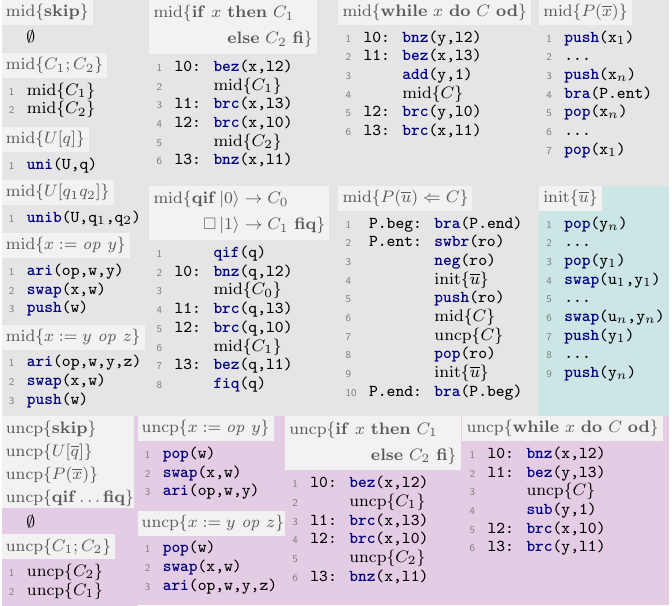}
    \caption{Full list of the high-to-mid-level translation $\mathrm{mid}\braces*{\cdot}$.
    Here, $\mathrm{init}\braces*{\cdot}$ and $\mathrm{uncp}\braces*{\cdot}$ denote the initialisation of formal parameters and uncomputation of classical variables,
    respectively.}
    \label{fig:full-high-to-mid}
\end{figure}

Now let us apply the high-to-mid-level translation to the transformed quantum multiplexor program on the LHS of \Cref{fig:q-multiplexor-after}.
This yields the translated program in \Cref{fig:q-multiplexor-full-mid},
which is annotated to show the correspondence with the program on the LHS of \Cref{fig:q-multiplexor-after}.
For simplicity of presentation, we have also done some manual optimisation on the translated program. 

\begin{figure}
    \centering
    \includegraphics[width=0.9\linewidth]{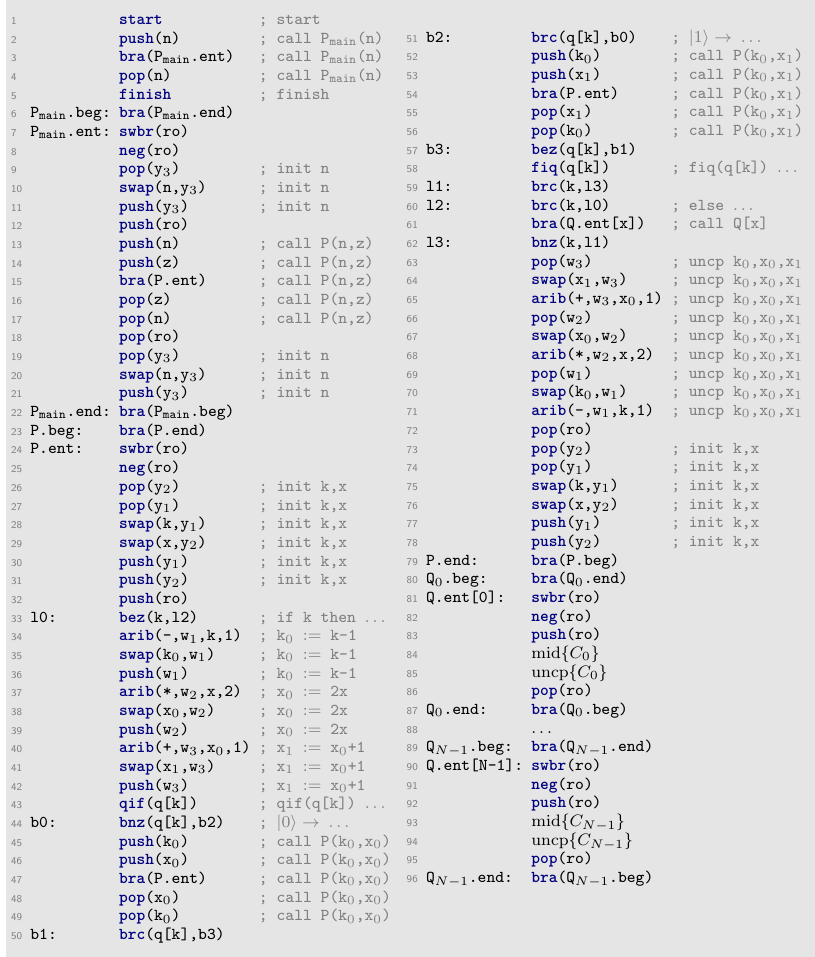}
    \caption{Full high-to-mid-level translation of the quantum multiplexor program.
    The original program is in \Cref{fig:q-multiplexor},
    with part of the high-to-mid-level translation previously presented in \Cref{fig:q-multiplexor-after}.}
    \label{fig:q-multiplexor-full-mid}
\end{figure}

\subsection{Symbol Table and Memory Allocation of Variables}
\label{sub:symbol_table_and_memory_allocation_of_variables}

In \Cref{sub:mid_level_to_low_level_translation},
we only briefly mentioned the memory allocation of variables and how the symbol table will be used at runtime.
In the following, we provide more details of the symbol table and memory allocation. 

After the high-level to mid-level translation, 
we can determine the names of all variables and procedure identifiers by scanning the whole program in the mid-level language,
because the mid-to-low-level translation will no more introduce new variables or procedure identifiers.
The addresses of all variables (corresponding to their names) 
will be stored in the symbol table and loaded into the QRAM for the addressing of variables at runtime.

For classical (either basic or array) variable $x$, 
we use $@ x$ to denote the address of the variable name $x$, which falls in the symbol table section of the QRAM (see \Cref{sub:QRAM-layout}).
The memory location at address $@ x$ stores $\& x$, which stands for the address of the variable $x$.
If $x$ is an array variable, then $\& x$ will be the base address of $x$;
i.e., $\& x=\& \parens*{x[0]}$.
As an address, $\& x$ falls in the classical variable section of the QRAM,
and stores the value of $x$.
The same convention applies to any quantum variable.

\begin{figure}
	\centering
	\includegraphics[width=0.9\textwidth]{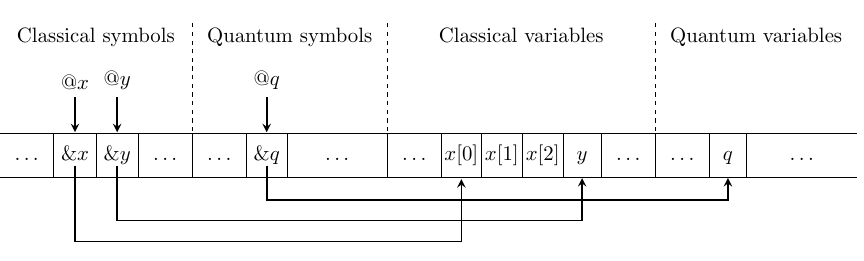}
	\caption{Example of the symbol table and memory allocation of variables in the QRAM.}
	\label{fig:symbol-table}
\end{figure}

An example of the symbol table is visualised in \Cref{fig:symbol-table}.
Here, the symbol table section is composed of classical and quantum symbols.
The memory location in the QRAM with address $@ x$ stores $\& x$,
which is the address of the memory location that stores $x[0]$, for classical array variable $x$.
Similarly, for quantum variable $q$,
the memory location with address $@ q$ stores $\& q$,
which is the address of the memory location that stores $q$.

Note that the sizes of classical and quantum arrays 
(except those classical variables $P.\mathit{ent}$ that correspond to procedure arrays $P$) 
are not determined at compile time
as the classical inputs are not yet given.
So, the values in the symbol table (i.e., addresses of all variables) 
need to be filled in after the classical inputs are considered 
(e.g., during the partial evaluation described in \Cref{sec:partial-evaluation}).
Still, for every classical variable $x$,
the address $@ x$ of the name $x$ is determined.
The same holds for every quantum variable.
Therefore, the compiled program will be completely determined and independent of classical and quantum inputs.

\subsection{Details of Mid-Level to Low-Level Translation}
\label{sub:further_details_mid_low_translation}

In \Cref{sub:mid_level_to_low_level_translation},
we have presented selected examples of the mid-to-low-level translation. In the following, we provide further details of the translation,
as well as the final compiled quantum multiplexor program (the original program was previously in \Cref{fig:q-multiplexor}).

Recall that instructions in the mid-level language can take variables and labels as inputs,
while the instructions in the low-level language $\mathbf{QINS}$ do not. 
To this end, we will use load instructions \hlt{ld}, \hlt{ldr} and \hlt{fetr} in $\mathbf{QINS}$
to first load variables from the QRAM into registers,
and then execute the instruction, followed by storing the results back to the QRAM.
Also, the mid-level language has additional instructions \hlt{push}, \hlt{pop} and \hlt{brc}
that need to be further translated.

\begin{figure}
    \centering
    \includegraphics[width=0.9\linewidth]{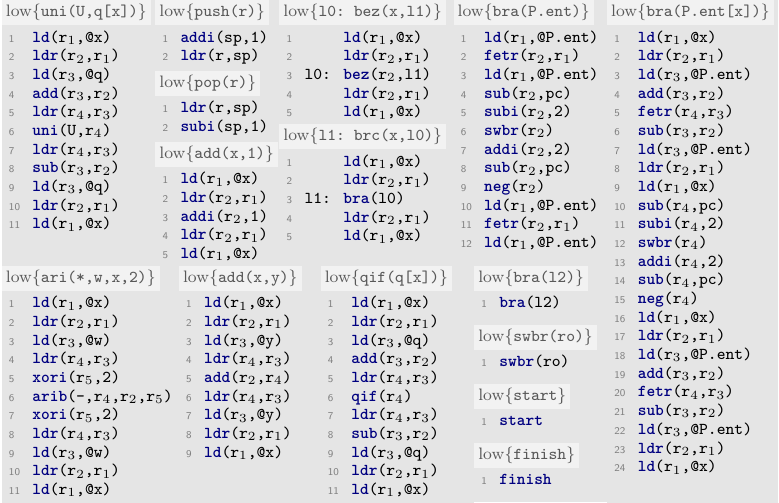}
    \caption{More examples of the mid-to-low-level translation. Here, all registers $r_i$ are free registers.}
    \label{fig:mid-to-low-full}
\end{figure}

We use $\mathrm{low}\braces*{i}$ to denote the mid-to-low-level translation of a mid-level instruction $i$. In \Cref{fig:mid-to-low-full}, we present more examples of the mid-to-low-level translation. 
Further explanations are as follows.
\begin{itemize}
    \item 
       In the translation $\mathrm{low}\braces*{\textup{\hlt{push}}\texttt{(r)}}$,
       Line~1 first increments the stack pointer $\mathit{sp}$ by $1$.
       Then, Line~2 loads the topmost stack element (whose address is specified by $\mathit{sp}$)
       into free register $r$.
       The translation $\mathrm{low}\braces*{\textup{\hlt{pop}}\texttt{(r)}}$
       works similarly.
    \item 
        In the translation $\mathrm{low}\braces*{\texttt{l0:\ }\textup{\hlt{bez}}\texttt{(x,l1)}}$,
        Lines~1--2 first load the value of variable $x$
        into free register $r_2$.
        Then, Line~3 branches to label $l0$,
        conditioned on $x\neq 0$. 
        Finally, Lines~4--5 reverse the effects of Lines~1--2.

        Note that instruction \texttt{l1:\ }\hlt{brc}\texttt{(x,l0)}
        is used in pair with instruction \texttt{l0:\ }\hlt{bez}\texttt{(x,l1)}.
        So, the free registers used in the 
        translations $\mathrm{low}\braces*{\texttt{l1:\ }\textup{\hlt{brc}}\texttt{(x,l0)}}$ and $\mathrm{low}\braces*{\texttt{l0:\ }\textup{\hlt{bez}}\texttt{(x,l1)}}$
        are also related (the same).
        This correspondence promises that registers $r_1,r_2$ are properly cleared (hence become free again) after being used.
    \item 
        In the translation $\mathrm{low}\braces*{\textup{\hlt{add}}\texttt{(x,1)}}$,
        Lines~1--2 first load the value of variable $x$
        into free register $r_2$.
        Then, Line~3 adds $1$ to $r_2$ to obtain $x+1$.
        Finally, Lines~4--5 reverse the effects of Lines~1--2.
        Note that we use \hlt{addi} in Line~3, because $1$ is an immediate number.
        
        In comparison, in the translation $\mathrm{low}\braces*{\textup{\hlt{add}}\texttt{(x,y)}}$,
        we use \hlt{add} in Line~5, because both $x,y$ are variables and loaded into registers.
    \item 
        In the translation $\mathrm{low}\braces*{\textup{\hlt{ari}}\texttt{(*,w,x,2)}}$,
        Lines~1--4 first load the values of $w$ and $x$ into free registers $r_2$ and $r_4$, respectively. Then, Line~5 prepares the immediate number $2$ in free register $r_5$.
        Line~6 performs the binary arithmetic operation $*$ and puts the result $x*2$
        into register $r_4$.
        Finally, Lines~7--11 reverse the effects of Lines~1--4
    \item 
        For instructions like \hlt{bra}\texttt{(l2)},
        \hlt{swbr}\texttt{(ro)}, \hlt{start} and \hlt{finish} that are already in $\mathbf{QINS}$, no further translation is needed.
    \item 
        In the translation $\mathrm{low}\braces*{\textup{\hlt{bra}}\texttt{(P.ent[x])}}$,
        Lines~1--2 first load the value of $x$ into free register $r_2$.
        Then, the base address $\& P.\mathit{ent}$ of classical array $P.\mathit{ent}$ is loaded into $r_3$ by Line~3,
        which is added to $r_2$ by Line~4 to obtain the address $\& P.\mathit{ent}[x]$ of the subscripted variable $P.\mathit{ent}[x]$. 
        Line~5 loads the value of $P.\mathit{ent}[x]$ to $r_4$, where \hlt{fetr} instead of \hlt{ldr} is used to preserve the copy of $P.\mathit{ent}[x]$ in the QRAM for supporting recursive procedure calls.
        Lines~6--9 reverse the effects of Lines~1--4.

        Lines~10--11 calculate in $r_4$ the offset of $P.\mathit{ent}[x]$ from the address of Line~12,
        which branches to the declaration of the subscripted procedure $P[x]$.
        Note that after Line~12, all previously used registers are cleared and become free again.
        Finally, Lines~13--24 reverse the effects of Lines~1--12.
\end{itemize}

Now let us apply the mid-to-low-level translation to the quantum multiplexor program in \Cref{fig:q-multiplexor-full-mid} after the high-to-mid-level translation. 
This yields the full compiled program in \Cref{fig:qmuxlow-full},
which is also annotated to show the correspondence with the mid-level program in \Cref{fig:q-multiplexor-full-mid}.
As usual, for simplicity of presentation,
manual optimisation has been done on the translated program.

\begin{figure}
    \centering
    \includegraphics[width=0.9\linewidth]{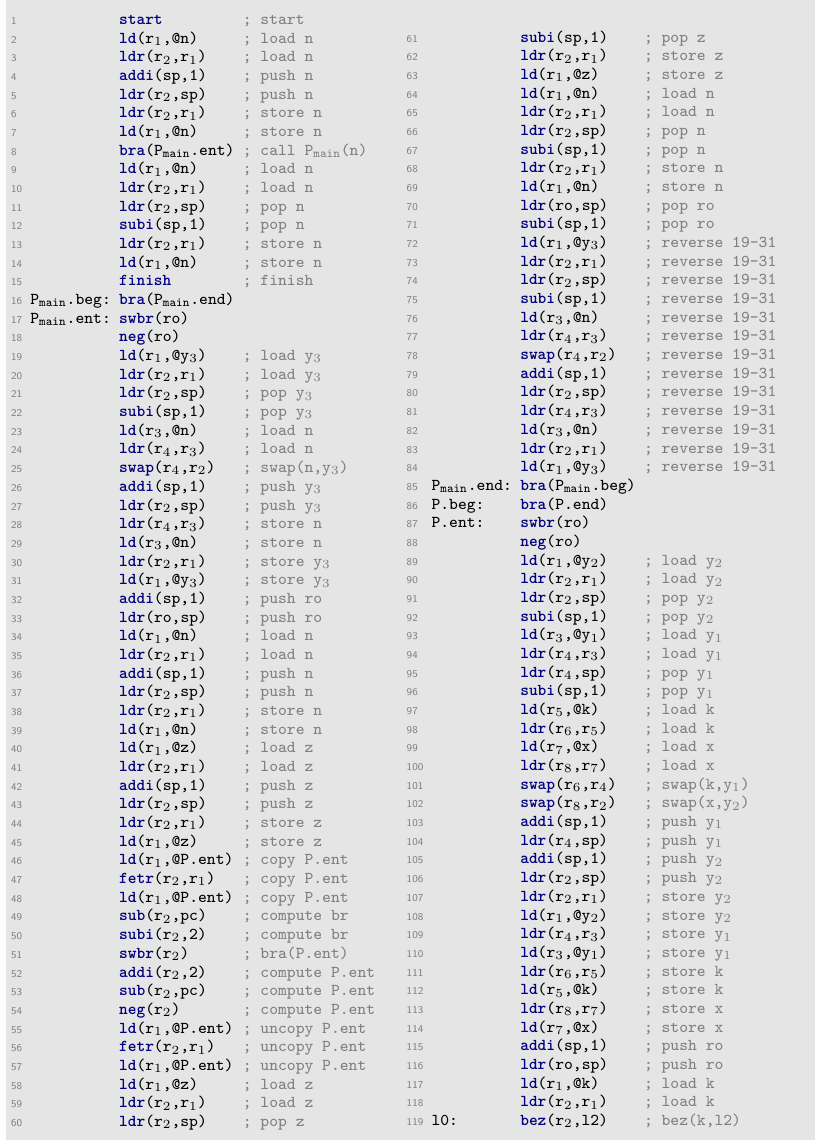}
    \caption{Full compiled quantum multiplexor program, after the mid-to-low-level translation.}
    \label{fig:qmuxlow-full}
\end{figure}

\begin{figure}
    \ContinuedFloat
    \centering
    \includegraphics[width=0.9\linewidth]{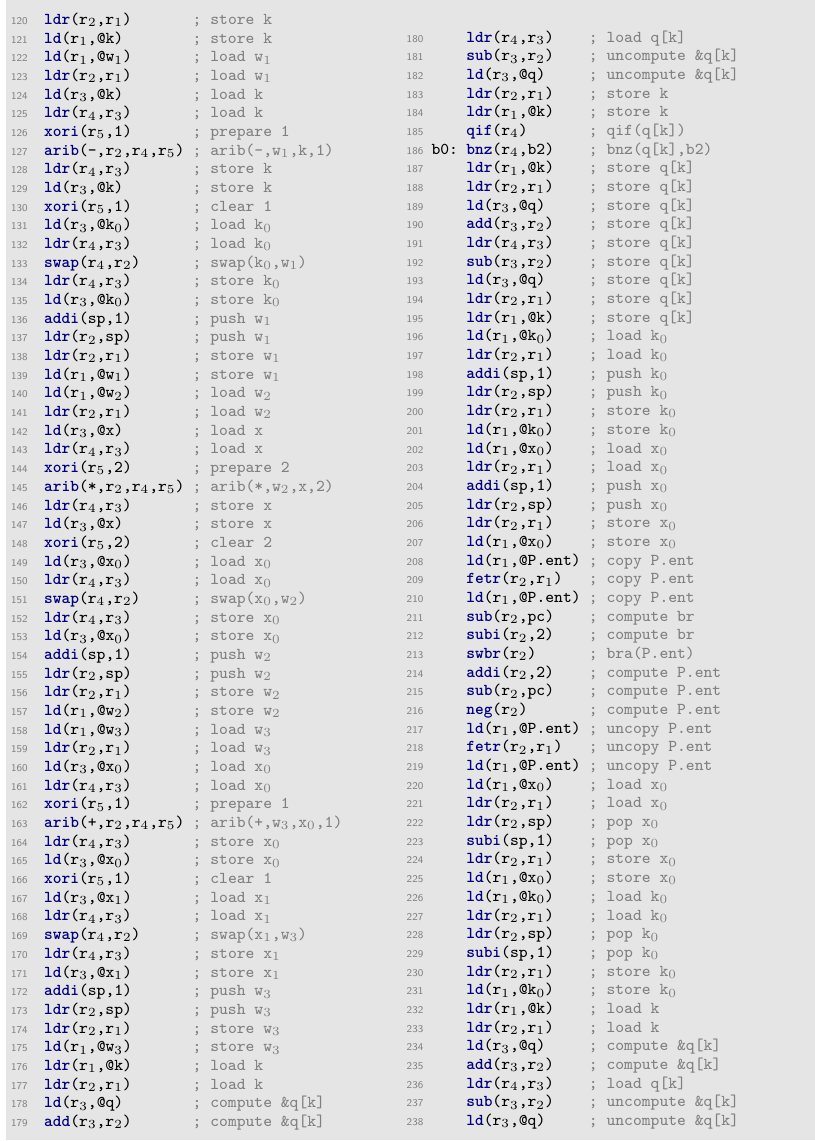}
    \caption{Full compiled quantum multiplexor program, after the mid-to-low-level translation (cont.)}
\end{figure}

\begin{figure}
    \ContinuedFloat
    \centering
    \includegraphics[width=0.9\linewidth]{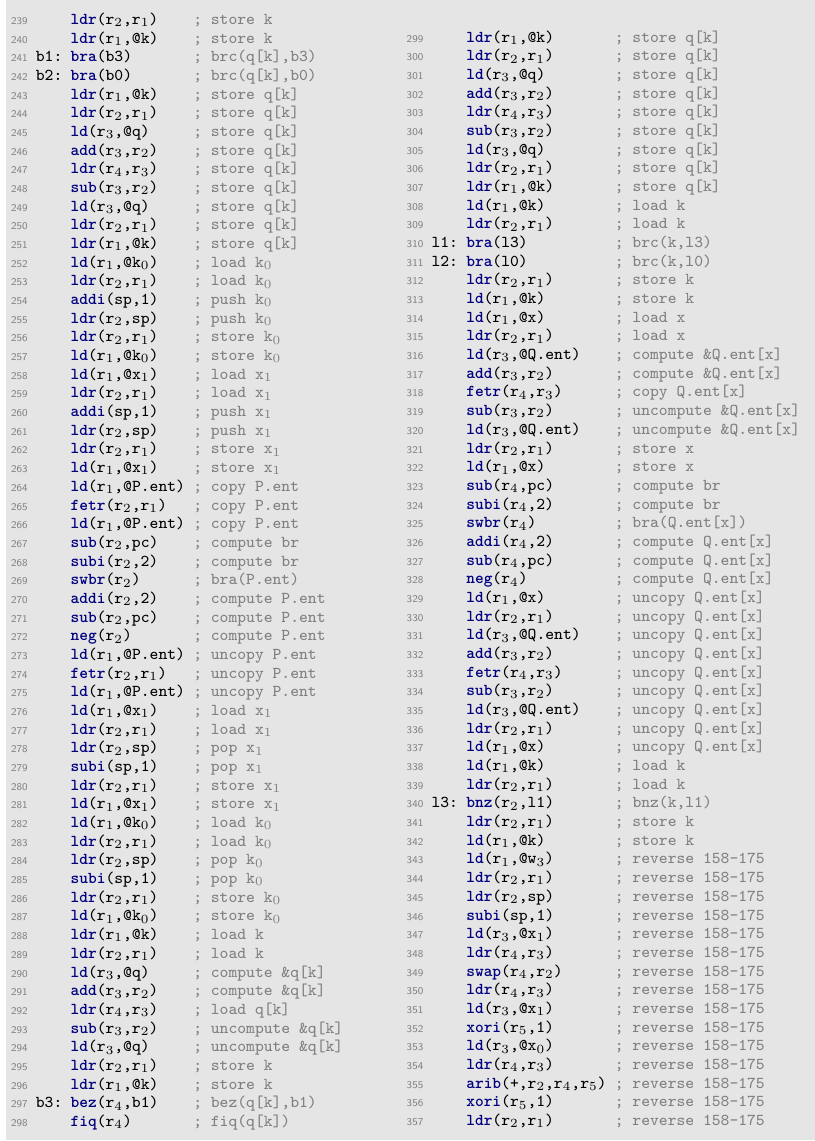}
    \caption{Full compiled quantum multiplexor program, after the mid-to-low-level translation (cont.)}
\end{figure}

\begin{figure}
    \ContinuedFloat
    \centering
    \includegraphics[width=0.9\linewidth]{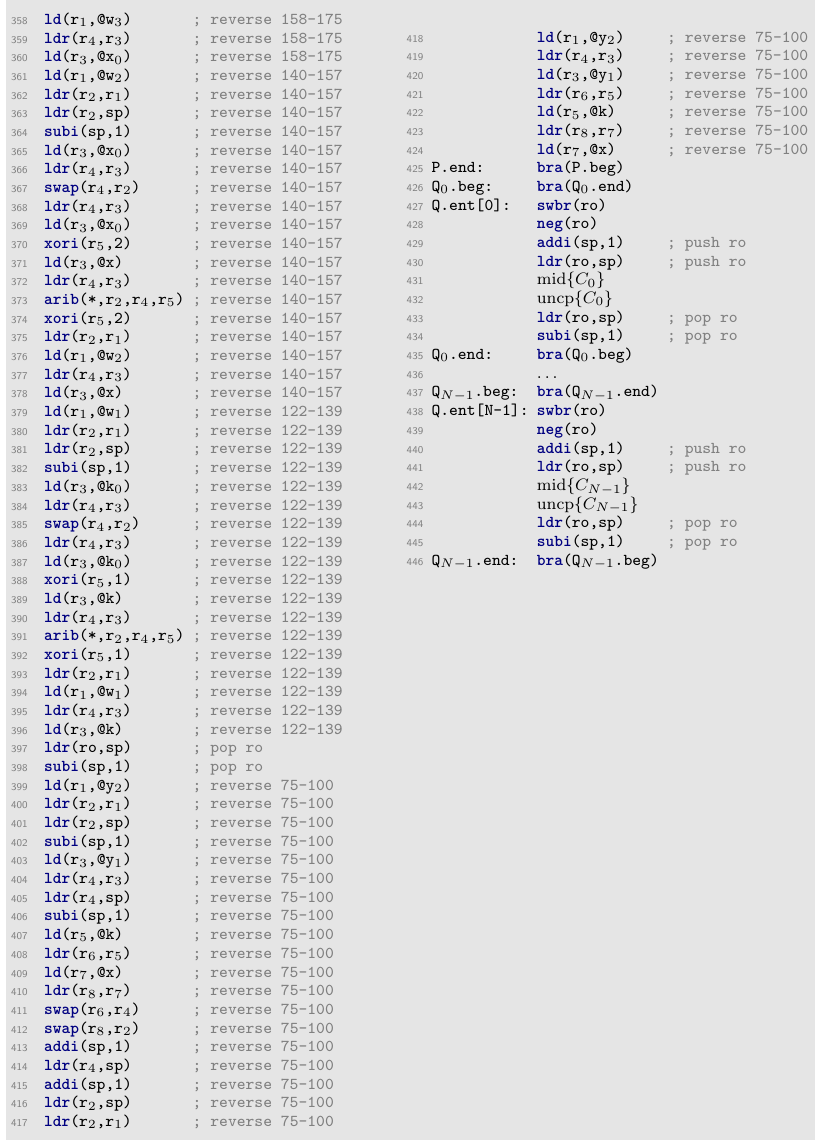}
    \caption{Full compiled quantum multiplexor program, after the mid-to-low-level translation (cont.)}
\end{figure}

\section{Details of Partial Evaluation}
\label{sec:details_qif_table}

In this appendix, we present further details of the partial evaluation of compiled quantum recursive programs,
in addition to \Cref{sec:partial-evaluation}.

\subsection{Memory Allocation of Qif Table}
\label{sub:memory_allocation_of_qif_table}

In \Cref{sub:the_synchronisation_problem}, we have identified the synchronisation problem
in executing quantum recursive programs, 
and defined in \Cref{sub:qif_table} a data structure called qif table that will be used in superposition at runtime to address this problem. 
The qif table generated from the partial evaluation needs to be loaded into the QRAM at runtime. 
To store a qif table in the QRAM,
we need an encoding of every node and its relevant information (links and counter).
The simplest way to encode is to gather all information of a node 
into a $9$-tuple 
\begin{equation}
    \parens*{v.w,v.\mathit{nx},v.\mathit{fc}_0,v.\mathit{fc}_1,v.\mathit{lc}_0,v.\mathit{lc}_1,v.\mathit{pr},v.\mathit{cf},v.\mathit{cl}}.
    \label{eq:qif-table-entry}
\end{equation}
Here, we assume every node $v$ is identified by its base address of the above tuple in the QRAM.
For nodes of type $\circ$, some of the entries in \Cref{eq:qif-table-entry} might be empty and set to $0$.
In this way, the whole qif table is encoded into an array of tuples, each corresponding to a node within. 
Accessing the information of a node can be done in a way similar to accessing an array element.
For example, if every entry in \Cref{eq:qif-table-entry} occupies a word,
then given the base address $a$ of a tuple that corresponds to a node $v$,
one can access the information $v.\mathit{nx}$ by the address $a+1$.
In \Cref{sec:execution}, the quantum register machine might need to access the value of $v.\mathit{nx}$ (see e.g., \Cref{alg:U-qif-fiq}),
which in this case can be fetched into a free register $r$ for later computation, by applying the unitary $U_{\mathit{fet}}(r, a+1, \mathit{mem})$.

It is worth mentioning that more compact ways (with smaller space complexity) of encoding the qif table other than the straightforward encoding in \Cref{eq:qif-table-entry} exist
but are not discussed here, for simplicity of presentation.

\subsection{Qif Table for the Quantum Multiplexor Program}
\label{sub:qif_table_qmux}

In \Cref{sub:qif_table}, we have presented an example of an artifitial program and its corresponding qif table in \Cref{fig:eg-qif-table}. For illustration, 
in \Cref{fig:qmux-qif-table}, we show the qif table for the quantum multiplexor program as another example. 
The original program written in $\mathbf{RQC}^{++}$ was already presented in \Cref{fig:q-multiplexor}, while the full compiled program was shown in \Cref{fig:qmuxlow-full}.
Here, we choose $n=3$.
As usual, we only show the links $\mathit{fc}_i$ (colored in black), $\mathit{lc}_i$ (colored in red and dashed) 
and $\mathit{nx}$ (colored in blue and squiggled),
while $w$, $\mathit{cf}$, $\mathit{cl}$ and $\mathit{pr}$ are omitted for simplicity.
For the quantum multiplexor program,
the qif table is just a tree of depth $3$,
because there are only nested instantiations of the $\mathbf{qif}$ statements. 

\begin{figure}
    \centering
    \includegraphics[width=0.5\linewidth]{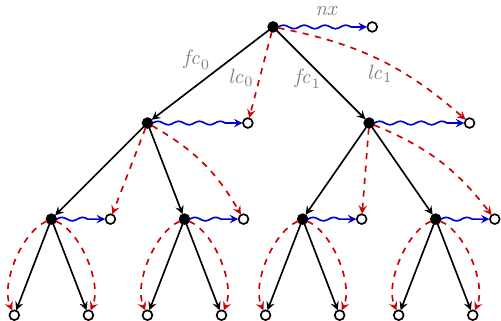}
    \caption{The qif table for the quantum multiplexor program in \Cref{fig:q-multiplexor},
    when $n=3$.}
    \label{fig:qmux-qif-table}
\end{figure}

\subsection{Details of Qif Table Generation}
\label{sub:details_qif_table_generation}

In \Cref{sub:generation_of_qif_table}, 
we have presented \Cref{alg:partial-evaluation} of the partial evaluation of quantum control flow and the generation of qif table. In the following, let us give a more detailed explanation of \Cref{alg:partial-evaluation}.

Recall that \Cref{alg:partial-evaluation} is run by multiple parallel processes.
We start with one initial process evaluating the function \textsc{QEva}.
At the beginning, the program counter $\mathit{pc}$ is initialised to the 
starting address of the compiled main program (i.e., the address of instruction \hlt{start}) and the counter $t$ is initialised to $0$.
An new node $v$ is created in the qif table as the initial node.  
In the generation of the qif table, all new nodes created are of type $\circ$,
which can later become nodes of type $\bullet$ if needed. 
The function \textsc{QEva} then repeats Lines~\ref{alstp:qeva-while-start}--\ref{alstp:qeva-while-end},
as long as the running time $t$ does not exceed the practical running time, i.e., $t\leq T_{\textup{prac}}\parens*{\calP}$.
The repetition consists of the following three stages,
which also appear in the execution on quantum register machine (see \Cref{sec:execution}).
\begin{enumerate}
	\item 
		\textbf{(Fetch)}:

		In this stage, register $\mathit{ins}$ is updated with the current instruction $M_{\mathit{pc}}$ whose address is specified by register $\mathit{pc}$.
		The counter $t$ then increments by $1$.
	\item
		\textbf{(Decode \& Execute)}:

		In this stage, we classically emulate the behaviour of the quantum register machine,
        given the classical inputs.
		In particular, we neglect unitary gate instructions \hlt{uni} and \hlt{unib}
        as they involve the quantum inputs.
		\begin{enumerate}
			\item 
				If $\mathit{ins}=\textup{\hlt{qif}}\texttt{(q)}$,
				then we meet a creation of quantum branching, controlled by the quantum coin $q$.
				In this case, two new nodes $v_0$ and $v_1$ are created in the qif table,
				and are then linked with the current node $v$.
				Since $v_0$ and $v_1$ are from the first enclosed nested instantiation of $\mathbf{qif}\ldots\mathbf{fiq}$ of $v$,
                by \Cref{def:qif-table}, 
                we have $v.\mathit{fc}_i=v.\mathit{lc}_i=v_i$ for $i=0,1$.
                The inverse links are created similarly.
                Next, we fork the current process \textsc{QEva} into 
				two sub-processes \textsc{QEva}$_0$ and \textsc{QEva}$_1$,
				which go into the quantum branches $q=0$ and $q=1$ (corresponding to $\ket{0}\rightarrow \ldots$ and $\ket{1}\rightarrow \ldots$ in the $\mathbf{qif}$ statement), respectively.
				For each \textsc{QEva}$_i$, the current node $v$ is updated by the children node $v_i$.
			\item
				If $\mathit{ins}=\textup{\hlt{fiq}}\texttt{(q)}$,
				then we meet a join of quantum branching, controlled by the quantum coin $q$.
				This implies the current process is a sub-process forked from some parent process.
				So, we wait for the pairing sub-process \textsc{QEva}$'$ 
				with the same parent node;
                i.e., $v'.\mathit{cl}=v.\mathit{cl}=\hat{v}$ for some $\hat{v}$,
                where $v'$ denotes the current node of \textsc{QEva}$'$,
                and $\hat{v}$ denotes the parent node of $v$ and $v'$.
				Let $\hat{t}$ be the maximum $\max\braces*{t,t'}$ of the numbers of instructions 
				in executing the two quantum branches (corresponding to $q=0,1$).
				Given $\hat{t}$, we can calculate $v.w=\hat{t}-t$ and $v'.w=\hat{t}-t'$, 
                the numbers of instruction cycles one needs to wait
				at the nodes $v$ and $v'$, respectively.
				Note that one of $v.w$ and $v'.w$ will be $0$.
				These wait counter information will be used to synchronise the two quantum branches
                at runtime (see \Cref{sec:execution}).

				After collecting the information in two quantum branches,
				the current process \textsc{QEva} will be merged with the pairing process \textsc{QEva}$'$
				into one process by updating $t$ with $\hat{t}$ and $v$ with $\hat{v}$.

				Finally, we need to create a new node $u$ in the qif table for the continuing quantum branch.
				We link the nodes $v$ and $u$ via $\mathit{nx}$ and $\mathit{pr}$.
				According to \Cref{def:qif-table}, 
                node $u$ will be updated as the new last children node for the parent node $\hat{u}$
                of the current node $v$.
				Then, we move the current node from $v$ to $u$.
				Note that the default type of $u$ is $\circ$,
				and it can become $\bullet$ if later we meet a continuing non-nested instantiation of $\mathbf{qif}\ldots\mathbf{fiq}$.
			\item
				If $\mathit{ins}=\textup{\hlt{finish}}$,
				then we have finished the execution of the program,
				and can return the counter $t$ as the actual running time.
			\item
				Otherwise, if $\mathit{ins}$ is any instruction other than the unitary gate instructions \hlt{uni} and \hlt{unib},
				we update the involved registers and memory locations in $M$ according to \Cref{fig:table-ins}.
		\end{enumerate}
	\item
		\textbf{(Branch)}:
		
		In this stage, we update the program counter $\mathit{pc}$ according to 
		the branch offset $\mathit{br}$.
		As previously mentioned in \Cref{sub:quantum_registers},
		if $\mathit{br}=0$, then we increment $\mathit{pc}$ by $1$;
		if $\mathit{br}\neq 0$, then we update $\mathit{pc}$ with $\mathit{pc}+\mathit{br}$.
\end{enumerate}

If the program $\calP$ does not terminate within the practical time $T_{\textup{prac}}\parens*{\calP}$,
\Cref{alg:partial-evaluation} returns a timeout error.
Otherwise, it returns the actual running time $t$,
denoted by $T_{\textup{exe}}\parens*{\calP}$,
which will be used in the execution on quantum register machine (see \Cref{sec:execution}).

It is worth stressing again that in the partial evaluation,
only those quantum variables $q$ involved in the quantum control flow 
(specifically, \hlt{qif}\texttt{(q)} and \hlt{fiq}\texttt{(q)}) and all classical variables need to be evaluated;
while all other quantum variables are ignored.
The parallel processes for running \Cref{alg:partial-evaluation} are completely classical.
One can also make use of this fact to dynamically maintain the classical registers and memory $M$
for every evaluation process, in order to save the space complexity,
but we will not discuss the details here for simplicity.

\subsection{Complexity Analysis of Partial Evaluation}
\label{sub:complexity_partial_evaluation}

Now let us analyse the time complexity of the partial evaluation of quantum control flow and generation of qif table described by \Cref{alg:partial-evaluation},
in terms of classical parallel elemetnary operations. 
Our goal is to show that \Cref{alg:partial-evaluation} can be implemented
using $O\parens*{T_{\textup{exe}}\parens*{\calP}}$ classical parallel elementary operations,
which will be explained as follows.

To clarify what is meant by ``elementary'' here,
recall that all registers, the memory and the qif table in \Cref{alg:partial-evaluation}
are simulated on a classical computer and stored in a classical RAM.
We regard any operation that involves a constant number of memory locations in the classical RAM
as an elementary operation.
For example, in \Cref{alg:partial-evaluation}, an operation on classical registers 
(including system registers $\mathit{pc},\mathit{ins},\mathit{br},\mathit{sp},v,t$
and user registers),
an access to the classical memory $M$,
or creation of a node in the qif table will all be regarded as 
a classical elementary operation.

The meaning of ``parallel'' here is the common one in parallel computing.
Multiple elementary operations performed simultaneously are counted as one parallel elementary operation.
The parallelism can appear within a single process;
e.g., when a process forks into two sub-processes,
its data can be copied in a parallel way, for initialising both sub-processes. 
The parallelism can also appear among multiple processes;
e.g., when two sub-processes run simultaneously before being merged,
they can perform operations in parallel.
We will not bother further going down to the rigorous details.

To prove our goal, note that \Cref{alg:partial-evaluation} terminates without timeout error
if and only if $t=T_{\textup{exe}}\parens*{\calP}$.
So, it suffices to verify that every step of a process (among other parallel processes)
executing \textsc{QEva} can be implemented using $O\parens*{1}$ classical elementary operations,
as follows.
\begin{itemize}
	\item 
		Consider those steps in \Cref{alg:partial-evaluation} only involving registers, the memory and the qif table;
		e.g, $t\gets 0$ in Line~\ref{alstp:qeva-tgets0},
		$\mathit{ins}\gets M_{\mathit{pc}}$ in Line~\ref{alstp:qeva-insgets},
		and the creation of nodes $v_0,v_1$ in Line~\ref{alstp:qeva-create-qif}.
		It is easy to verify that each of them only involves a constant number of memory locations in the classical RAM,
		and therefore can be implemented by $O\parens*{1}$ elementary operations.
	\item
		Consider those steps in \Cref{alg:partial-evaluation} involving forking and merging of sub-processes,
		in particular, Line~\ref{alstp:qeva-fork} and Line~\ref{alstp:qeva-merge}.
		For the forking of a parent process into two sub-processes,
		we need to create a copy of all registers and the memory of the parent process.
		This can be done using $O\parens*{1}$ classical parallel elementary operations.
		Similarly, for the merging of two pairing sub-processes,
		taking the data from any of the two sub-processes
		and updating registers according to Line~\ref{alstp:qeva-merge}
        can be done using $O(1)$ classical parallel elementary operations.
\end{itemize}


\section{Details of Execution on Quantum Register Machine}
\label{sec:details_execution}

In this appendix, we present further details of execution on quantum register machine,
in addition to \Cref{sec:execution}.
Recall that the execution is the last step of implementing quantum recursive programs
and the only step that concerns the quantum inputs and requires quantum hardware.
At this point, the original program written in the language $\mathbf{RQC}^{++}$
is compiled into a low-level program $\calP$ composed of instructions in $\mathbf{QINS}$.
We are promised that $\calP$ terminates and has running time $T_{\textup{exe}}\parens*{\calP}$,
and we have generated its corresponding qif table that can be used to solve the synchronisation problem during the execution. Along the way, we have also set up a symbol table
and allocated memory locations for classical and quantum variables.
All these instructions and data are now loaded into the QRAM,
according to the layout described in \Cref{sub:QRAM-layout}.

\subsection{Details of Unitary \texorpdfstring{$U_{\textup{cyc}}$}{U\_{cyc}} and Unitary \texorpdfstring{$U_{\textup{exe}}$}{U\_{exe}}}
\label{sub:details_uni_cyc_and_exe}

In \Cref{sub:cycle_unitary_u_cyc_and_execution_unitary_u_exe}, we have presented in \Cref{alg:exe-qrm} the construction of the unitary $U_{\textup{cyc}}$ applied repeatedly by the quantum register machine for each instruction cycle at runtime. We have also briefly explained unitary $U_{\textup{exe}}$ as a subroutine of $U_{\textup{cyc}}$.
Now let us provide a more detailed explanation of \Cref{alg:exe-qrm}.

The execution on quantum register machine
described by \Cref{alg:exe-qrm} consists of repeated cycles,
each performing the unitary $U_{\textup{cyc}}$.
From the partial evaluation in \Cref{alg:partial-evaluation}, 
we already obtain the running time $T_{\textup{exe}}\parens*{\calP}$ of the compiled program $\calP$,
so we can fix the number of repetitions to be $T_{\textup{exe}}=T_{\textup{exe}}\parens*{\calP}$. 
Before the repetitions of $U_{\textup{cyc}}$,
we also need to initialise system registers $\mathit{pc},\mathit{sp},\mathit{qifv}$
according to \Cref{sub:quantum_registers}.
All other registers are initialised to $\ket{0}$.

For each instruction cycle, in $U_{\textup{cyc}}$, 
we need to decide whether to wait (i.e., skip the current instruction cycle) or execute,
according to the wait counter information stored in the current node 
(specified by register $\mathit{qifv}$) in the qif table.
To reversibly implement this procedure,
we exploit two additional registers $\mathit{qifw}$ and $\mathit{wait}$,
and unitary $U_{\textup{cyc}}$ is designed to consist of the following three stages.

\begin{enumerate}
	\item 
		\textbf{(Set wait flag)}:
		
		In this stage, we check whether the value $w$ in the wait counter register $\mathit{qifw}$ is $>0$, which records the number of cycles to be skipped at the current node. 
		At this point, we are promised that the wait flag register $\mathit{wait}$ is in state $\ket{0}$.
		If the wait counter $w>0$, then we set a wait flag in register $\mathit{wait}$
            by flipping it to state $\ket{1}$,
            which indicates that the machine needs to wait at the current node. 
	\item
		\textbf{(Execute or wait)}:
	
		In this stage, conditioned on the wait flag in $\mathit{wait}$,
		we decide whether to wait (i.e., skip the current cycle),
		or apply the unitary $U_{\textup{exe}}$ (defined in~\Cref{alg:exe-qrm} and explained later) to execute.
		If the wait flag is $0$, then we apply $U_{\textup{exe}}$.
		If the wait flag is $1$, then we decrement the wait counter $w$ in register $\mathit{qifw}$ by $1$.
	\item
		\textbf{(Clear wait flag)}:

		In this stage, we clear the wait flag in register $\mathit{wait}$,
		conditioned on the value $w$ in register $\mathit{qifw}$ 
		and the counter information $v.w$,
		where $v$ is the value of register $\mathit{qifv}$ that specifies the current node in the qif table.
		Specially, if $w<v.w$, which implies that we have set the wait flag in the first stage,
		then we need to clear the wait flag.
		Note that $v.w$ is stored in the qif table, 
		and we need to first apply the unitary $U_{\textup{fet}}$ (see \Cref{def:access_QRAM}) to fetch it into some free register before using it,
		of which details are omitted for simplicity.
		After this stage, we are guaranteed that the wait flag register $\mathit{wait}$ is in state $\ket{0}$.
\end{enumerate}

\begin{figure}
	\centering
	\begin{subfigure}[b]{0.4\textwidth}
		\centering
		\includegraphics[width=\textwidth]{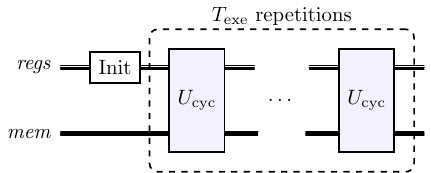}
		\caption{$U_{\textup{main}}$. Here, $\mathit{regs}$ stands for the collection of system and user registers.}
	\end{subfigure}
	~
	\begin{subfigure}[b]{0.5\textwidth}
		\centering
		\includegraphics[width=\textwidth]{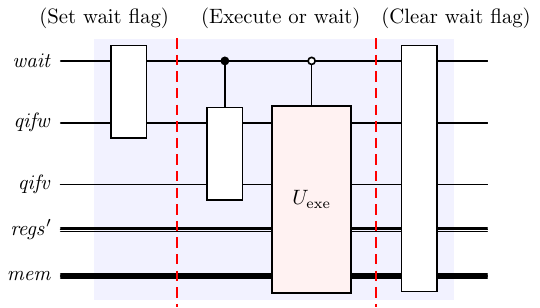}
		\caption{$U_{\textup{cyc}}$. Here, $\mathit{regs}'= \mathit{regs}- \mathit{wait}-\mathit{qifw}-\mathit{qifv}$.}
	\end{subfigure}
	\vskip\baselineskip
	\begin{subfigure}[b]{0.52\textwidth}
		\centering
		\includegraphics[width=\textwidth]{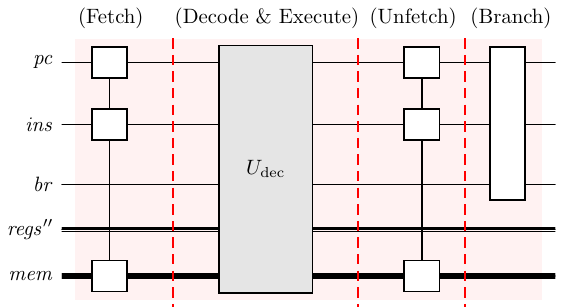}
		\caption{$U_{\textup{exe}}$. Here, $\mathit{regs}''= \mathit{regs}-\mathit{wait}-\mathit{pc}-\mathit{ins}-\mathit{br}$.}
	\end{subfigure}
	~
	\begin{subfigure}[b]{0.36\textwidth}
		\centering
		\includegraphics[width=0.75\textwidth]{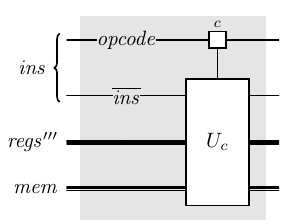}
		\caption{$U_{\textup{dec}}$. Here, $\mathit{regs}'''=\mathit{regs}-\mathit{wait}-\mathit{ins}$;
            $\mathit{opcode}$ is the first several qubits of $\mathit{ins}$
			(see \Cref{sub:the_low_level_language_qins}),
		and $\overline{\mathit{ins}}=\mathit{ins}-\mathit{opcode}$.}
	\end{subfigure}
	\caption{Visualisation of the execution on quantum register machine in the form of quantum circuits.
	Here, unitaries $U_{\textup{main}}$, $U_{\textup{cyc}}$ and $U_{\textup{exe}}$ are defined in \Cref{alg:exe-qrm},
	while $U_{\textup{dec}}$ is defined in \Cref{fig:table-type-eg}. 
	A single wire in these quantum circuits stands for a quantum word, or a part of a quantum word,
	while a bundled wire stands for multiple quantum words.}
	\label{fig:qrm-qcircuit}
\end{figure}

As a subroutine of $U_{\textup{cyc}}$,
the unitary $U_{\textup{exe}}$ consists of the following four stages.
\begin{enumerate}
	\item
		\textbf{(Fetch)}:

		In this stage, we fetch into register $\mathit{ins}$ the current instruction,
            whose address is specified by the program counter $\mathit{pc}$.
		At this point, we are promised that $\mathit{ins}$ is in state $\ket{0}$.
		So, we simply apply the QRAM fetch unitary $U_{\textup{fet}}\parens*{\mathit{ins},\mathit{pc},\mathit{mem}}$ defined in \Cref{def:access_QRAM}.
	\item
		\textbf{(Decode \& Execute)}:

		In this stage, we decode the current instruction in $\mathit{ins}$,
		and execute it by applying the unitary 
		\begin{equation*}
			U_{\textup{dec}}=\sum_{c}\ket{c}\!\bra{c}\otimes \sum_d \ket{d}\!\bra{d}\otimes U_{c,d}
		\end{equation*}
        previously defined in \Cref{fig:table-type-eg}.
		The full list of unitaries $U_{c,d}$ for implementing different instructions are already shown in \Cref{fig:full-list-qins}.
		What remains unspecified in \Cref{fig:full-list-qins} are the unitaries $U_{\textup{qif}}$ and $U_{\textup{fiq}}$ for qif instructions \hlt{qif} and \hlt{fiq},
		which are defined in \Cref{alg:U-qif-fiq} and will be explained in detail later in \Cref{sub:details_uni_qif_and_fiq}.
	\item
		\textbf{(Unfetch)}:

		In this stage, we clear the register $\mathit{ins}$ by applying the unitary 
		$U_{\textup{fet}}\parens*{\mathit{ins},\mathit{pc},\mathit{mem}}$ again.
		After this stage, register $\mathit{ins}$ is guaranteed to be in state $\ket{0}$.
	\item
		\textbf{(Branch)}:

		In this stage, we update the program counter $\mathit{pc}$
		according to the branch offset register $\mathit{br}$.
		Specifically, if the value $y$ in $\mathit{br}$ is non-zero,
		then we add the offset $y$ to the value in $\mathit{pc}$;
		otherwise, we simply increment the value in $\mathit{pc}$ by $1$.
\end{enumerate}

To sum up, in \Cref{fig:qrm-qcircuit},
we visualise the execution on quantum register machine (described by \Cref{alg:exe-qrm}) in the form of quantum circuits, which could be more familiar to the quantum computing community.

\subsection{Details of Unitary \texorpdfstring{$U_{\textup{qif}}$}{U\_{qif}} and \texorpdfstring{$U_{\textup{fiq}}$}{U\_{fiq}}}
\label{sub:details_uni_qif_and_fiq}

In \Cref{sub:unitaries_for_executing_qif_instructions},
we have presented in \Cref{alg:U-qif-fiq} the construction of the unitaries $U_{\textup{qif}}$ and $U_{\textup{fiq}}$ for executing qif instructions, which are used in defining the unitary $U_{\textup{dec}}$
in \Cref{fig:table-type-eg} (see also the full list of unitaries $U_{c,d}$ for implementing different instructions in \Cref{fig:full-list-qins}). Now we provide further explanation of \Cref{alg:U-qif-fiq}.

\begin{itemize}
	\item 
		\textbf{Unitary $U_{\textup{qif}}(q)$}.

        When the current instruction is \hlt{qif}\texttt{(q)},
		we first move the current node specified by $\mathit{qifv}$
		to its first children node corresponding to the quantum coin $q$.
		Specifically, conditioned on the value $x$ of the quantum coin $q$, 
		for the current node $v$ in register $\mathit{qifv}$,
		we first compute into some free register $r$ its first children $v.\mathit{fc}_x$ (corresponding to $x$);
		then use the inverse link $\mathit{cf}$ to clear the garbage data $v$.
		A following swap unitary $U_{\textup{swap}}$ finishes this move of the current node.
		Note that the free register $r$ is cleared at this point.

        Then, we update the wait counter register $\mathit{qifw}$ with the counter information $v.w$ stored in the qif table, where $v$ is the current node in register $\mathit{qifv}$.
		Here, we are promised that $\mathit{qifw}$ is initially in state $\ket{0}$,
		because $U_{\textup{qif}}$ is used as a subroutine of $U_{\textup{exe}}$,
		which will only be performed by $U_{\textup{cyc}}$ when the wait counter $\mathit{qifw}$ is in state $\ket{0}$.

		Note that in the above, the information $\mathit{fc}_x$, $\mathit{cf}$ and $w$ are stored in the qif table,
		and need to be fetched using $U_{\textup{fet}}$ into free registers before being used 
        (see also \Cref{sub:memory_allocation_of_qif_table}),
		of which details are omitted for simplicity.
	\item
		\textbf{Unitary $U_{\textup{fiq}}(q)$}.
  
        When the current instruction is \hlt{fiq}\texttt{(q)},
		we first move the current node specified by $\mathit{qifv}$
		to its parent node.
		Specifically, similar to the construction of $U_{\textup{qif}}$,
		for the current node $v$ in register $\mathit{qifv}$,
		we first compute into some free register $r$ its parent $v.\mathit{cl}$;
		then conditioned on the value $x$ of the quantum coin $q$,
		clear the garbage data $v$ using the link $\mathit{lc}_x$.
		A following swap unitary $U_{\textup{swap}}$ finishes this first move of the current node.

        Then, we continue to move the current node to its next node.
		Specifically, we first compute into some free register $r$ the next node $v.\mathit{nx}$,
        where $v$ is the current node in register $\mathit{qifv}$.
		Next, we clear the garbage data $v$ using the inverse link $\mathit{pr}$,
		followed by a swap unitary $U_{\textup{swap}}$.
		After the above procedure, the register $r$ is also cleared.

		Similar to the case of $U_{\textup{qif}}$, the information $\mathit{cl}$, $\mathit{lc}_x$, $\mathit{nx}$ and $\mathit{pr}$ are stored in the qif table,
		and need to be fetched using $U_{\textup{fet}}$ into free registers before being used (see also \Cref{sub:memory_allocation_of_qif_table}),
		of which details are omitted for simplicity.
\end{itemize}

\subsection{Complexity Analysis of Execution}
\label{sub:complexity_execution}

Now let us analyse the time complexity of the execution on quantum register machine described by \Cref{alg:exe-qrm},
in terms of quantum elementary operations. 
Our goal is to show that the unitary $U_{\textup{main}}$ for executing a compiled program $\calP$ can be implemented
using $O\parens*{T_{\textup{exe}}\parens*{\calP}}$ elementary operations on registers (see \Cref{def:ele-op-reg}) 
and elementary QRAM accesses (see \Cref{def:access_QRAM}).

To prove our goal, first note that the time complexity of $U_{\textup{main}}$
is dominated by the repeated applications of unitary $U_{\textup{cyc}}$.
It suffices to show that unitary $U_{\textup{cyc}}$ can be implemented using
$O\parens*{1}$ elementary operations on registers and QRAM acesses.

Implementing $U_{\textup{cyc}}$ consists of the following steps.
Let us analyse each of them.
\begin{enumerate}
	\item 
		(Set wait flag) in Line~\ref{alstp:exe-set-wait-flag}.

		It is easy to see that the unitary
        \begin{equation*}
        	\sum_{w,z}\ket{w}\!\bra{w}_{\mathit{qifw}}\otimes\ket{z\oplus \bracks*{w>0}}\!\bra{z}_{\mathit{wait}}
        \end{equation*}	
		can be performed using $O\parens*{1}$ elementary operations on registers.
	\item
		(Execute or wait) in Line~\ref{alstp:exe-exe-or-wait}.

		To implement the unitary
        \begin{equation*}
			\ket{0}\!\bra{0}_{\mathit{wait}}\otimes U_{\textup{exe}} 
			+ \sum_{z\neq 0,w}\ket{z}\!\bra{z}_{\mathit{wait}}\otimes\ket{w-1}\!\bra{w}_{\mathit{qifw}}\otimes \Id,
        \end{equation*}
		we first apply $\bullet\parens*{\mathit{wait}}$-$U_{-}\parens*{\mathit{qifw}, \mathit{r}}$,
		where $U_-$ and $\bullet(\cdot)$-$U$ are defined in \Cref{def:ele-op-reg},
        and $\mathit{r}$ is a free register initialised to $\ket{1}$.
		Then, we apply $\circ\parens*{\mathit{wait}}$-$U_{\textup{exe}}$,
		where $\circ(\cdot)$-$U$ is defined in \Cref{def:ele-op-reg},
        and $U_{\textup{exe}}$ will be shown to be implementable using $O\parens*{1}$
		elementary operations on registers and QRAM accesses below.
		Clearing garbage data is simple and also takes $O(1)$ elementary operations.
	\item
		(Clear wait flag) in Line~\ref{alstp:exe-clear-wait-flag}.

		To implement the unitary
		\begin{equation*}
			\sum_{w,v,z}\ket{w}\!\bra{w}_{\mathit{qifw}}\otimes \ket{v}\!\bra{v}_{\mathit{qifv}}
            \ket{z\oplus \bracks*{w<v.w}}\!\bra{z}_{\mathit{wait}},
		\end{equation*}
		suppose that we use the simplest memory allocation of qif table aforementioned in \Cref{sub:memory_allocation_of_qif_table}.
		We first apply $U_{\textup{fet}}\parens*{r,\mathit{qifv},\mathit{mem}}$ (defined in \Cref{def:access_QRAM}) to fetch the information $v.w$ into a free register $r$,
		where $v$ is the value in register $\mathit{qifv}$.
		Given the information $v.w$ in $r$, it is easy to see that using $O\parens*{1}$ elementary operations on registers we can compute $z\oplus [w<v.w]$ in register $\mathit{wait}$.
		Clearing garbage data is simple and also takes $O(1)$ elementary operations.
\end{enumerate}

Next, we show that the unitary $U_{\textup{exe}}$ can be implemented using 
$O\parens*{1}$ elementary operations on registers and QRAM accesses.
Implementing $U_{\textup{exe}}$ consists of the following steps.
\begin{enumerate}
	\item 
		(Fetch) in Line~\ref{alstp:exe-fetch}.

		This is a simple application of $U_{\textup{fet}}$.
	\item
		(Decode \& Execute) in Line~\ref{alstp:exe-dec-exe}.

		This is an application of $U_{\textup{dec}}$,
		which will be shown to be implementable using $O\parens*{1}$
		elementary operations on registers and QRAM accesses below.
	\item
		(Unfetch) in Line~\ref{alstp:exe-unfetch}.
		
		This is a simple application of $U_{\textup{fet}}$ again.
	\item
		(Branch) in Line~\ref{alstp:exe-branch}.

		This step can be done by first applying the unitary $U_{+}$ defined in \Cref{def:ele-op-reg},
		followed by $\circ\parens*{\mathit{br}}$-$U_{+}\parens*{\mathit{pc},r}$,
		where $r$ is a free register initialised to $\ket{1}$.
		Clearing garbage data is simple and also takes $O(1)$ elementary operations.
\end{enumerate}

It remains to show that $U_{\textup{dec}}$ can be implemented using 
$O\parens*{1}$ elementary operations on registers and QRAM accesses.
Recall that $U_{\textup{dec}}$
is a quantum multiplexor $\sum_{c}\ket{c}\!\bra{c}\otimes \sum_d\ket{d}\!\bra{d}\otimes U_{c,d}$ (see \Cref{fig:table-type-eg}),
where the full list of unitaries $U_{c,d}$ for all instructions was presented in \Cref{fig:full-list-qins}.
It it easy to verify that every $U_{c,d}$ can be implemented by using
$O\parens*{1}$ elementary operations on registers and QRAM accesses.
The unitary $U_{\textup{dec}}$ can then be implemented by a sequential composition of
controlled-$U_{c,d}$:
\begin{equation}
	\ket{c}\!\bra{c}\otimes\ket{d}\!\bra{d}\otimes U_{c,d} +\sum_{c'\neq c, d'\neq d} \ket{c'}\!\bra{c'}\otimes \ket{d'}\!\bra{d'}\otimes \Id
	\label{eq:c-U_c}
\end{equation}
for all $c,d$,
each using $O\parens*{1}$ elementary operations on registers and QRAM accesses.
Here, $c$ ranges over possible values of the section $\mathit{opcode}$ in $\mathit{ins}$ (i.e., the names of all instructions), 
and $d$ ranges over possible values of other sections in $\mathit{ins}$. 
Since there are only $22$ distinct instructions,
$O(1)$ distinct (user and system) registers, and $O(1)$ distinct parameters 
(in the section $\mathit{para}$ of $\mathit{ins}$; see \Cref{fig:table-type-eg})
due to $\abs*{\calG}=O(1)$ (see \Cref{sub:quantum_unitary_gate}) and $\abs*{\calO\calP}=O(1)$ (see \Cref{def:ele-op-reg}),
the number of unitaries~\Cref{eq:c-U_c} in the sequential composition is also $O\parens*{1}$.
Hence, we conclude that unitary $U_{\textup{dec}}$ can be implemented using $O(1)$ elementary operations. 

\section{Details of Efficiency and Automatic Parallelisation}
\label{sec:details_computational_efficiency}

In this appendix, we provide further details of the efficiency of implementing quantum recursive programs on quantum register machine
as well as the automatic parallelisation, in addition to \Cref{sec:computational_efficiency_and_algorithmic_speed_up}.
We also present the full proof of \Cref{thm:parallel-qmux} that shows the exponential parallel speed-up (over the straightforward implementation) for implementing the quantum multiplexor, obtained from the automatic parallelisation.

\subsection{Quantum Circuit Complexity for Elementary Operations}
\label{sub:quantum_circuit_complexity_for_elementary_operations}

In \Cref{sub:complexity_partial_evaluation},
we have shown that the partial evaluation for a compiled program $\calP$ can be done using $O\parens*{T_{\textup{exe}}(\calP)}$ classical parallel elementary operations.
In \Cref{sub:complexity_execution},
we have shown that the execution of $\calP$ on quantum register machine can be done
using $O\parens*{T_{\textup{exe}}(\calP)}$ quantum elementary operations,
including on registers (see \Cref{def:ele-op-reg}) and QRAM accesses (see \Cref{def:access_QRAM}). 
These costs are in terms of (parallel) elementary operations.
An immediate result now is that the overall parallel time complexity for implementing $\calP$
is 
\begin{equation}
    \label{eq:overall-time-complexity}
    O\parens*{T_{\textup{exe}}\parens*{\calP}\cdot \parens*{T_{\textup{reg}}+T_{\textup{QRAM}}}},
\end{equation}
where $T_{\textup{reg}}$ and $T_{\textup{QRAM}}$
are the parallel time complexities for implementing an elementary operations on register and QRAM access, respectively.
Here, we use the assumption that classical elementary operations are cheaper than their quantum counterparts. 

The parallel time complexities $T_{\textup{reg}}$ and $T_{\textup{QRAM}}$
are further determined by how quantum elementary operations on registers and QRAM accesses are implemented at the lower level. 
Let us consider the more common quantum circuit model,
and determine $T_{\textup{reg}}$ and $T_{\textup{QRAM}}$
when the quantum register machine is further  
implemented by quantum circuits. 

In particular, we present the following two lemmas.
The first lemma shows the quantum circuit complexity for implementing elementary operations on registers,
which implies $T_{\textup{reg}}=O(\log^2 L_{\textup{word}})$
if we consider the parallel time complexity.

\begin{lemma}[Quantum circuit complexity for elementary operations on registers]
	\label{lmm:qcirc-comp-ele-op}
	Suppose that $\calO\calP$ is any subset of the following set of operators:
	addition, subtraction, multiplication, division, cosine,
	sine, arctangent, exponentiation, logarithm, maximum, minimum, factorial.\footnote{
		Here, since we assume every number is stored in a word,
		each composed of $L_{\textup{word}}$ bits,
		these operators are actually approximated with error $2^{-L_{\textup{word}}}$.
		See \Cref{lmm:parallel-q-circ-ele-ari} for more details.
	}
	Then, every elementary operation on registers defined in \Cref{def:ele-op-reg}
	can be implemented by a quantum circuit of depth $O\parens*{\log^2 L_{\textup{word}}}$
	and size $O\parens*{L_{\textup{word}}^4}$.
\end{lemma}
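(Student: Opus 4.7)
The plan is to handle the elementary operations case-by-case. The trivial operations are immediate: $U_{\text{swap}}$ is $L_{\text{word}}$ parallel SWAP gates (depth $1$, size $O(L_{\text{word}})$); $U_G$ acts on at most two qubits (depth and size $O(1)$); $U_\oplus$ is $L_{\text{word}}$ parallel CNOTs (depth $1$, size $O(L_{\text{word}})$); and $U_{\text{neg}}$ reduces to bitwise complement plus addition of $1$, so it inherits the complexity of $U_+$. For $U_+$ and $U_-$, I would use a parallel-prefix adder in the spirit of Ofman/Krapchenko, which gives a reversible in-place adder of depth $O(\log L_{\text{word}})$ and size $O(L_{\text{word}})$ (e.g., the Draper/Cuccaro--Draper--Kutin--Moulton style, or simply running the classical parallel-prefix adder and applying Bennett uncomputation); this comfortably fits the claimed bounds.

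For the remaining operations in $\calO\calP$ --- multiplication, division, trig, $\exp$, $\log$, $\arctan$, max, min, factorial --- I would invoke the classical parallel arithmetic literature. Ofman gave $O(\log L_{\text{word}})$-depth multiplication via carry-save trees; Reif and Beame--Cook--Hoover (BCH86) showed that division, square root, and all standard elementary/transcendental functions approximated to error $2^{-L_{\text{word}}}$ admit classical circuits of depth $O(\log^2 L_{\text{word}})$ and polynomial size (in our case fitting into $O(L_{\text{word}}^4)$). Factorial is handled by a product tree whose depth is $O(\log L_{\text{word}})$ multiplications, each of depth $O(\log L_{\text{word}})$, total $O(\log^2 L_{\text{word}})$. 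Each such irreversible classical circuit is converted into a reversible quantum circuit computing $U_{\mathit{op}}$ by the standard Bennett compute--copy--uncompute recipe: compute $f(y)$ into an ancilla register in place, XOR the result onto $r_1$, then run the circuit in reverse to clear the ancillas. This preserves both depth and size up to a constant factor, so it remains within $O(\log^2 L_{\text{word}})$ depth and $O(L_{\text{word}}^4)$ size.

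For the controlled versions $\bullet(r)\text{-}U$ and $\circ(r)\text{-}U$, I would first compute the predicate $[r \neq 0]$ into a single ancilla qubit via a balanced OR-tree of depth $O(\log L_{\text{word}})$, then condition each of the one- and two-qubit gates inside the construction of $U$ on that ancilla (a constant overhead per gate), and finally uncompute the predicate. Since the uncontrolled $U$ is already within the target depth and size, the controlled version remains at depth $O(\log^2 L_{\text{word}})$ and size $O(L_{\text{word}}^4)$.

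The main obstacle will be the transcendental and division cases: one has to carefully invoke the BCH86 construction (Newton iteration combined with parallel polynomial evaluation and iterated multiplication) to obtain a classical circuit of simultaneously depth $O(\log^2 L_{\text{word}})$ and size polynomial in $L_{\text{word}}$ that meets the $2^{-L_{\text{word}}}$ error specification, and then verify that Bennett's reversibilization does not push the size beyond $O(L_{\text{word}}^4)$. For the operations the paper actually requires in its examples (addition, subtraction, comparison, multiplication by small constants), a much more elementary argument would suffice, and it may be worth remarking in the proof that the $L_{\text{word}}^4$ size bound is loose for the arithmetic operations actually invoked by the compiler.
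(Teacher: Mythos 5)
Your proposal is correct and takes essentially the same route as the paper: trivial gates are handled directly, the irreversible and transcendental operations are delegated to the classical parallel-arithmetic circuits of Ofman, Reif and Beame--Cook--Hoover made reversible via Bennett's compute--copy--uncompute recipe (the paper packages exactly this as \Cref{lmm:parallel-q-circ-ele-ari}), and controlled versions are obtained by computing the control predicate into an ancilla qubit in depth $O\parens*{\log L_{\textup{word}}}$ and then controlling each gate. The only cosmetic deviations are that the paper implements $U_{+}$ out-of-place (compute $x+y$ into an ancilla word, swap, then uncompute via subtraction, reusing the irreversible-arithmetic primitive) rather than with an in-place parallel-prefix adder, and implements $U_{\textup{neg}}$ as a single $X$ on a sign bit in a sign--magnitude encoding rather than by two's complement; also note that the Cuccaro--Draper--Kutin--Moulton ripple-carry adder you mention is linear-depth, so for $U_{+}$ you must rely on your parallel-prefix (or carry-lookahead) alternative.
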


\begin{proof}
	Let us verify every elementary operation on registers in \Cref{def:ele-op-reg},
	one by one.
	For simplicity of presentation, we change the order of items in \Cref{def:ele-op-reg}.
	\begin{itemize}
		\item
			(Reversible versions of possibly irreversible arithmetic):
			According to \Cref{lmm:parallel-q-circ-ele-ari} to be presented in            \Cref{sec:parallel_quantum_circuits_for_elementary_arithmetics},
			elementary arithmetic in $\calO\calP$ can be 
			implemented by quantum circuits with the desired properties in \Cref{lmm:qcirc-comp-ele-op}.
		\item
			(Swap):
			The implementation of $U_{\textup{swap}}$ is trivially a single layer of $O\parens*{L_{\textup{word}}}$ parallel swap gates.
		\item
			(Unitary gate):
			The unitary gates in the fixed set $\calG$ only act on one or two qubits,
			and can be trivially implemented.
		\item 
			(Reversible elementary arithmetic):
			\begin{itemize}
				\item 
					The implementation of $U_{\oplus}$ is simply by a single layer of $O\parens*{L_{\textup{word}}}$ parallel CNOT gates.
				\item
					The unitary $U_{+}\parens*{r_1,r_2}$ can be implemented by 
					first performing the mapping
					\begin{equation}
						\label{eq:U_plus}
						\ket{x}_{r_1}\ket{y}_{r_2}\ket{z}_{a}\mapsto \ket{x}_{r_1} \ket{y}_{r_2}\ket{z\oplus\parens*{x+y}}_{a},
					\end{equation}
					with $a$ being an ancilla register,
					followed by $U_{\textup{swap}}\parens*{r_1,a}$,
					and finally performing the mapping
					\begin{equation}
						\label{eq:U_minus}
						\ket{x}_{r_1}\ket{y}_{r_2}\ket{z}_{a}\mapsto \ket{x}_{r_1}\ket{y}_{r_2}\ket{z\oplus \parens*{x-y}}_a.
					\end{equation}

					Note that \Cref{eq:U_plus,eq:U_minus} are elementary operations
					of type (Reversibly versions of possibly irreversible arithmetic) discussed above,
					and therefore can be implemented by a quantum circuit of depth $O\parens*{\log^2 L_{\textup{word}}}$
					and size $O\parens*{L_{\textup{word}}^4}$.
					We also have already shown that $U_{\textup{swap}}$ can be implemented by a quantum circuit
					of depth $O\parens*{1}$ and size $O\parens*{L_{\textup{word}}}$.
					Hence, $U_{+}$ can be implemented by a quantum circuit with the desired properties in \Cref{lmm:qcirc-comp-ele-op}.
				\item
					The implementation of $U_{-}$ is similar to that of $U_{+}$.
				\item
					For $U_{\textup{neg}}$, if we encode an integer by recording its sign in the first bit,
					then implementing $U_{\textup{neg}}\parens*{r}$ is trivially applying an $X$ gate on the first qubit of $r$.
			\end{itemize}
		\item
			(Controlled versions):
			Recall that we only have a constant number of registers.
			Suppose that an elementary operation $U$ can be implemented by a quantum circuit $Q$
			of depth $O\parens*{\log^2 L_{\textup{word}}}$ and size $O\parens*{L_{\textup{word}}^4}$.
			Then, the controlled version $\circ\parens*{r}$-$U$
			can be implemented by first
			applying $\circ\parens*{r}$-$X\parens*{a}$,
			with $a$ being an ancilla qubit,
			followed by the single-controlled $c\parens*{a}$-$U$,
			and finally $\circ\parens*{r}$-$X\parens*{a}$ again.
			Here, $\circ\parens*{r}$-$X\parens*{a}$ is actually a multi-controlled Pauli $X$ gate,
			which is known to be implementable by a quantum circuit of depth $O\parens*{\log L_{\textup{word}}}$
			and size $O\parens*{L_{\textup{word}}}$ (similar to, e.g., Corollary 2.5 in \cite{ZWY24}).
			The single-controlled $c\parens*{a}$-$U$
			can be implemented by replacing every gate in $Q$ by its single-controlled version,
			which gives a quantum circuit of depth $O\parens*{\log^2 L_{\textup{word}}}$ and size $O\parens*{L_{\textup{word}}^4}$.
			The implementation of $\bullet\parens*{r}$-$U$ is similar.

			Since we only have $O\parens*{1}$ registers,
			it is easy to see by induction that the controlled versions of any elementary operation
			can be implemented by a quantum circuit with the desired properties in \Cref{lmm:qcirc-comp-ele-op}.
	\end{itemize}
\end{proof}

The second lemma shows the quantum circuit complexity for implementing elementary QRAM accesses, which implies $T_{\textup{QRAM}}=O\parens*{\log N_{\textup{QRAM}}+\log L_{\textup{word}}}$ if we consider the parallel time complexity. 
Using quantum circuits to implement QRAM is actually a well studied topic, 
for which the readers are referred to \cite{JR23} for a detailed review.

\begin{lemma}[Quantum circuit complexity for elementary QRAM accesses]
	\label{lmm:qcirc-comp-ele-qram}
	The elementary QRAM accesses defined in \Cref{def:access_QRAM}
	can be implemented by quantum circuits of depth $O\parens*{\log N_{\textup{QRAM}}+\log L_{\textup{word}}}$
	and size $O\parens*{L_{\textup{word}}\cdot N_{\textup{QRAM}}\log N_{\textup{QRAM}}}$.
\end{lemma}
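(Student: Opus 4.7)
The plan is to build the circuits for $U_{\textup{fet}}$ and $U_{\textup{ld}}$ by combining a standard routing tree over the $N_{\textup{QRAM}}$ memory cells with a parallel bit-slice construction across the $L_{\textup{word}}$ qubits of each word, and then to reduce the swap-load to three XOR-fetches.

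First, I would recall the circuit-model version of the bucket-brigade QRAM (in the spirit of \cite{GLM08,HLGJ21}) for the single-bit case, i.e., when each memory cell stores a single qubit. There, $U_{\textup{fet}}$ is realised by a complete binary routing tree on $O(N_{\textup{QRAM}})$ ancillary qubits: conditioned on the $k$-th address qubit of $a$, the $k$-th level performs controlled routing from each parent to one of its two children, at the leaves the $N_{\textup{QRAM}}$ memory bits are XORed into the selected path, the resulting value is carried back to the root, XORed into $r$, and the tree is then uncomputed by running the routing in reverse. This yields depth $O(\log N_{\textup{QRAM}})$ and size $O(N_{\textup{QRAM}}\log N_{\textup{QRAM}})$ for a single-bit fetch, where the extra $\log N_{\textup{QRAM}}$ factor in the size accounts for making each address qubit available at the corresponding level of the tree.

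Second, I would lift this single-bit construction to words of length $L_{\textup{word}}$. Since all $L_{\textup{word}}$ bit-slices of a word share the same address $a$ and differ only in which column of $\mathit{mem}$ they touch, I would coherently fan out the $\log N_{\textup{QRAM}}$ address qubits into $L_{\textup{word}}$ copies using a CNOT fan-out tree of depth $O(\log L_{\textup{word}})$ and size $O(L_{\textup{word}}\log N_{\textup{QRAM}})$, run the $L_{\textup{word}}$ single-bit fetches fully in parallel, and then fan the copies back in to uncompute them. Summing the contributions gives total depth $O(\log N_{\textup{QRAM}}+\log L_{\textup{word}})$ and total size $O(L_{\textup{word}}\cdot N_{\textup{QRAM}}\log N_{\textup{QRAM}})$, exactly matching the lemma. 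The swap-load $U_{\textup{ld}}$ is then handled by the three-CNOT swap identity applied coherently between $r$ and the selected word: $U_{\textup{ld}}$ decomposes into three sequential $U_{\textup{fet}}$-style operations (fetch $M_i$ into $r$; fetch the contents of $r$ back into the memory word at address $i$, which is just the same routing skeleton with the XOR direction reversed at the leaves; then fetch $M_i$ into $r$ once more), and each step obeys the bounds just established.

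The main obstacle, and essentially the only nontrivial point, is ensuring that all routing ancillas are returned to $\ket{0}$ even when $a$ is in superposition and even when $\mathit{mem}$ is modified mid-sequence (as happens in the swap-load case). This is handled by the self-inverse character of the routing unitary: running it and then its adjoint around the leaf-level operation disentangles the internal ancillas from $(r,a,\mathit{mem})$, and in the swap-load reduction the intermediate fetch uses the adjoint routing symmetrically, so no residual entanglement survives. Together with the parallel bit-slice construction, this yields clean circuits with the advertised depth and size.
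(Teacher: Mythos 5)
Your proposal is correct and follows essentially the same route as the paper's proof: both rely on the existing single-bit circuit QRAM of depth $O\parens*{\log N_{\textup{QRAM}}}$ and size $O\parens*{N_{\textup{QRAM}}\log N_{\textup{QRAM}}}$, fan the address out into $L_{\textup{word}}$ coherent copies in depth $O\parens*{\log L_{\textup{word}}}$, run the $L_{\textup{word}}$ bit-slices in parallel, and then uncompute the fan-out. The only cosmetic difference is that you build $U_{\textup{fet}}$ first and obtain $U_{\textup{ld}}$ from three XOR-fetches, whereas the paper invokes the circuit-QRAM swap-load directly for each bit-slice; both yield the stated depth and size bounds.
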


\begin{proof}
	We first show how to implement $U_{\textup{ld}}$.
	Suppose that $N=N_{\textup{QRAM}}$ and $L=L_{\textup{word}}$.
	Existing circuit QRAM constructions (e.g., \cite{GLM08,HLGJ21}) show that
	a QRAM of $N$ qubits can be implemented by a quantum circuit
	of depth $O\parens*{\log N}$
	and size $O\parens*{N\log N}$.
	To adapt this to a QRAM of $N$ quantum words,
	we can use $L$ circuit QRAMs in parallel.

	Specifically, the implementation of $U_{\textup{ld}}\parens*{r,a,\mathit{mem}}$ is as follows.
	\begin{enumerate}
		\item 
			We first perform the unitary mapping
			\begin{equation}
				\label{eq:copy-L}
				\ket{x}_a\ket{y_1}_{a_1}\ldots\ket{y_{L-1}}_{a_{L-1}}
				\mapsto \ket{x}_a\ket{y_1\oplus x}_{a_1}\ldots \ket{y_{L-1}\oplus x}_{a_{L-1}},
			\end{equation}
			where each of $a_1,\ldots,a_{L-1}$ is composed of $L$ ancilla qubits
			and initialised to $\ket{0}$.
			Here, \Cref{eq:copy-L} prepares $L$ copies of the address in $a$ (in the computational basis),
			and can be implemented by a quantum circuit of depth $O\parens*{\log L}$
			and size $O\parens*{L^2}$ (see e.g., \cite{CW00}).
		\item
			Suppose that $\mathit{mem}_i$ is composed of the $i^{\textup{th}}$ qubit
			of every quantum word in our QRAM $\mathit{mem}$ (composed of quantum words).
			Let $r_i$ be the $i^{\textup{th}}$ qubit of the target register $r$,
			and let $a=a_0$.
			Then, we apply $U_{\textup{ld}}\parens*{r_i,a_i,\mathit{mem}_i}$ for $i=0$ to $L-1$ in parallel,
			each of which is implemented by a circuit QRAM composed of qubits, from previous works (e.g., \cite{GLM08,HLGJ21}).
			In total, this step can be implemented by a quantum circuit of depth $O\parens*{\log N}$
			and size $O\parens*{LN\log N}$.
		\item
			Finally, we perform the unitary mapping \Cref{eq:copy-L} again to clear the garbage data.
	\end{enumerate}
	It is easy to verify that the above implementation satisfies the properties in \Cref{lmm:qcirc-comp-ele-qram}.
\end{proof}

It turns out that $T_{\textup{reg}}$ and $T_{\textup{QRAM}}$
given by \Cref{lmm:qcirc-comp-ele-op,lmm:qcirc-comp-ele-qram}
are actually small enough compared to the overall parallel time complexity for implementing quantum recursive programs.  
In the next section, we will make this more concrete by analysing the example of the quantum multiplexor program.

\subsection{Proof of \texorpdfstring{\Cref{thm:parallel-qmux}}{Theorem 1}}
\label{sub:proof-main-theorem}

Now let us analyse the complexity for implementing the quantum multiplexor program on quantum register machine,
and prove \Cref{thm:parallel-qmux} (whose proof sketch is already presented in \Cref{sec:computational_efficiency_and_algorithmic_speed_up}). 
Recall that \Cref{thm:parallel-qmux} shows that 
we can obtain an exponential parallel speed-up (over the straightforward implementation) for implementing the quantum multiplexor. 

\begin{proof}[Proof of \Cref{thm:parallel-qmux}]
    To determine the parallel time complexity, 
    i.e., quantum circuit depth for implementing the quantum multiplexor program (see \Cref{fig:q-multiplexor})
    on quantum register machine,
    it suffices to determine the terms $T_{\textup{exe}}(\calP)$,
    $T_{\textup{reg}}$ and $T_{\textup{QRAM}}$ in \Cref{eq:overall-time-complexity},
    where $\calP$ denotes the compiled program in \Cref{fig:qmuxlow-full}.
    \begin{itemize}
        \item 
            To determine $T_{\textup{exe}}(\calP)$,
            let us analyse the partial evaluation (described by \Cref{alg:partial-evaluation})
            of $\calP$. 
            Note that Lines 1--424 of $\calP$ are independent of $n$,
            in the sense that no matter how large $n$ is, this part of the program text is fixed.
            
            Before branching to each procedure $Q[x]$ (that describes unitary $U_x$ in the quantum multiplexor \Cref{eq:qmux-uni}) for $x\in [N]$, 
            the value of the counter $t$ is only determined by the number of nesting layers of $\mathbf{qif}\ldots\mathbf{fiq}$ instantiations, up to a constant factor.
            As the qif table of the quantum multiplexor program is a tree, as shown in \Cref{fig:qmux-qif-table},
            the number of nesting layers is exactly the depth of the tree, which is equal to $n$.

            After branching to procedures $Q[x]$ for $x\in [N]$,
            the value of the counter $t$ is determined by the time complexity of executing every $Q[x]$. 
            Recall that \Cref{thm:parallel-qmux} assumes each $U_x$ is composed of $T_x$ elementary unitary gates, i.e.,
            the procedure body $C_x$ of $Q[x]$ is composed of $T_x$ unitary gate instructions.
            So, the number of cycles to execute every $C_x$ is $O(T_x)$. As \Cref{alg:partial-evaluation} is run by parallel processes,
            each going into a single quantum branch, 
            the overall parallel running time is $O\parens*{\max_{x\in [N]} T_x}$.
            
            Combining the above together, 
            we have 
            \begin{equation*}
                T_{\textup{exe}}(\calP)=O\parens*{\max_{x\in [N]}T_x+n}.
            \end{equation*}
        \item 
            To determine $T_{\textup{reg}}$ and $T_{\textup{QRAM}}$,
            it suffices to determine the word length $L_{\textup{word}}$ and the QRAM size $N_{\textup{QRAM}}$ for executing the compiled program $\calP$. Let us first determine how large $N_{\textup{QRAM}}$ is needed, by calculating the sizes (counted by the number of quantum words) of all sections in the QRAM, as follows. 
            \begin{itemize}
                \item 
                    Program section:
                    The compiled program $\calP$ is already presented in \Cref{fig:qmuxlow-full},
                    which consists of $\Theta\parens*{\sum_{x\in [N]} T_x}$ instructions. 
                \item 
                    Symbol table section:
                    The size of the symbol table is upper bounded by the number of variables.
                    By our assumption, every $C_x$ is composed of unitary gate instructions, 
                    so the number of variables is further upper bounded by $\Theta\parens*{\sum_{x\in [N]} T_x}$, which is the number of instructions in $\calP$. 
                \item 
                    Variable section:
                    The size of the variable section is upper bounded by
                    $\Theta\parens*{\sum_{x\in [N]} T_x}$, as mentioned above. 
                \item 
                    Qif table section:
                    The size of the qif table is easily seen to be $\Theta\parens*{2^{n}}$, 
                    by noting that the qif table of $\calP$ is a simple tree shown in \Cref{fig:qmux-qif-table}.
                \item 
                    Stack section:
                    The size of the stack is upper bounded by the number of variables and the number of nesting layers of $\mathbf{qif}\ldots\mathbf{fiq}$ instantiations, up to a constant factor. 
                    The former is upper bounded by $\Theta\parens*{\sum_{x\in [N]} T_x}$, and the latter is upper bounded by $\Theta(n)$.
                    Hence, the overall size is upper bounded by $\Theta\parens*{\sum_{x\in [N]} T_x}$.
            \end{itemize}
            To summarise, taking $N_{\textup{QRAM}}=\Theta\parens*{\sum_{x\in [N]}T_x}$
            is sufficient to implement $\calP$.
            Now for the word length $L_{\textup{word}}$,
            we only need it to be large enough to store an address of a memory location in the QRAM, as required by the formats of instructions in $\mathbf{QINS}$ (see \Cref{fig:table-type-eg}).
            So, taking $L_{\textup{word}}=O\parens*{\log\parens*{N_{\textup{QRAM}}}}$ is enough.
            Finally, by \Cref{lmm:qcirc-comp-ele-op,lmm:qcirc-comp-ele-qram}, we can calculate 
            \begin{align*}
    		T_{\textup{reg}}&=O\parens*{\log^2\log N_{\textup{QRAM}}}= O\parens*{\log^2 T_{\textup{QRAM}}},\\
    		T_{\textup{QRAM}}&=O\parens*{\log N_{\textup{QRAM}}}= O\parens*{n+\log\parens*{\max_{x\in [N]}T_x}}.
	    \end{align*}
    \end{itemize}
    By inserting the above $T_{\textup{exe}}(\calP)$, $T_{\textup{reg}}$
    and $T_{\textup{QRAM}}$ into \Cref{eq:overall-time-complexity},
    we obtain the overall parallel time complexity 
    $\widetilde{O}\parens*{n\cdot \max_{x\in [N]} T_x+n^2}$, as desired.
\end{proof}

\section{Parallel Quantum Circuits for Elementary Arithmetic}
\label{sec:parallel_quantum_circuits_for_elementary_arithmetics}

In this appendix,
we restate Lemma 2.6 in~\cite{ZWY24},
which translates several classical results in parallel computing~\cite{Ofman63,Reif83,BCH86} into the quantum setting:
elementary arithmetic functions can be efficiently approximated by classical circuits of low depth.
The translation is by introducing garbage data as in standard reversible computing~\cite{Landauer61,Bennett73}.
Although these classical results are more refined,
the following lemma only provides simple upper bounds for simplicity.

\begin{lemma}[Parallel quantum circuits for elementary arithmetics, restating Lemma 2.6 in~\cite{ZWY24}]
	\label{lmm:parallel-q-circ-ele-ari}
	Suppose $f$ is one of the following elementary arithmetic functions:
	addition, subtraction, multiplication, division, modulo, cosine,
	sine, arctangent, exponentiation, logarithm, maximum, minimum, factorial.
	Then, the unitary 
	\begin{equation*}
		\sum_{\widetilde{x},\widetilde{y},z=0}^{2^{L}-1} 
		\ket{\widetilde{x},\widetilde{y}}\!\bra{\widetilde{x},\widetilde{y}}\otimes 
		\ket{z\oplus \widetilde{f}\parens*{\widetilde{x},\widetilde{y}}}\!\bra{z}
	\end{equation*}
	can be implemented by a quantum circuit of depth $O\parens*{\log^2 L}$
	and size $O\parens*{L^4}$,
	where $\widetilde{x},\widetilde{y}$ are certain proper representations of $x,y$,
	and $\widetilde{f}\parens*{\widetilde{x},\widetilde{y}}$ is an approximation of $f\parens*{x,y}$
	with error $2^{-L}$.
	Here, for unary $f$, the input $y$ is omitted.
	Further, if $f$ is addition, subtraction or multiplication (modulo $2^L$),
	the depth can be $O\parens*{\log L}$
	and the error can be $0$.
\end{lemma}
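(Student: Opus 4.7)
The plan is to derive this lemma by lifting known classical parallel-circuit constructions for elementary arithmetic into the quantum setting via standard Bennett--Landauer uncomputation. Since the statement is a verbatim restatement of \cite[Lemma 2.6]{ZWY24}, the proof is really an assemblage of citations; my job is to verify that each of the listed functions admits a classical Boolean circuit meeting the claimed depth/size bounds, then argue that the reversible-quantum conversion preserves these bounds up to constant factors.

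First I would handle the three exact cases: addition, subtraction, and multiplication modulo $2^L$. For addition, the Ofman carry-lookahead adder~\cite{Ofman63} gives a depth-$O(\log L)$, size-$O(L)$ Boolean circuit; subtraction follows by two's-complement negation. For multiplication modulo $2^L$ I would cite the Wallace/Ofman $3$:$2$ compressor tree reducing $L$ partial products to two summands in depth $O(\log L)$ and size $O(L^2)$, followed by one carry-lookahead addition. To make each into a reversible quantum circuit computing $\ket{\widetilde{x},\widetilde{y}}\ket{z}\mapsto\ket{\widetilde{x},\widetilde{y}}\ket{z\oplus \widetilde{f}(\widetilde{x},\widetilde{y})}$, I compute $f$ into fresh ancillas, XOR into the output register, and run the whole Boolean computation backwards to disentangle the ancillas; this costs a factor of two in depth and size and preserves the bounds, giving depth $O(\log L)$ and size $O(L^2)$ with zero error.

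Next I would handle the remaining (approximate) operations. Division, modulo, exponentiation, logarithm, sine, cosine, arctangent admit classical parallel circuits of depth $O(\log^2 L)$ and size $\mathrm{poly}(L)$ via the Beame--Cook--Hoover construction~\cite{BCH86} and Reif's methods~\cite{Reif83}, typically by reducing to iterated multiplication followed by Taylor-series / Newton-iteration truncation to $L$ bits of precision (yielding error $2^{-L}$). Factorial $n!$ on an $L$-bit input can be written as a balanced product tree of $2^L$ terms, each a log-depth multiplication, giving depth $O(\log^2 L)$ after appropriate chopping; maximum and minimum are a bitwise comparison followed by a multiplexer, both in depth $O(\log L)$. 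For all of these, I will loosen the cited bounds to a uniform $O(\log^2 L)$ depth and $O(L^4)$ size (the lemma does not claim optimality). The same Bennett-style uncomputation converts every such classical circuit to a reversible quantum circuit of the same asymptotic depth and size.

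The mildly delicate point will be the treatment of the representation $\widetilde{x},\widetilde{y}$ and the approximation $\widetilde{f}$: the transcendental operations require argument reduction (e.g., range-reduce for $\sin,\cos$ via subtracting multiples of $\pi/2$, or reducing $\exp$ via extracting the integer part) before a Taylor/CORDIC-style evaluation, and I have to confirm that argument reduction itself runs in depth $O(\log^2 L)$ and size $\mathrm{poly}(L)$; this follows because argument reduction is a constant number of iterated multiplications and additions, again available from \cite{BCH86}. Given that all ingredients are already in the parallel-arithmetic literature and reversibilization is mechanical, the overall argument is a short citation-based verification, with the only real obstacle being bookkeeping the precision propagation so the final error is indeed $\leq 2^{-L}$, which is standard and can be absorbed by evaluating each intermediate quantity to $L+O(\log L)$ bits.
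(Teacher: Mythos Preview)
Your proposal is correct and matches the paper's approach. The paper does not give a detailed proof of this lemma at all: it is explicitly a restatement of \cite[Lemma~2.6]{ZWY24}, with the surrounding text noting only that the result ``translates several classical results in parallel computing~\cite{Ofman63,Reif83,BCH86} into the quantum setting'' via ``introducing garbage data as in standard reversible computing~\cite{Landauer61,Bennett73}.'' Your sketch is precisely a fleshed-out version of that one-sentence justification, invoking the same three classical references plus Bennett--Landauer reversibilization, so there is nothing to compare beyond level of detail.
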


It is worth mentioning the quantum circuit in \Cref{lmm:parallel-q-circ-ele-ari}
can use ancilla qubits.
The number of ancilla qubits can be trivially upper bounded by the size $O\parens*{L^4}$ of the quantum circuit.

\section{More Examples of Quantum Recursive Programs}
\label{sec:more_examples}

In this appendix, we provide more examples of quantum recursive programs for further illustrating the compilation process.
For simplicity of presentation, 
we will only show the high-level transformations and the high-to-mid-level translation of these examples,
while omitting the rather lengthy mid-to-low-level translation.
The high-level programs for these examples are from~\cite{YZ24}.

The first example is the generation of the Greenberger-Horne-Zeilinger (GHZ) state~\cite{GHZ89}.
This simple example involves recursive procedure calls but no $\mathbf{qif}$ statements. 

\begin{example}[Generation of GHZ states]
    The (generalised) Greenberger-Horne-Zeilinger (GHZ) state~\cite{GHZ89} of $n$ qubits is defined as:
    \begin{equation*}
        \ket{\textup{GHZ}(n)}=\frac{1}{\sqrt{2}}\parens*{\ket{0}^{\otimes n}+\ket{1}^{\otimes n}}.
    \end{equation*}
    We can use the following program in $\mathbf{RQC}^{++}$ to generate the GHZ state~\cite{YZ24}:
    \begin{equation}
        \label{eq:GHZ}
        \begin{split}
            P_{\textup{main}}(n)\ \Leftarrow\ & \mathbf{if}\ n=1\\
            & \quad \mathbf{then}\ H\bracks*{q[n]}\\
            &\quad \mathbf{else}\ P_{\textup{main}}(n-1); \mathit{CNOT}\bracks*{q[n-1],q[n]}\\
            & \mathbf{fi}.
        \end{split}
    \end{equation}

    The GHZ state generation program after the high-level transformations is presented in \Cref{fig:GHZ-after-high}, of which the high-to-mid-level translation is further shown in \Cref{fig:GHZ-full-mid}. As usual, we have done some manual optimisation to improve the presentation.

    \begin{figure}
        \centering
        \begin{equation*}
            \begin{split}
                P_{\textup{main}}(n)\ \Leftarrow \ & n_0:=n-1;\\
                & \mathbf{if}\ n_0\ \mathbf{then}\\
                & \qquad P_{\textup{main}}(n_0);\\
                & \qquad \mathit{CNOT}\bracks*{q[n_0],q[n]}\\
                & \quad\mathbf{else}\ H[q[n]]\\
                &\mathbf{fi}
            \end{split}
        \end{equation*}
        \caption{The GHZ state generation program after the high-level transformations.}
        \label{fig:GHZ-after-high}
    \end{figure}

    \begin{figure}
        \centering
        \includegraphics[width=0.95\linewidth]{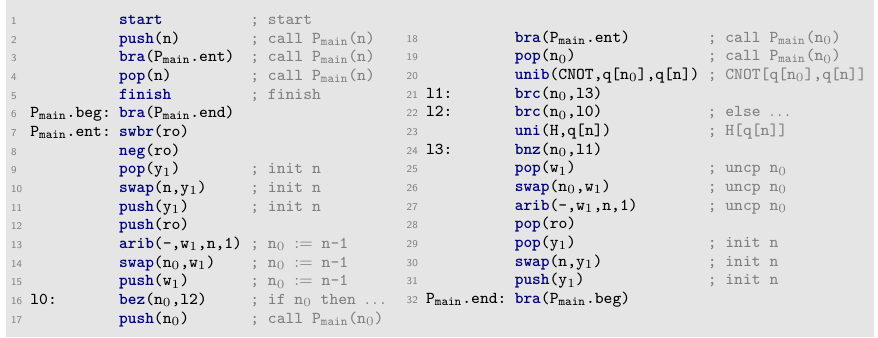}
        \caption{Full high-to-mid-level translation of the GHZ state generation program.}
        \label{fig:GHZ-full-mid}
    \end{figure}
\end{example}

The second example is the multi-controlled gate. This example is still simple,
but illustrates both $\mathbf{qif}$ statements and recursive procedure calls. 

\begin{example}[Multi-controlled Gate]
    Let $U$ be an elementary unitary gate.
    Then the multi-controlled $U$ gate with $n-1$ control qubits is defined by:
    \begin{equation*}
        C^{(*)}(U)\ket{i_1,\ldots,i_{n-1}}\ket{\psi}
        =\begin{cases}
            \ket{i_1,\ldots,i_{n-1}} U \ket{\psi}, & i_m=\ldots= i_{n-1}=1,\\
            \ket{i_1,\ldots,i_{n-1}} \ket{\psi}, & o.w.
        \end{cases}
    \end{equation*}
    for all $i_1,\ldots,i_{n-1}\in \braces*{0,1}$ and quantum state $\ket{\psi}$.
    We can describe $C^{(*)}(U)$ by the following program in $\mathbf{RQC}^{++}$~\cite{YZ24}:
    \begin{equation}
        \label{eq:controlled-gate}
		\begin{split}
            P_{\textup{main}}(n)\ \Leftarrow\ & P(1,n)\\
            P(m,n)\ \Leftarrow\ &\mathbf{if}\ m=n\\
                                &\ \ \ \mathbf{then}\ U[q[n]]\\
                                &\ \ \ \mathbf{else}\ \mathbf{qif}[q[m]]\ket{0}\rightarrow\mathbf{skip}\\ 
                                &\quad\ \qquad\qquad\square\ \ \ \ket{1}\rightarrow P\parens*{m+1,n}\\ 
                                &\qquad\quad\ \mathbf{fiq}\\ 
                                &\mathbf{fi}.
		\end{split}
	\end{equation}

    In \Cref{fig:multi-controlled-gate-after-high},
    we present the multi-controlled gate program after applying the high-level transformations to \Cref{eq:controlled-gate}.
    Further, the full high-to-mid-level translation of the program is shown in \Cref{fig:multi-controlled-gate-full-mid}, which is annotated to show the correspondence with the program in \Cref{fig:multi-controlled-gate-after-high}.
    
    \begin{figure}
        \centering
        \begin{equation*}
            \begin{split}
                P_{\textup{main}}(n)\ \Leftarrow\ & y:=1; P(y,n)\\
                P(m,n)\ \Leftarrow\ & x:=n-m;\\
                                                &\mathbf{if}\ x\ \mathbf{then}\\ 
                                                &\qquad m_0:=m+1;\\
                                                &\qquad\mathbf{qif}[q[m]]\ket{0}\rightarrow\mathbf{skip}\\ 
                                		      &\qquad\qquad\square\ \ \ \ket{1}\rightarrow P\parens*{m_0,n}\\   
                                                &\qquad\mathbf{fiq}\\ 
                                                &\quad\mathbf{else}\ U[q[n]]\\
                                		      &\mathbf{fi}
            \end{split}
        \end{equation*}
        \caption{The multi-controlled gate program after the high-level transformations.}
        \label{fig:multi-controlled-gate-after-high}
    \end{figure}

    \begin{figure}
        \centering
        \includegraphics[width=0.9\linewidth]{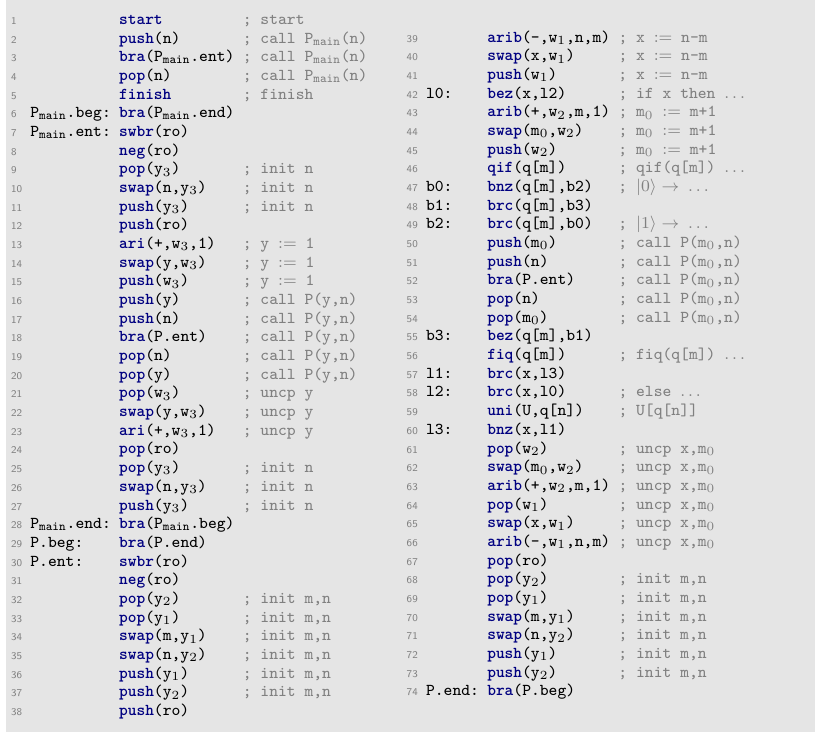}
        \caption{Full high-to-mid-level translation of the multi-controlled gate program.}
        \label{fig:multi-controlled-gate-full-mid}
    \end{figure}
\end{example}

The third example is the quantum state preparation, which is a common subroutine in many quantum algorithms, e.g., Hamiltonian simulation~\cite{BCCKS15,BCK15,Kothari14}, quantum machine learning~\cite{LMR14,KP17}, and solving quantum linear system of equations~\cite{HHL09,CKS17}. 
This example involves both $\mathbf{qif}$ statements and recursive procedure calls.
It also offers automatic parallelisation,
resulting in a quantum circuit of poly-logarithmic depth for implementing the quantum state preparation (of which the detailed analysis is omitted),
although whose complexity is slightly worse than the state-of-the-art algorithms~\cite{Rosenthal23,STYYZ23,ZLY22,YZ23}.

\begin{example}[Quantum State Preparation]
    The task of quantum state preparation is to generate the $n$-qubit quantum state
    \begin{equation}
        \ket{\psi} = \sum_{j\in [N]} \alpha_j \ket{j},
    \end{equation}
    where $N=2^n$ and $\alpha_j\in \Co$ satisfy $\sum_{j\in [N]}\abs*{\alpha_j}^2=1$.
    
    For $j\in [N]$, let us define $\theta_j\in [0,2\pi)$ such that $\alpha_j=e^{i\theta_j} \abs*{\alpha_j}$. For $l,r\in [N]$, let $S_{l,r}=\sum_{j=l}^r \abs*{\alpha_j}^2$.
    We define a single qubit unitary $U_{k,x}$ such that:
    \begin{equation*}
        U_{k,x}\ket{0}=\sqrt{\gamma_x}\ket{0} + e^{i\beta_x}\sqrt{1-\gamma_x}\ket{1}, 
    \end{equation*}
    where $\gamma_x=\frac{S_{u,w-1}}{S_{u,v-1}}$, $\beta_x=\theta_w-\theta_u$, $u=2^{n-k}x$, $v=2^{n-k}(x+1)$ and $w=(u+v)/2$.
    
    Now we can use the following program in $\mathbf{RQC}^{++}$ to generate the state $\ket{\psi}$~\cite{YZ24}:
    \begin{equation}
        \label{eq:QSP}
        \begin{split}
            P_{\textup{main}}(n)\ \Leftarrow \ & P(0,n,0)\\
            P(k,n,x)\ \Leftarrow\ & \mathbf{if}\ k\neq n\ \mathbf{then} \\
            &\quad Q(k,x);\\
            &\quad \mathbf{qif}\bracks*{q\bracks*{k}} \ket{0}\rightarrow P\parens*{k+1,n,2x}\\
            &\quad\qquad\square\ \ \ket{1}\rightarrow P\parens*{k+1,n,2x+1}\\
            &\quad\mathbf{fiq}\\
            &\mathbf{fi}\\
            Q(k,x)\ \Leftarrow \ & C,
        \end{split}
    \end{equation}
    where $C$ is a quantum circuit that performs $U_{k,x}[q[k+1]]$. 
    (In practice, when the elementary gate set is simple, e.g., $\braces*{H,S,T,\mathit{CNOT}}$, what $C$ performs is only an approximation of $U_{k,x}$, and $C$ depends on the explicit $\alpha_j$ (for $j\in [N]$).) 

    In \Cref{fig:QSP-after-high},
    we show the quantum state preparation program after applying the high-level transformations to \Cref{eq:QSP}.
    The full high-to-mid-level translation of the program is then presented in \Cref{fig:QSP-full-mid},
    where the annotation shows the correspondence with the program in \Cref{fig:QSP-after-high}.

    \begin{figure}
        \centering
        \begin{equation*}
            \begin{split}
                P_{\textup{main}}(n)\ \Leftarrow \ & z:=0; P(z,n,z)\\
                P(k,n,x)\ \Leftarrow\ & y:=n-k;\\
                & \mathbf{if}\ y\ \mathbf{then}\\
                &\quad Q(k,x);\\
                &\quad k_0:=k+1;\\
                &\quad x_0:=2x;\\
                &\quad x_1:=x_0+1;\\
                &\quad\mathbf{qif}\bracks*{q\bracks*{k}} \ket{0}\rightarrow P\parens*{k_0,n,x_0}\\
                &\quad\qquad\square\ \ \ket{1}\rightarrow P\parens*{k_0,n,x_1}\\
                &\quad\mathbf{fiq}\\
                &\mathbf{fi}\\
                Q(k,x)\ \Leftarrow \ & C,
            \end{split}
        \end{equation*}
        \caption{The quantum state preparation program after the high-level transformations.}
        \label{fig:QSP-after-high}
    \end{figure}

     \begin{figure}
        \centering
        \includegraphics[width=0.9\linewidth]{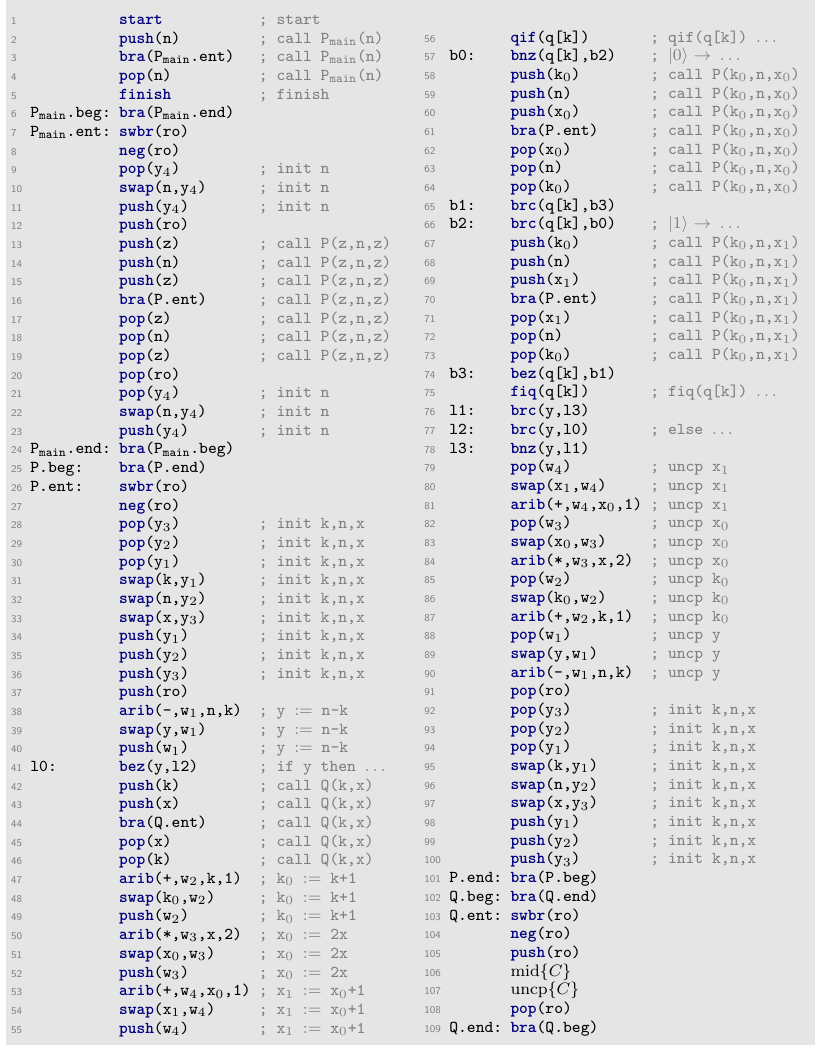}
        \caption{Full high-to-mid-level translation of the quantum state preparation program.}
        \label{fig:QSP-full-mid}
    \end{figure}
\end{example}

\section{Further Discussions on Related Work}
\label{sec:further_related_work}

\subsection{The Synchronisation Problem}
\label{sub:further_synchronisation_problem}

In \Cref{sec:partial-evaluation}, we have described the partial evaluation of quantum control flow, which is used to address (within a practical running time) the synchronisation problem (see \Cref{sub:the_synchronisation_problem}) for executing programs with quantum control flow. In this appendix, we discuss some related works to the synchronisation problem.

The synchronisation problem dates back to Deutsch's definition of quantum Turing machine~\cite{Deutsch85}.
In particular, Myers~\cite{Myers97} pointed out a problem in Deutsch's original definition: for a universal quantum Turing machine that allows inputs being in quantum superposition,
different quantum branches are asynchronous and can have different termination times,
where some branch might even never halt.
The consequence is that one cannot check the halting of quantum Turing machine (according to Deutsch's definition) by directly measuring a flag qubit because the measurement can destroy the superposition of different quantum branches. 

A later definition of quantum Turing machine~\cite{BV93} given by Bernstein and Vazirani circumvented this problem by an explicit halting scheme. It requires all quantum branches to be synchronised:
if any quantum branch terminates the computation at some time $T$, then all other branches also terminate at $T$.
There is also a number of subsequent discussions~\cite{Ozawa98,LP98,Ozawa98b,Shi02,MO05,WY23} about the synchronisation problem and the halting schemes of quantum Turing machine.
Recently, in \cite{YVC24} they studied the synchronisation problem in general transition systems and formalised synchronisation as a condition.

Addressing (or circumventing) the synchronisation problem is often by restricting the inputs to be a specific subset such that different quantum branches always synchronise. 
For example, in~\cite{BV93}, the synchronisation condition is explicitly  stated in their definition of quantum Turing machine.
In~\cite{WY23}, they extend the definition of quantum Turing machine in~\cite{BV93}, and show a conversion of a quantum Turing machine from the extended definition to the standard one by inserting meaningless symbols to synchronise different quantum branches. 
In~\cite{YVC24}, they impose structures on the program in their low-level language
by borrowing the techniques of paired branch instructions from the classical reversible languages~\cite{Vieri99,Frank99,AGY07,TAG12}, 
and manually insert nop (no operation) into the low-level programs to synchronise different quantum branches.

To address the synchronisation problem, in our implementation of quantum recursive programs,
we exploit the structures of the programs imposed by the high-level language $\mathbf{RQC}^{++}$ (in particular, the $\mathbf{qif}$ statement for managing quantum control flow; see \Cref{sec:background}),
and our compilation promises that the compiled low-level programs inherit the structures (via paired instructions \hlt{qif} and \hlt{fiq} in $\mathbf{QINS}$; see \Cref{sub:the_low_level_language_qins}). 
Then, we use the partial evaluation of quantum control flow to generate the qif table (see \Cref{sec:partial-evaluation}), a data structure that records the history information of quantum branching within a practical running time, which is used later (in quantum superposition) to synchronise different quantum branches at runtime (see \Cref{sub:unitaries_for_executing_qif_instructions}).
Note that our whole implementation is automatic. The technique of manually inserting nop (like in previous works~\cite{WY23,YVC24}) is not extendable to handle our case of quantum recursive programs, because the length of (dynamic) computation generated by quantum recursion cannot be pre-determined from the (static) program text.  

\subsection{Implementation of Quantum Multiplexor}
\label{sub:further_qmux}

In \Cref{thm:parallel-qmux}, we have shown that via the quantum register machine, we can automatically obtain a parallel implementation of the quantum multiplexor. In this appendix, we further compare it with previous efficient implementations in \cite{ZLY22,ZY24}. 

Recall that a quantum multiplexor can be described by the unitary
\begin{equation*}
    U=\sum_{x\in [N]} \ket{x}\!\bra{x}\otimes U_x,
\end{equation*}
where $N=2^n$ and $n$ is the number of control qubits.
Let us assume all $U_x$ act on the last $m$ data qubits. 
For general $U_x$, it is shown in \cite[Algorithm 4,5]{ZLY22} and \cite[Lemma 7]{ZY24} that $U$ can be implemented by a quantum circuit of depth $O\parens*{n+\max_{x\in [N]}T_x}$ and size $O\parens*{\sum_{x\in [N]}T_x}$, where each $T_x$ is the time for implementing controlled-$U_x$.
The idea of their construction is similar to that of the bucket-brigade QRAM~\cite{GLM08,GLM08b,HZZCSGJ19,HLGJ21}.

In particular, the bucket-brigade QRAM uses a binary tree structure to route the address qubits (see \Cref{sub:quantum_random_access_memory}) to the corresponding memory location.
The construction in \cite{ZLY22,ZY24} uses a similar structure to route the $n$ control qubits and a single data qubit to the location specified by the control qubits. Such structure is repeated $m$ times in parallel to route the total $m$ data qubits, and the first $n$ control qubits also needs to be copied (in the computational basis) $m$ times in parallel. At each location $x\in [N]$, the quantum circuit for implementing $U_x$ is placed.
Finally, a reverse routing processes is applied to retrieve the control and data qubits.
In this way, if the $n$ control qubits are in superposition, then the $m$ data qubits can travel through a superposition of paths to pass through unitaries $U_x$ and therefore realise the quantum multiplexor $U$.

Unlike in~\cite{ZLY22,ZY24} where data flows in quantum superposition,
in our implementation of the quantum multiplexor, 
the $m$ data qubits do not move.
The control structure in $U$ is completely  captured by the quantum multiplexor program $\calP$ in \Cref{fig:q-multiplexor},
and the quantum register machine essentially realises the quantum control flow: it executes all quantum programs $C_x$ (that describe unitaries $U_x$) in quantum superposition.
%
Indeed, the quantum register machine also exploits the circuit QRAM~\cite{GLM08,GLM08b,HZZCSGJ19,HLGJ21} (see also \Cref{sub:quantum_circuit_complexity_for_elementary_operations}) at the low-level,
to which the high-level programmer is oblivious.

From the perspective of design (as aforementioned in \Cref{sub:motivating_example,sub:bonus}),
compared to the manual design of the rather involved quantum circuits in~\cite{ZLY22,ZY24},
our \Cref{thm:parallel-qmux} is automatically obtained from our implementation of quantum recursive programs. In our framework, the programmer only needs to design at high-level (in particular, the quantum multiplexor program in~\Cref{fig:q-multiplexor}), and needs not to know the explicit construction of QRAM.
Not only restricted to the quantum multiplexor, the automatic parallelisation from our framework also works for general quantum recursive programs.

\end{document}